\date{November 18, 2019}
\newtheorem{Thm}{Theorem}
\newtheorem{Cor}[Thm]{Corollary}
\newtheorem{Prop}[Thm]{Proposition}
\newtheorem{Lem}[Thm]{Lemma}
\theoremstyle{definition}
\theoremstyle{remark}
\newtheorem{Rem}[Thm]{Remark}
\def\td{\mathrm{d}}
\def\cz{\mathbb{C}}
\def\rz{\mathbb{R}}
\def\bs{\mathbb{S}}
\def\cd{\mathcal{D}}
\def\cE{\mathcal{E}}
\def\ci{\mathcal{I}}
\def\dk{\mathrm{d}k}
\def\dr{\mathrm{d}r}
\def\dt{\mathrm{d}t}
\def\dx{\mathrm{d}x}
\def\dy{\mathrm{d}y}
\def\domega{d\omega}
\def\gtf{\gamma_\mathrm{TF}}
\DeclareMathOperator{\ttr}{Tr}
\def\const{\mathrm{const}\,}
\newcommand{\ck}{\mathcal{K}}
\renewcommand{\epsilon}{\varepsilon}
\newcommand{\N}{\mathbb{N}}
\renewcommand{\phi}{\varphi}
\newcommand{\R}{\mathbb{R}}
\newcommand{\Sph}{\mathbb{S}}
\DeclareMathOperator{\ran}{ran}
\DeclareMathOperator{\spec}{spec}
\DeclareMathOperator{\Tr}{Tr}
\DeclareMathOperator{\tr}{Tr}
\begin{document}

\renewcommand{\dfrac}[2]{\frac{d#1}{d#2}}
\newcommand{\pfrac}[2]{\frac{\partial #1}{\partial #2}}
\newcommand{\me}[1]{\mathrm{e}^{#1}}
\newcommand{\one}{\mathbf{1}}

\title[Strong Scott Conjecture]{Proof of the Strong Scott Conjecture\\ for
  Chandrasekhar Atoms}

\dedicatory{Dedicated to Yakov Sinai on the occasion of his 85th birthday.}
 
\author{Rupert L. Frank}
\address[Rupert L. Frank]{Mathematisches Institut, Ludwig-Maximilans Universit\"at M\"unchen, Theresienstr. 39, 80333 M\"unchen, Germany, and Munich Center for Quantum Science and Technology (MCQST), Schellingstr. 4, 80799 M\"unchen, Germany, and Mathematics 253-37, Caltech, Pasa\-de\-na, CA 91125, USA}
\email{rlfrank@caltech.edu}

\author{Konstantin Merz}
\address[Konstantin Merz]{Mathematisches Institut, Ludwig-Maximilans Universit\"at M\"unchen, Theresienstr. 39, 80333 M\"unchen, Germany, and Munich Center for Quantum Science and Technology (MCQST), Schellingstr. 4, 80799 M\"unchen, Germany}
\email{merz@math.lmu.de}

\author{Heinz Siedentop}
\address[Heinz Siedentop]{Mathematisches Institut, Ludwig-Maximilans Universit\"at M\"unchen, Theresienstr. 39, 80333 M\"unchen, Germany, and Munich Center for Quantum Science and Technology (MCQST), Schellingstr. 4, 80799 M\"unchen, Germany}
\email{h.s@lmu.de}

\author[Barry Simon]{Barry Simon}
\address[Barry Simon]{Mathematics 253-37, Caltech, Pasa\-de\-na, CA 91125, USA}
\email{bsimon@caltech.edu}

\begin{abstract}
  We consider a large neutral atom of atomic number $Z$, taking
  relativistic effects into account by assuming the dispersion
  relation $\sqrt{c^2p^2+c^4}$.
  We study the behavior of the one-particle ground state density on
  the length scale $Z^{-1}$ in the limit $Z,c\to\infty$ keeping
  $Z/c$ fixed and find that the spherically averaged density as well as all individual
  angular momentum densities separately converge to the relativistic
  hydrogenic ones.
  This proves the generalization of the strong Scott conjecture for
  relativistic atoms and shows, in particular, that relativistic
  effects occur close to the nucleus. Along the way we prove upper bounds on the relativistic hydrogenic density.
\end{abstract}

\maketitle
\section{Introduction\label{s1}}
\subsection{Some results on large $Z$-atoms\label{ss1.1}}

The asymptotic behavior of the ground state energy and the ground state density of atoms with large atomic number $Z$ have been studied in detail in
non-relativistic quantum mechanics.

Soon after the advent of quantum mechanics it became clear that the
non-relativistic quantum multi-particle problem is not analytically
solvable and of increasing challenge with large particle number. This
problem was addressed by Thomas \cite{Thomas1927} and Fermi
\cite{Fermi1927,Fermi1928} by developing what was called the
statistical model of the atom. The model is described by the so-called
Thomas--Fermi functional (Lenz \cite{Lenz1932})
\begin{equation}
  \cE_Z^{\mathrm{TF}}(\rho) := \int_{\rz^3}\left(\tfrac{3}{10}\gtf\rho(x)^{5/3}-\frac{Z}{|x|}\rho(x)\right)\dx+\underbrace{\frac12\iint_{\rz^3\times\rz^3}\frac{\rho(x)\rho(y)}{|x-y|}\,\dx\,\dy}_{=:D[\rho]} \,,
\end{equation}
where $\gtf=(6\pi^2/q)^{2/3}$ is a positive constant depending on the
number $q$ of spin states per electron, i.e., physically $2$.
This functional is naturally defined on all densities with finite
kinetic energy and finite self-interaction, i.e., on
$$\ci:=\{\rho\geq0| \rho\in L^{5/3}(\rz^3)\cap D[\rho]<\infty\}.$$
The ground state energy in Thomas--Fermi theory is given by
$$
E^\mathrm{TF}(Z) := \inf_{\rho\in\ci} \cE_Z^{\mathrm{TF}}(\rho) \,.
$$
The Thomas--Fermi functional has a well known scaling behavior. Its
minimum scales as
$$ E^\mathrm{TF}(Z) = E^\mathrm{TF}(1)\ Z^{7/3}$$
and its minimizer as
$$\rho_Z^\mathrm{TF}(x) = Z^2\rho_1^\mathrm{TF}(Z^{1/3}x).$$
It is therefore natural to conjecture that the energy $E^{\rm S}(Z)$ of the
non-relativistic atomic Schr\"odinger operator and corresponding ground
state densities $\rho_Z^\mathrm{S}$ would -- suitably rescaled --
converge to the corresponding Thomas--Fermi quantity. In fact, fifty
years after the work of Thomas and Fermi, Lieb and Simon
\cite{LiebSimon1973,LiebSimon1977} showed
$$ E^{\rm S}(Z)= E^\mathrm{TF}(Z) + o(Z^{7/3})$$
and
$$ \int_M\dx\ \rho_Z^\mathrm{S}(x/Z^{1/3})/Z^2 \to\int_M\dx\ \rho_1^\mathrm{TF}(x)$$
for every bounded measurable set $M$ (see also Baumgartner
\cite{Baumgartner1976} for the convergence of the density).

Whereas the energy is asymptotically given by Thomas--Fermi theory, it
turns out that the relative accuracy for medium range atoms is only
about 10 \%; in fact, the energy given by the Thomas--Fermi approximation is too low and this triggered discussions
for corrections. Initially -- the result of Lieb and Simon was not yet
available -- it was proposed, e.g., to change the power, namely instead
of $E^\mathrm{TF}(1)Z^{7/3}$ to the dependence $\const Z^{12/5}$ (Foldy
\cite{Foldy1951}) motivated by numerical results.  An alternative
correction, namely to add $Z^2/2$, was put forward by Scott
\cite{Scott1952} based on a theoretical argument that a correction on
distances $1/Z$ from the nucleus is necessary. This prediction --
later named Scott conjecture (Lieb \cite{Lieb1981}, Simon
\cite[Problem 10b]{Simon1984}) -- was shown to be correct (Siedentop
and Weikard
\cite{SiedentopWeikard1987U,SiedentopWeikard1987U,SiedentopWeikard1988,SiedentopWeikard1989}
where the lower bound is based on initial work of Hughes
\cite{Hughes1986}, see also Hughes \cite{Hughes1990}). The related
statement that the density on the scale $1/Z$ converges to the
sum of the square of the hydrogen orbitals, also known as strong Scott
conjecture (Lieb \cite{Lieb1981}), was shown by Iantchenko et al
\cite{Iantchenkoetal1996}. We refer to Iantchenko \cite{Iantchenko1995}
for the density on other scales, and Iantchenko and Siedentop
\cite{IantchenkoSiedentop2001} for the one-particle density matrix.

In fact, Schwinger \cite{Schwinger1981} proposed a three term
expansion
\begin{equation}
  \label{eq:gse}
  E^{\rm S}(Z)=E^\mathrm{TF}(1) Z^{7/3}+ \tfrac q4Z^{2} - e_\mathrm{DS} Z^{5/3} + o(Z^{5/3})
\end{equation}
for the ground state energy, which was shown to be correct by Fefferman
and Seco
\cite{FeffermanSeco1990O,FeffermanSeco1992,FeffermanSeco1993,FeffermanSeco1994,FeffermanSeco1994T,FeffermanSeco1994Th,FeffermanSeco1995}.
A corresponding result for the density is still unknown.

Although these asymptotic expansions for large $Z$-atoms are extremely
challenging, they are of limited physical interest, since the
innermost electrons move with a velocity which is a substantial
fraction of the velocity of light $c$. Thus, a relativistic description is
mandatory.
% and should be visible in a lowering of Scott's correction
% which results just from the innermost electrons.
Here, we study one of the simplest relativistic models in quantum mechanics,
the so-called Chandrasekhar operator.
S\o rensen \cite{Sorensen2005} proved that for $Z\to\infty$ and $c\to\infty$, keeping the ratio $Z/c\leq2/\pi$ fixed, the ground state energy is again given to
leading order by $E^{\rm TF}(Z)$.
Moreover, it was shown \cite{MerzSiedentop2019} that the ground state
density on the Thomas--Fermi length scale converges weakly and in the
Coulomb norm to the minimizer of the hydrogenic Thomas--Fermi functional.
This indicates that the bulk of the electrons on the length scale
$Z^{-1/3}$ does not behave relativistically.
However, relativistic effects should be visible in the Scott correction
which results from the innermost and fast moving electrons.
In fact, Schwinger \cite{Schwinger1980} made this observation quantitative
and derived a $Z^2$ correction which is lower than Scott's. Such a
correction was indeed proven for the Chandrasekhar operator by Solovej et al
\cite{Solovejetal2008} and Frank et al \cite{Franketal2008} by two different
methods. Later, it was shown for other relativistic Hamiltonians, namely,
the Brown-Ravenhall operator (Frank et al \cite{Franketal2009}), and the
no-pair operator in the Furry picture (Handrek and Siedentop
\cite{HandrekSiedentop2015}), which describe increasingly more realistic
models. In fact, the no-pair operator in the Furry picture gives
numerically energies that are correct within chemical accuracy (Reiher and
Wolf \cite{ReiherWolf2009}).

Our main result here is the convergence of the suitably rescaled
one-particle ground state density of a Chandrasekhar atom: it converges
on distances $1/Z$ from the nucleus to the corresponding density of
the one-particle hydrogenic Chandrasekhar operator. This proves a
generalization of the strong Scott conjecture for relativistic atoms.
This result underscores that relativistic effects occur close to
the nucleus and that self-interactions of the innermost electrons are
negligible.

%%%%%%%%%%%

\subsection{Definitions and main result\label{Bezeichnung und Resultate}}

The Chandrasekhar operator of $N$ electrons of unit mass and with
$q$ spin states, each in the field of a nucleus of charge $Z$ and
velocity of light $c>0$, is given, in atomic units, by
\begin{equation}
  \sum_{\nu=1}^N\left(\sqrt{-c^2\Delta_\nu+c^4}-c^2-\frac{Z}{|x_\nu|}\right)
  +\sum_{1\leq \nu<\mu\leq N}\frac{1}{|x_\nu-x_\mu|}
  \quad\text{in}\ \bigwedge_{\nu=1}^N L^2(\rz^3:\cz^q) \,.
\end{equation}
Technically, this operator is defined as the Friedrichs extension of
the corresponding quadratic form with form domain
$\bigwedge_{\nu=1}^NC_0^\infty(\rz^3:\cz^q)$. It is bounded from below
if and only if $Z/c\leq2/\pi$ (Kato
\cite[Chapter 5, Equation (5.33)]{Kato1966}, Herbst
\cite[Theorem 2.5]{Herbst1977}, Weder \cite{Weder1974}) and, if
$Z/c<2/\pi$, then its form domain is
$H^{1/2}(\rz^{3N}:\cz^{q^N})\cap \bigwedge_{\nu=1}^N L^2(\rz^3:\cz^{q})$.
In the following, we restrict ourselves to the case where
$$
N=Z \,,\qquad q=1 \qquad\text{and}\qquad \gamma=Z/c
$$
with a fixed constant $\gamma\in (0,2/\pi)$. We denote the
resulting Hamiltonian by $C_Z$.

It is known that the ground state energy $\inf\spec C_Z$ is an
eigenvalue of $C_Z$ (Lewis et al \cite{Lewisetal1997}). This eigenvalue
may be degenerate and we denote by $\psi_1,\ldots,\psi_M$ a basis of the
corresponding eigenspace. In the following we consider (not necessarily
pure) ground states $d$ of $C_Z$, which can be written as
$$
d=\sum_{\mu=1}^M w_\mu\lvert\psi_\mu\rangle\langle\psi_\mu\rvert
$$
with constants $w_\mu\geq 0$ such that $\sum_{\mu=1}^M w_\mu=1$. We will
denote the corresponding one-particle density by $\rho_d$,
$$
\rho_d(x):=N\sum_{\mu=1}^Mw_\mu\int_{\rz^{3(N-1)}} |\psi_\mu(x,x_2,\ldots,x_N)|^2\,\dx_2 \cdots \dx_N
\qquad\text{for}\ x\in\R^3 \,.
$$
For $\ell\in\N_0$ we denote by $Y_{\ell m}$, $m=-\ell,\ldots,\ell$, a basis
of spherical harmonics of degree $\ell$, normalized in $L^2(\Sph^2)$ \cite[(B.93)]{Messiah1969}.
The radial electron density $\rho_{\ell,d}$ in the $\ell$-th angular
momentum channel will be denoted by $\rho_{\ell,d}$,
\begin{align*}
  &    \rho_{\ell,d}(r):=\frac{Nr^2}{2\ell+1} \!\sum_{m=-\ell}^\ell\sum_{\mu=1}^M w_\mu \!\! \int_{\rz^{3(N-1)}} \!\!\left|\int_{\bs^2}\overline{Y_{\ell m}(\omega)}\psi_\mu(r\omega,x_2,\ldots,x_N)\domega\right|^2 \dx_2\cdots\dx_N \\
  & \qquad    \qquad\text{for}\ r\in\R_+ \,.
\end{align*}
Note that
\begin{equation}
\label{eq:totalaverage}
\int_{\mathbb S^2} \rho_d(r\omega)\,{\rm d}\omega = r^{-2} \sum_{\ell=0}^\infty (2\ell+1) \rho_{\ell,d}(r)
\qquad\text{for}\ r\in\R_+ \,.
\end{equation}

Our main result concerns these densities on distances of order $Z^{-1}$ from the nucleus. It is known that electrons on these distances lead to the Scott correction to the Thomas--Fermi approximation to the ground state energy of $C_Z$; see \cite{Solovejetal2008,Franketal2008}. As in these works, a key role in our paper is played by the relativistic hydrogen
Hamiltonian
\begin{align*}
  C^H := \sqrt{-\Delta+1}-1 - \frac{\gamma}{|x|} \quad\text{in}\ L^2(\R^3) \,.
\end{align*}
Decomposing this operator into angular momentum channels we are led to the
radial operators
\begin{equation}
\label{eq:defclh}
C_\ell^H := \sqrt{-\frac{\mathrm d^2}{\mathrm d r^2}+\frac{\ell(\ell+1)}{r^2}+1}-1-\frac{\gamma}{r}
\quad\text{in}\ L^2(\R_+) \,.
\end{equation}
We emphasize that the space $L^2(\R_+)$ is defined with measure $\dr$, not
with $r^2\,\dr$. If $\psi_{n,\ell}^H$, $n\in\N_0$, denote the normalized
eigenfunctions of this operator, we denote the corresponding density in
channel $\ell$ by
\begin{equation}
  \label{eq:2.12}
  \rho_{\ell}^H(r):=\sum_{n=0}^\infty |\psi_{n,\ell}^H(r)|^2 \,.
\end{equation}
The total density is given by
\begin{equation}
\label{eq:totaldenshydro}
\rho^H(r):= (4\pi)^{-1} \, r^{-2} \sum_{\ell=0}^\infty (2\ell+1) \rho_{\ell}^H(r) \,.
\end{equation}
We discuss properties of these densities later in Theorem \ref{existencerhoh}, where we show, in particular, that the above series converge for $r>0$ and where we prove bounds on their small $r$ and large $r$ behavior.

The strong Scott conjecture asserts convergence of the rescaled ground state densities $\rho_d$ and $\rho_{\ell,d}$ to the corresponding relativistic hydrogen densities $\rho^H$ and $\rho_\ell^H$. This convergence holds in the weak sense when integrated against test functions. Our test functions are allowed to be rather singular at the origin and do not need to decay rapidly. Since the definition of the corresponding function spaces $\cd_{\gamma}^{(0)}$ and $\cd$ is somewhat involved, we do not state it here but refer to \eqref{eq:deftestnod} and \eqref{eq:deftest} in Sections~\ref{s:singlel} and \ref{s:alll}. We denote by $L^p_{\rm c}([0,\infty))$ the space of all functions in $L^p$ whose support is a compact subset of $[0,\infty)$. As example of allowed test function we mention that if
$$
|U(r)| \leq C \left( r^{-1} \one_{\{r\leq 1\}} + r^{-\alpha} \one_{\{r>1\}}\right)
$$
for some $\alpha>1$, then $U=U_1+U_2$ with $U_1\in r^{-1}L^\infty_{\mathrm c}([0,\infty))$ and $U_2\in\cd_\gamma^{(0)}$. Moreover, if $\alpha>3/2$, then even $U_2\in\cd\cap\cd_\gamma^{(0)}$. Note that if indeed, as we believe, $\rho^H(r)\gtrsim r^{-3/2} \one_{\{r>1\}}$, then the assumption $\alpha>3/2$ is optimal in order to have the integral $\int_{\R^3} \rho^H(|x|)U(|x|)\,\dx$, which appears in the strong Scott conjecture, finite.

The following two theorems are our main results.

\begin{Thm}[Convergence in a fixed angular momentum channel]
  \label{convfixedl}
  Let $\gamma\in(0,2/\pi)$, $\ell_0\in\N_0$, and $U=U_1+U_2$ with
  $U_1\in r^{-1}L^\infty_{\mathrm c}([0,\infty))$ and $U_2\in\cd_\gamma^{(0)}$.
  Then, for $Z,c\to\infty$ with $Z/c=\gamma$ fixed,
  \begin{align*}
    \lim_{Z\to\infty}\int_0^\infty c^{-3}\rho_{\ell_0,d}(c^{-1}r)U(r)\,\dr
    = \int_0^\infty \rho_{\ell_0}^H(r)U(r)\,\dr\,.
  \end{align*}
\end{Thm}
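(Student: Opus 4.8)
The plan is to establish the convergence of the rescaled channel density $c^{-3}\rho_{\ell_0,d}(c^{-1}r)$ to $\rho_{\ell_0}^H(r)$ by a combination of an a priori upper bound (compactness) and identification of the limit via a linear response / Feynman--Hellmann argument. First I would rescale: on distances of order $Z^{-1}=\gamma^{-1}c^{-1}$, substituting $x\mapsto c^{-1}x$ turns the one-body operator $\sqrt{-c^2\Delta+c^4}-c^2-Z/|x|$ into $c^2\big(\sqrt{-\Delta+1}-1-\gamma/|x|\big)=c^2 C^H$. Thus the natural object is the rescaled one-particle density matrix $\gamma_Z$ of the ground state $d$, localized near the origin; its channel-$\ell_0$ radial density is exactly $c^{-3}\rho_{\ell_0,d}(c^{-1}\,\cdot\,)$ after the rescaling. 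The goal becomes: test $\gamma_Z$ against the operator $P_{\ell_0}\otimes(\text{multiplication by }U)$ and show the pairing converges to $\Tr\big(\mathbf 1_{(-\infty,0)}(C_{\ell_0}^H)\,U\big)=\int_0^\infty \rho_{\ell_0}^H U\,\dr$.

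The key steps, in order: (1) \emph{A priori bounds.} Using that $d$ is a ground state of $C_Z$ and the known Scott-corrected energy asymptotics from \cite{Solovejetal2008,Franketal2008}, derive that the rescaled density matrix has trace of order $1$ when localized on the $Z^{-1}$ scale, and more precisely that $\Tr(C^H_- \gamma_Z)$ stays bounded, where $C^H_-$ is (minus) the negative part; this gives tightness and rules out escape of mass. One also needs the finiteness of $\int \rho^H U$, which follows from Theorem~\ref{existencerhoh} together with the decomposition $U=U_1+U_2$ stated before Theorem~\ref{convfixedl}. (2) \emph{Upper bound.} Test the many-body energy with a trial density matrix built from the ground state of $C_Z$ plus a perturbation supported near the origin in channel $\ell_0$, i.e. replace $C_Z$ by $C_Z + \varepsilon c^2\, (U\text{ in channel }\ell_0\text{, rescaled})$ and compare ground state energies; the electron-electron repulsion $D[\rho]$ contributes only lower order on this scale (this is the heart of why self-interaction of the innermost electrons is negligible), so the energy shift is governed by $c^2\,\varepsilon\,\Tr\big(\mathbf 1_{(-\infty,0)}(C^H_{\ell_0})U\big)+o(c^2\varepsilon)$. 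Differentiating in $\varepsilon$ at $0^+$ yields $\limsup \int c^{-3}\rho_{\ell_0,d}(c^{-1}r)U(r)\,\dr\le \int\rho_{\ell_0}^H U$ for $U\ge 0$ (and the reverse with $U\le0$ handled symmetrically, or by convexity in $\varepsilon$). (3) \emph{Lower bound.} For the matching lower bound, drop the positive repulsion term to bound $C_Z$ from below by a sum of one-body operators $\sqrt{-c^2\Delta+c^4}-c^2-Z/|x|$ (after absorbing the screening into an effective potential à la Sørensen/Solovej et al. to keep the constant sharp) and use that the ground state energy of that one-body operator restricted to $Z$ particles, perturbed by $\varepsilon c^2 U$ in channel $\ell_0$, has the desired derivative in $\varepsilon$; combined with the a priori bound from step (1) this pins the limit from below. (4) \emph{Assemble}: the two one-sided bounds, valid for all admissible $U$ of one sign, combine by the usual $U=U_+-U_-$ splitting (legitimate because each piece is again an admissible test function) to give the claimed limit.

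I expect the main obstacle to be step (3), the lower bound, specifically controlling the electron-electron interaction $\sum_{\nu<\mu}|x_\nu-x_\mu|^{-1}$ and the associated mean-field/screening term without destroying the sharp constant $\gamma$ in front of the Coulomb potential on the inner scale. One must show that the screening potential $\rho_Z^{\mathrm{TF}}\ast|x|^{-1}$, evaluated on distances $Z^{-1}$, is of lower order compared to $c^2$ (indeed it is $O(Z^{4/3})=o(Z^2)=o(c^2)$), so that it does not affect the leading behavior of the channel density near the origin; but making this rigorous while simultaneously handling the relativistic kinetic energy (which is nonlocal, so localization estimates are delicate) and the degeneracy of the ground state requires care — this is exactly the technical core also present in the Scott-correction proofs \cite{Solovejetal2008,Franketal2008}, which I would adapt. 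A secondary difficulty is that our test functions $U$ are allowed to be singular like $r^{-1}$ at the origin and to decay only polynomially; handling the $r^{-1}$ singularity uses the fact that $\gamma<2/\pi$ so that $\gamma/|x|$ (hence also a small multiple of $|x|^{-1}$) is form-bounded with respect to $\sqrt{-\Delta}$ with relative bound $<1$ (Herbst \cite{Herbst1977}), while the slow decay is accommodated by the a priori trace bounds from step (1) together with the tail estimates on $\rho_{\ell_0}^H$ from Theorem~\ref{existencerhoh}.
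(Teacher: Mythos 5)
Your high-level plan coincides with the paper's: a linear-response (Feynman--Hellmann/Griffiths) argument, with the correlation inequality of Mancas et al.\ and the Scott-corrected energy asymptotics from \cite{Solovejetal2008,Franketal2008} supplying the lower bound on $\tr C_{Z,\lambda} d$ and the upper bound on $\tr C_Z d$, respectively, and the splitting $U=U_+-U_-$ (or equivalently $\lambda\to 0^\pm$) supplying the two one-sided estimates. Steps (2) and (3) of your sketch correspond, after some reorganization, to Propositions~\ref{redux1} and the two lemmas feeding into it, and step (4) is the routine reassembly. On these points you are on track.

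However, you have misidentified the main technical obstacle, and the step that you treat as routine is precisely where the genuine difficulty lies. You write ``Differentiating in $\varepsilon$ at $0^+$ yields $\limsup\le\dots$'' as if passing from the difference quotient $\lambda^{-1}\bigl(\tr(C_{\ell_0}^H-\lambda U)_- - \tr(C_{\ell_0}^H)_-\bigr)$ to its $\lambda\to 0$ limit, and identifying that limit with $\tr\bigl(U\,\chi_{(-\infty,0)}(C_{\ell_0}^H)\bigr)=\int \rho_{\ell_0}^H U\,\dr$, were automatic. It is not. Differentiating $\tr(A-\lambda B)_-$ at $\lambda=0$ in a setting where the bottom of the essential spectrum of $A$ is exactly $0$ (so that perturbation theory does not apply directly, and there are infinitely many eigenvalues accumulating at $0$) requires a real argument. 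In the non-relativistic case one can run this under the assumption that $(A+M)^{-1/2}B(A+M)^{-1/2}$ is trace class; but for the Chandrasekhar operator this fails for every non-trivial $U$, essentially because $(\sqrt{k^2+1}-1+M)^{-1}\notin L^1(\R_+,\dk)$. The paper's remedy --- requiring only that $(A+M)^{-s}B(A+M)^{-s}$ be trace class for some $s>1/2$, at the price of controlling the non-natural comparison $(C_{\ell_0}+M)^s(C_{\ell_0}^H+M)^{-s}$ --- is the substantive new content (Theorem~\ref{diffgen0}, Lemma~\ref{apriori}, Propositions~\ref{hardydomcor} and~\ref{genReltrclassnod}). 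Your claim that the $r^{-1}$ singularity of $U$ is handled simply by Kato/Herbst form-boundedness, with $\gamma<2/\pi$ guaranteeing relative bound $<1$, also understates the issue: form-boundedness suffices to define $C_{\ell_0}^H-\lambda U$ but contributes nothing to the differentiability question, and the actual constraint $\gamma<2/\pi$ enters through the boundedness of $|p|^s(|p|-\gamma|x|^{-1})^{-s}$ for $s<3/2-\sigma_\gamma$ from \cite{Franketal2019}, a much more delicate fact. Conversely, the electron--electron repulsion and screening (your step (3), which you call the main obstacle) are disposed of by tools already available --- the Mancas--M\"uller--Siedentop correlation inequality and the $Z^{47/24}$ error bound from \cite{Franketal2008,Franketal2009} --- and are explicitly described in the paper as using ``rather standard tools.''

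So: right skeleton, but the hard part of the proof is invisible in your outline, and what you flag as hard is in fact the comparatively easy part.
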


\begin{Thm}[Convergence in all angular momentum channels]
  \label{convalll}
  Let $\gamma\in(0,2/\pi)$ and $U=U_1+U_2$ be a function on $(0,\infty)$
  with $U_1\in r^{-1}L^\infty_{\mathrm c}([0,\infty))$ and $U_2\in\cd\cap\cd_{\gamma}^{(0)}$.
  Then, for $Z,c\to\infty$ with $Z/c=\gamma$ fixed,
  \begin{align*}
    \lim_{Z\to\infty}\int_{\R^3} c^{-3}\rho_{d}(c^{-1}|x|)U(|x|)\,\dx = \int_{\R^3} \rho^H(|x|)U(|x|)\,\dx\,.
  \end{align*}
\end{Thm}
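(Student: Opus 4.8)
The plan is to deduce Theorem~\ref{convalll} from the fixed-channel result, Theorem~\ref{convfixedl}, by summing over $\ell$, with the main work being a uniform-in-$\ell$ control of the contributions so that summation and the $Z\to\infty$ limit can be interchanged. The basic identity is \eqref{eq:totalaverage} (and its hydrogenic analogue \eqref{eq:totaldenshydro}): after the rescaling $x\mapsto c^{-1}x$, writing $U$ as a radial function we have
\begin{align*}
\int_{\R^3} c^{-3}\rho_d(c^{-1}|x|)U(|x|)\,\dx
= 4\pi \sum_{\ell=0}^\infty (2\ell+1)\int_0^\infty c^{-3}\rho_{\ell,d}(c^{-1}r)\, \tilde U(r)\,\dr\,,
\end{align*}
where $\tilde U(r)=(4\pi)^{-1} U(r)$ absorbs the constant and the $r^{-2}$ from \eqref{eq:totalaverage}, and similarly for $\rho^H$. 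Term by term, Theorem~\ref{convfixedl} gives convergence of the $\ell$-th summand to $4\pi(2\ell+1)\int_0^\infty \rho_\ell^H(r)\tilde U(r)\,\dr$ (one must check $\tilde U=\tilde U_1+\tilde U_2$ still lies in the allowed test-function class, which is where the stronger hypothesis $U_2\in\cd$ enters). So the theorem follows once dominated convergence for the $\ell$-sum is justified.

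The key step, therefore, is a bound of the form
\begin{align*}
(2\ell+1)\left|\int_0^\infty c^{-3}\rho_{\ell,d}(c^{-1}r)\,\tilde U(r)\,\dr\right| \leq a_\ell \,,\qquad \sum_{\ell=0}^\infty a_\ell<\infty \,,
\end{align*}
uniformly for all large $Z$ (with $Z/c=\gamma$), together with the analogous summable bound on the hydrogenic side $(2\ell+1)\int_0^\infty \rho_\ell^H \tilde U\,\dr$. For the hydrogenic side this should come from the small-$r$ and large-$r$ bounds on $\rho_\ell^H$ promised in Theorem~\ref{existencerhoh}, combined with the decay/growth of $\tilde U$ built into the definition of $\cd$; one expects $\rho_\ell^H$ to be supported, in a quantitative sense, away from the classically forbidden region, i.e. essentially for $r\gtrsim \ell(\ell+1)/\gamma$, giving exponential or at least fast polynomial smallness in $\ell$ once integrated against a fixed test function. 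For the many-body side, the natural tool is a bound on $\rho_{\ell,d}$ in terms of the one-particle density matrix of $d$ and a Lieb--Thirring-type or commutator estimate exploiting that $d$ is a ground state of $C_Z$: the high-$\ell$ channels contribute little because the centrifugal term $\ell(\ell+1)/r^2$ pushes the relevant electrons out, and the energy constraint bounds how much density can sit there. Concretely one would compare $\rho_{\ell,d}$ against the density of the operator $C_\ell$ obtained from $C_Z$ by dropping electron repulsion (repulsion only helps, by positivity) and localizing to the $\ell$-channel, and then use that $\sum_\ell (2\ell+1)\,\mathrm{(density\ in\ channel\ }\ell)$ is controlled.

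I expect the main obstacle to be precisely this uniform-in-$\ell$, uniform-in-$Z$ tail bound on $\sum_\ell (2\ell+1)\int c^{-3}\rho_{\ell,d}(c^{-1}r)\tilde U(r)\,\dr$. The subtlety is that the test functions are allowed to be singular like $r^{-1}$ at the origin and to decay only polynomially, so crude $L^\infty$ bounds on the density will not suffice; one needs the decay in $\ell$ to beat the slow decay of $U$ at infinity and the mild singularity near $0$ simultaneously, and this must hold uniformly in the two coupled large parameters $Z,c$. A secondary technical point is checking that splitting $U=U_1+U_2$ is compatible with the channel decomposition and that $U_2\in\cd$ really does propagate to each $\tilde U$ restricted to the $\ell$-channel with constants summable against $(2\ell+1)$; this is the reason Theorem~\ref{convalll} needs $U_2\in\cd\cap\cd_\gamma^{(0)}$ rather than just $\cd_\gamma^{(0)}$. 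Once these uniform bounds are in hand, the proof is a routine application of dominated convergence to \eqref{eq:totalaverage} combined with Theorem~\ref{convfixedl}.
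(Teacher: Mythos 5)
Your high-level plan is the right one and matches the structure of the paper's argument: apply the fixed-channel result termwise, then justify the interchange of the $\ell$-sum with the limit $Z\to\infty$ by a summable tail bound uniform in $Z$, and you correctly diagnose that the uniform tail bound is the whole difficulty. But the mechanisms you propose for that tail bound are not what the paper does, and as stated they would not go through.

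The key gap is on the many-body side. You suggest comparing $\rho_{\ell,d}$ against the density obtained ``by dropping electron repulsion (repulsion only helps, by positivity)'' plus a Lieb--Thirring or commutator estimate. Dropping the repulsion entirely does not give a useful bound: the independent-particle model with $N=Z$ electrons in the bare Coulomb field $-Z/|x|$ puts enormous density near the origin and does not control the high-$\ell$ tail in a summable way, and an energy lower bound does not directly translate into an upper bound on a weighted integral of the density. What actually makes the estimate work in the paper is the linear-response structure: one writes $\sum_{\ell\geq\ell_0}(2\ell+1)\int c^{-3}\rho_{\ell,d}(c^{-1}r)U(r)\,\dr$ as a difference quotient $\lambda^{-1}c^{-2}\tr\bigl((C_Z-C^{\ell_0}_{Z,\lambda})d\bigr)$ of many-body energies (see \eqref{eq:weakdensity}), then bounds $\tr C_{Z,\lambda}^{\ell_0}d$ from below by a sum of one-body traces using the correlation inequality \eqref{eq:correlation} (not mere positivity of the repulsion), and bounds $\tr C_Z d$ from above via the Scott-corrected energy asymptotics (Lemma~\ref{energyasymp}). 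This reduces everything to uniform bounds on $\tr(C_\ell^H+\chi-\lambda U)_- - \tr(C_\ell^H+\chi)_-$, uniformly over screenings $0\leq\chi\leq\gamma/r$, which is Proposition~\ref{diff2}. Your proposal never invokes a coupling constant $\lambda$ or a trace-difference formulation, and without it there is no bridge from the ground-state energy constraint to a summable density bound.

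A second, more minor, divergence: for the hydrogenic tail you propose to use the pointwise bounds of Theorem~\ref{existencerhoh} directly. The paper instead gets the hydrogenic tail bound for free from the many-body one via Fatou's lemma applied to Theorem~\ref{convfixedl}; this avoids re-deriving decay of $\rho_\ell^H$ against the specific test function class. Your route would also work in principle, but it duplicates effort. Finally, a bookkeeping slip: in passing from \eqref{eq:totalaverage} to the $\ell$-sum there is no leftover $4\pi$ or $r^{-2}$ to absorb into a modified test function $\tilde U$; the $r^2$ of the volume element cancels the $r^{-2}$ in \eqref{eq:totalaverage}, and the $4\pi$ in \eqref{eq:totaldenshydro} cancels against $\int_{\Sph^2}\domega$, so the sums appear with the unmodified $U$.
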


As explained in the previous subsection, this is the relativistic analogue of
the strong Scott conjecture proved by Iantchenko et al
\cite{Iantchenkoetal1996}. Let us compare
Theorems~\ref{convfixedl} and~\ref{convalll} with their results. Both
works give convergence of the density on scales of order $Z^{-1}$ in a
certain weak sense. On the one hand, our class of test functions includes
functions with Coulomb type singularities (and even a behavior
like $\sin(1/r)r^{-1}$ is allowed) and with slow decay like $r^{-3/2-\epsilon}$ for $\epsilon>0$, which are not covered in
\cite{Iantchenkoetal1996}. On the other hand, the class of test functions
in \cite{Iantchenkoetal1996} includes radial delta functions and therefore
Iantchenko et al can prove pointwise convergence. The reason we cannot handle radial delta functions is that these are not form bounded with respect to $\sqrt{-\Delta}$.
% This is not expected in the relativistic
% case since the relativistic eigenfunctions can be singular.

\begin{Rem}\label{approxgs}
The fact that $d$ is an exact ground state of $C_Z$ is not essential for the proof of Theorems \ref{convfixedl} and \ref{convalll}. The assertions continue to hold if $d$ is an approximate ground state in the sense that
\begin{equation}
\label{eq:approxgs}
Z^{-2} \left( \tr C_Z d - \inf\spec C_Z \right) \to 0
\qquad\text{as}\ Z\to\infty \,;
\end{equation}
see Remark \ref{approxgs1} for further details about this generalization.
\end{Rem}

We end this presentation of our main results by discussing the hydrogenic densities $\rho^H$ and $\rho^H_\ell$ in more detail. They are much less understood than their non-relativistic counterparts. This originates from the fact that the eigenfunctions in the Chandrasekhar case are not explicitly known as opposed to the Schr\"odinger case.

The following result gives pointwise bounds on the densities $\rho_{\ell}^H$ and $\rho^H$ and shows, in particular, that the series defining them actually converge. To formulate the result, we introduce
\begin{align}
  \label{eq:defsigma}
  \sigma\mapsto\Phi(\sigma):=(1-\sigma)\tan\frac{\pi\sigma}{2}\,.
\end{align}
It is easy to see that $\Phi$ is strictly monotone on $[0,1]$ with
$\Phi(0)=0$ and $\lim_{\sigma\to1}\Phi(\sigma)=2/\pi$. Thus, there is a
unique $\sigma_\gamma\in (0,1)$ such that $\Phi(\sigma_\gamma)=\gamma$.

\begin{Thm}[Pointwise bounds on $\rho_\ell^H$ and $\rho^H$]
  \label{existencerhoh}
  Let $1/2<s\leq 3/4$ if $0<\gamma<(1+ \sqrt 2)/4$ and $1/2<s<3/2-\sigma_\gamma$ if $(1+\sqrt 2)/4\leq\gamma<2/\pi$. Then for all $\ell\in\N_0$ and $r\in\R_+$
  \begin{align*}
    \rho_\ell^H(r)
    & \leq A_{s,\gamma} \left( \ell+\tfrac 12 \right)^{-4s}\left[\left(\frac{r}{\ell+\tfrac12}\right)^{2s-1}\one_{\{r\leq\ell+\frac12\}}+\left(\frac{r}{\ell+\tfrac12}\right)^{4s-1}\one_{\{\ell+\frac12< r\leq(\ell+\frac12)^2\}}\right.\\
    & \qquad\qquad\qquad\qquad\quad \left.+ \left(\ell+\tfrac12\right)^{4s-1}\one_{\{r>(\ell+\frac 12)^2\}}\right].
  \end{align*}
Moreover, for any $\epsilon>0$ and $r\in\R_+$,
  \begin{align*}
    \rho^H(r) \leq 
\begin{cases}
A_\gamma \, r^{-3/2} & \text{if}\ 0<\gamma<(1+\sqrt 2)/4 \,, \\
A_{\gamma,\epsilon} \left( r^{-2\sigma_\gamma-\epsilon} \one_{\{r\leq 1\}} + r^{-3/2} \one_{\{r>1\}} \right)
& \text{if}\ (1+\sqrt 2)/4\leq \gamma<2/\pi \,.
\end{cases}
  \end{align*}
\end{Thm}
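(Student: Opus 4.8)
The plan is to bound $\rho_\ell^H(r) = \sum_{n\geq 0} |\psi_{n,\ell}^H(r)|^2$ by interpreting this sum as the integral kernel of a spectral projection and comparing it with the free operator. Write $P_\ell^- := \one_{(-\infty,0)}(C_\ell^H)$ for the projection onto the negative spectral subspace of the radial operator $C_\ell^H$ from \eqref{eq:defclh}. Since $C_\ell^H$ has only discrete negative spectrum accumulating at $0$ with normalized eigenfunctions $\psi_{n,\ell}^H$, we have $\rho_\ell^H(r) = P_\ell^-(r,r)$, the diagonal of the projection kernel. The key analytic input is the (massless, to begin with) Hardy--Rellich type inequality: for $0<\gamma<2/\pi$ one has a lower bound of the form $\sqrt{-\Delta} - \gamma/|x| \geq 0$, and more quantitatively, in the $\ell$-th channel, $\sqrt{-\td^2/\td r^2 + \ell(\ell+1)/r^2} - \gamma/r \geq \kappa_\gamma (\text{something})$. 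The strategy is then: on the negative spectral subspace, $C_\ell^H \leq 0$ forces $(\sqrt{-\td^2/\td r^2 + \ell(\ell+1)/r^2 + 1} - 1) P_\ell^- \leq (\gamma/r) P_\ell^-$, and we want to convert this operator inequality into a pointwise kernel bound.

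First I would reduce to the massless operator. On distances $r\lesssim 1$ (after the natural rescaling the mass is irrelevant there), the operator $\sqrt{-\Delta+1}-1$ is comparable to $\sqrt{-\Delta}$, and I would pass to $B_\ell := \sqrt{-\td^2/\td r^2 + \ell(\ell+1)/r^2}$. The ground-state-representation / Hardy inequality for $B_\ell - \gamma/r$: there is a constant $c_{\ell,\gamma}$ such that $B_\ell - \gamma/r \geq c_{\ell,\gamma} r^{-1}$ fails in general (it is only $\geq 0$ for the critical coupling and $\ell=0$), so instead I would use the sharp constant $C(\gamma)$ in $B_\ell \geq (\gamma + \delta)/r$ on the positive subspace, exploiting that for fixed $\gamma<2/\pi$ there is room $\delta>0$. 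Then I would invoke the Kato--Herbst/Yafaev bounds on powers $r^s B_\ell^{-1} r^s$ or, more efficiently, the explicit spectral scaling: $B_\ell - \gamma/r$ is homogeneous of degree $-1$, so its negative spectral subspace is scaling invariant and one can diagonalize via the Mellin transform. This is where the exponent $\sigma_\gamma$ enters: the Mellin symbol of $B_\ell$ involves ratios of Gamma functions whose behavior near the critical line produces the threshold $\Phi(\sigma_\gamma)=\gamma$; the borderline case $\ell=0$ gives the worst singularity $r^{-2\sigma_\gamma}$, matching the statement. For $\gamma < (1+\sqrt2)/4$ one has $\sigma_\gamma < 3/4$, so $2\sigma_\gamma < 3/2$ and the $r^{-3/2}$ bound (coming from the $\ell$-summation, see below) dominates near the origin — this explains the case distinction.

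The three-regime bound on $\rho_\ell^H$ I would obtain as follows. The exponent $2s-1$ near the origin ($r\leq \ell+\tfrac12$): here $B_\ell \approx (\ell+\tfrac12)/r$ is large, so $C_\ell^H<0$ is only possible through the Coulomb tail, and a weighted estimate $\|r^{-s}\psi\|^2 \lesssim \langle \psi, B_\ell^{2s}\psi\rangle$ (valid for $1/2<s<3/2$ by Hardy--Rellich, with $s\leq 3/4$ or $s<3/2-\sigma_\gamma$ to stay subcritical relative to the perturbation) combined with the trace of $P_\ell^-$ against $r^{2s}$ gives the stated power, with the prefactor $(\ell+\tfrac12)^{-4s}$ fixed by scaling $r\mapsto (\ell+\tfrac12)r$. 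The middle regime $\ell+\tfrac12 < r \leq (\ell+\tfrac12)^2$ is the "classically allowed" region where one expects WKB-type behavior $\rho_\ell^H(r)\sim$ (number of bound states at energy $\sim -1/r$) and a direct estimate using that $C_\ell^H + 1 \geq \tfrac12$ there, i.e. bounding $P_\ell^-$ by $\one_{[1/2,\infty)}(\sqrt{-\td^2/\td r^2 + \ell(\ell+1)/r^2 + 1})$ composed with a Coulomb factor. The far regime $r>(\ell+\tfrac12)^2$ is the relativistic exponential/polynomial tail; since all eigenvalues lie in $(-1,0)$ and $C^H = \sqrt{-\Delta+1}-1-\gamma/|x|$, the eigenfunctions decay and summing gives the constant bound $(\ell+\tfrac12)^{-1}$. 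Finally, $\rho^H(r) = (4\pi)^{-1} r^{-2}\sum_\ell (2\ell+1)\rho_\ell^H(r)$: plug in the three-regime bound, split the $\ell$-sum at $\ell\sim r^{1/2}$ and $\ell\sim r$, and the dominant contribution (from $\ell \sim \sqrt r$, the middle regime) produces exactly $r^{-3/2}$; for small $r$ in the supercritical coupling range the $\ell=0$ term contributes the extra $r^{-2\sigma_\gamma-\epsilon}$, the $\epsilon$ absorbing the failure of the endpoint $s=3/2-\sigma_\gamma$.

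The main obstacle I expect is making the operator inequality "$C_\ell^H<0 \Rightarrow$ pointwise kernel bound" quantitative with the correct constants, uniformly in $\ell$ — in particular establishing the weighted Hardy--Rellich inequality $r^{-s}(B_\ell^2+1)^{-1/2}r^{-s}$ bounded with $\ell$-independent norm after rescaling, and controlling the Mellin-multiplier near the critical strip to pin down $\sigma_\gamma$ rather than just some $\sigma$. A secondary difficulty is handling the mass term $\sqrt{-\Delta+1}-1$ rigorously in the crossover region $r\sim 1$, where neither the massless approximation nor the large-$r$ asymptotics is clean; I would patch this with a resolvent comparison, writing $\sqrt{-\Delta+1}-1 - (\sqrt{-\Delta} - \tfrac12\sqrt{-\Delta}(-\Delta+1)^{-1/2}\cdots)$ and estimating the difference as a bounded, rapidly decaying correction. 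Everything else — the scaling bookkeeping, the $\ell$-summation, the splitting into three regimes — is routine once the weighted inequalities are in hand.
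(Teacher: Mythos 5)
You correctly identify the starting point ($\rho_\ell^H(r)$ is the diagonal of the negative spectral projection of $C_\ell^H$), the role of the Hardy-type comparison between $C_\ell^H$ and $C_\ell$ with the threshold $\sigma_\gamma$ governing the critical power (this is exactly the content of Proposition~\ref{hardydomcor} resting on \cite{Franketal2019}), and the $\ell$-summation with $\ell\sim\sqrt r$ dominating. But you flag the central gap yourself: you have no quantitative mechanism for passing from operator inequalities to a pointwise kernel bound with the three-regime structure. The paper's device is the factorization $\rho_\ell^H(r)=\tr d_\ell\delta_r=\tr ABCB^*A^*$ with $A:=d_\ell(C_\ell^H+b_\ell)^s$, $B:=(C_\ell^H+b_\ell)^{-s}(C_\ell+b_\ell)^s$ and $C:=(C_\ell+b_\ell)^{-s}\delta_r(C_\ell+b_\ell)^{-s}$, where $b_\ell=a(\ell+\tfrac12)^{-2}$. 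Then $\|A\|^2\lesssim(\ell+\tfrac12)^{-4s}$ from the sharp lower bound $C_\ell^H\geq-a_0(\ell+\tfrac12)^{-2}$, $\|B\|$ is bounded uniformly in $\ell$ by the Hardy comparison, and the entire $r$-dependence comes from
\[
\tr C=(C_\ell+b_\ell)^{-2s}(r,r)=\int_0^\infty\frac{krJ_{\ell+1/2}(kr)^2}{(\sqrt{k^2+1}-1+b_\ell)^{2s}}\,\dk\,,
\]
evaluated via the Fourier--Bessel transform and the explicit estimates of Lemma~\ref{auxbound}. The three regimes are thus read off from a completely explicit Bessel-integral computation, not from WKB heuristics; the crossover at $r\sim\ell+\tfrac12$ is where $J_{\ell+1/2}(kr)^2$ switches from the small-argument to the oscillatory regime, and the crossover at $r\sim(\ell+\tfrac12)^2$ is where the mass term in $\sqrt{k^2+1}-1$ switches from quadratic to linear, precisely the mass-crossover you were proposing to patch by hand.

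Your proposed middle-regime argument (``one expects WKB-type behavior'', bounding $P_\ell^-$ by a spectral cutoff of the kinetic energy ``composed with a Coulomb factor'') and far-regime argument (``the eigenfunctions decay and summing gives the constant bound'') are not proofs: neither produces a pointwise bound on the diagonal of an infinite-rank projection, and you would still face the problem of summing $\sum_n|\psi_{n,\ell}(r)|^2$ without pointwise control on individual $\psi_{n,\ell}$, which are not explicitly known in the Chandrasekhar case. The Mellin-multiplier route you sketch for the $\sigma_\gamma$-threshold is essentially what lies inside \cite{Franketal2019}, and is sound; but it only delivers boundedness of $B$, not the diagonal kernel of $(C_\ell+b_\ell)^{-2s}$. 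Without the explicit Hankel-transform computation (or an equivalent), the three-regime estimate --- and hence the $r^{-3/2}$ from the $\ell$-sum --- remains unproven.
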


In this theorem, and similarly in the rest of this paper, we denote by $A$ a constant which only depends on the parameters appearing as subscripts of $A$.

We believe that the bound $r^{-3/2}$ on $\rho^H(r)$ for large $r$ is best possible, since this regime should be dominated by non-relativistic effects and in the Schr\"odinger case Heilmann and Lieb \cite{HeilmannLieb1995} showed that the hydrogenic density behaves like $(\sqrt{2}/(3\pi^2)) \gamma^{3/2} r^{-3/2} + o(r^{-3/2})$ for large $r$. Note that this indicates that there is a smooth transition between the
quantum length scale $Z^{-1}$ and the semiclassical length scale $Z^{-1/3}$,
as the Thomas--Fermi density diverges like $(Z/r)^{3/2}$ at the origin. As we will explain in Appendix \ref{a:nonrelrhoh}, arguments similar to those in the proof of Theorem \ref{existencerhoh} also yield an $r^{-3/2}$ bound in the non-relativistic case. Proving a corresponding lower bound on $\rho^H(r)$ for large $r$, as well as determining the asymptotic coefficient is an open problem.

We believe that the bound on $\rho^H$ for small $r$ is best possible for $\gamma\geq (1+\sqrt 2)/4$, except possibly for the arbitrary small $\epsilon>0$. In fact, we believe that $|\psi_{0,0}^H|^2$ has an $r^{2-2\sigma_\gamma}$ behavior at $r=0$. Evidence for this conjecture comes from the trial functions used in \cite[Eq.~(6)]{Raynaletal1994} and from the description of the domain of $C_\ell^H$ in momentum space \cite[Sec.~V]{LeYaouancetal1997}. On the other hand, the small $r$ behavior of the bound in Theorem \ref{existencerhoh} is not optimal for $\gamma<(1+\sqrt 2)/4$ and, indeed, in Theorem \ref{densityagain} we show how to improve it somewhat at the expense of a more complicated statement. The appearance of $\gamma=(1+\sqrt 2)/4$ is technical and comes from the restriction $\sigma\leq 3/4$ together with the fact that $\sigma_{(1+\sqrt 2)/4}=3/4$ (since $\Phi(3/4)=(1+\sqrt 2)/4$). It is an open problem to decide whether the bound $r^{-2\sigma_\gamma}$ for small $r$ is best possible for all $0<\gamma\leq 2/\pi$. Note that in the Coulomb--Dirac model, where the eigenfunctions are known explicitly, the singularity depends on the coupling constant $\gamma$. In contrast, in the non-relativistic case, Heilmann and Lieb \cite{HeilmannLieb1995} have shown that the hydrogenic density is finite at the origin. They have also shown that the density is monotone decreasing in $r$, which again is an open question in the Chandrasekhar case.

%%%%%%%%%

\subsection{Strategy of the proof}\label{sec:strategy}

Theorem \ref{convfixedl} is proved via a linear response argument and
follows the lines of Lieb and Simon \cite{LiebSimon1977}, Baumgartner
\cite{Baumgartner1976} (using Griffiths' lemma \cite{Griffiths1964}, see
also \cite[Theorem 1.27]{Simon2011}) and Iantchenko et al
\cite{Iantchenkoetal1996}: 
we differentiate with respect to the coupling constant $\lambda$ of a
perturbation $U$ in the $\ell$-th angular momentum channel of the
Chandrasekhar operator. The derivative, i.e., the response, is given by
the ground state density integrated against $U$.
To prove Theorem \ref{convalll} one needs to justify that one can interchange the sum over $\ell\in\N_0$ with the limit $Z\to\infty$.

In more detail, the proof of our two main results, Theorems \ref{convfixedl} and~\ref{convalll}, relies on four propositions. The first two reduce the multi-particle problem to a one-body problem. We recall that the Chandrasekhar hydrogen operator $C_\ell^H$ in angular momentum channel $\ell$ was introduced in \eqref{eq:defclh}. Moreover, for a self-adjoint operator $A$ we write $A_-=-A\chi_{(-\infty,0)}(A)$.

\begin{Prop}\label{redux1}
Let $\gamma\in(0,2/\pi)$, $\ell_0\in\N_0$ and assume that $U\geq 0$ is a measurable function on $(0,\infty)$ that is form bounded with respect to $C_0^H$. Then, provided $|\lambda|$ is sufficiently small,
$$
\limsup_{Z\to\infty} \int_0^\infty \!\!c^{-3} \rho_{\ell_0,d}(c^{-1}r) U(r)\,\dr
\leq \lambda^{-1} \left( \tr\left( C_{\ell_0}^H -\lambda U \right)_- - \tr \left( C_{\ell_0}^H\right)_- \right)
\ \ \text{if}\ \lambda>0
$$
and
$$
\liminf_{Z\to\infty} \int_0^\infty \!\!c^{-3} \rho_{\ell_0,d}(c^{-1}r) U(r)\,\dr
\geq \lambda^{-1} \left( \tr\left( C_{\ell_0}^H -\lambda U \right)_- - \tr \left( C_{\ell_0}^H\right)_- \right)
\ \ \text{if}\ \lambda<0.
$$
\end{Prop}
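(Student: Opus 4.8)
The plan is to prove Proposition~\ref{redux1} by a two-step reduction: first a Feynman--Hellmann/linear-response argument relating the integral $\int c^{-3}\rho_{\ell_0,d}(c^{-1}r)U(r)\,\dr$ to a derivative in the coupling $\lambda$ of the perturbed multi-particle ground state energy, and then a semiclassical/scaling argument identifying the $Z\to\infty$ limit of the rescaled energy difference with the one-body quantity $\tr(C_{\ell_0}^H-\lambda U)_- - \tr(C_{\ell_0}^H)_-$. More precisely, let $C_Z(\lambda)$ denote the Chandrasekhar operator $C_Z$ perturbed by adding $\lambda$ times the one-body potential $c^2 U(c|x|)$ projected onto the $\ell_0$-th angular momentum channel (the scaling $c^2 U(c\,\cdot\,)$ is the one that makes the perturbation live on the length scale $Z^{-1}$ and have the right size so that its contribution to $Z^{-2}\tr C_Z(\lambda)$ is $O(1)$). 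Writing $E_Z(\lambda):=\inf\spec C_Z(\lambda)$, convexity of $\lambda\mapsto E_Z(\lambda)$ together with the fact that $d$ is a ground state of $C_Z=C_Z(0)$ gives the one-sided bounds
\begin{align*}
\lambda^{-1}\left(E_Z(\lambda)-E_Z(0)\right) \le \int_0^\infty c^{-3}\rho_{\ell_0,d}(c^{-1}r)U(r)\,\dr \quad (\lambda<0),
\end{align*}
and the reverse inequality for $\lambda>0$; the point is that $\tr C_Z(\lambda)\,d \ge E_Z(\lambda)$ while $\tr C_Z(0)\,d = E_Z(0)$, and the difference $\tr(C_Z(\lambda)-C_Z(0))\,d$ is exactly $\lambda$ times the (rescaled) density integrated against $U$. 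This is the relativistic analogue of the Griffiths-lemma step in Iantchenko et al.

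The second step is to show that $Z^{-2}(E_Z(\lambda)-E_Z(0))$ converges, as $Z,c\to\infty$ with $Z/c=\gamma$ fixed, to $\tr(C_{\ell_0}^H-\lambda U)_- - \tr(C_{\ell_0}^H)_-$. Here I would invoke the Scott-correction machinery of Solovej et al \cite{Solovejetal2008} and Frank et al \cite{Franketal2008}: their analysis already shows $E_Z(0) = E^{\mathrm{TF}}(Z) + (\text{Scott term})\cdot Z^2 + o(Z^2)$, where the $Z^2$ coefficient is built precisely from $\sum_\ell(2\ell+1)\tr(C_\ell^H)_-$ (suitably interpreted/regularized). The perturbation $\lambda U$ only acts in channel $\ell_0$ and only on the length scale $Z^{-1}$, so it does not disturb the Thomas--Fermi (length scale $Z^{-1/3}$) part of the energy; the upshot should be that $E_Z(\lambda) = E^{\mathrm{TF}}(Z) + (\text{perturbed Scott term})\cdot Z^2 + o(Z^2)$ with the $\ell_0$-channel contribution $\tr(C_{\ell_0}^H)_-$ replaced by $\tr(C_{\ell_0}^H-\lambda U)_-$. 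Concretely, for the upper bound on $E_Z(\lambda)$ one uses a trial density matrix adapted from the Scott-correction upper bound construction, inserting the ground state of $(C_{\ell_0}^H-\lambda U)_-$ in the innermost channel; for the lower bound one runs the Solovej et al localization argument, which reduces the bottom of the spectrum near the nucleus to a sum of one-body channel operators, and then uses that $U$ is form bounded with respect to $C_0^H$ (hence with respect to $C_{\ell_0}^H$, since the centrifugal term only helps) to control the perturbed channel operator.

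The main obstacle I expect is making the second step uniform in $Z$ and genuinely rigorous rather than heuristic: the Scott-correction proofs are delicate multiscale arguments, and one must check that the perturbation $\lambda U$ — which may be quite singular at the origin, being merely form bounded with respect to $\sqrt{-\Delta}$ — can be carried through the localization without destroying error estimates, and that the smallness condition on $|\lambda|$ is exactly what guarantees $C_{\ell_0}^H-\lambda U$ is still bounded below with a nice ground state and that the relevant convexity/semiboundedness holds uniformly. A secondary technical point is the decomposition $U=U_1+U_2$ with $U_1\in r^{-1}L^\infty_{\mathrm c}$: the $U_1$ part has a Coulomb-type singularity that must be absorbed into the $\gamma/|x|$ attraction using $\gamma<2/\pi$ and the strict inequality $|\lambda|$ small, while $U_2\in\cd_\gamma^{(0)}$ is by definition form-small in a way tailored to $C_0^H$. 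Once the energy asymptotics $Z^{-2}(E_Z(\lambda)-E_Z(0)) \to \tr(C_{\ell_0}^H-\lambda U)_- - \tr(C_{\ell_0}^H)_-$ is established, combining it with the convexity bounds from step one and dividing by $\lambda$ (keeping track of the sign) yields the claimed $\limsup$/$\liminf$ inequalities immediately.
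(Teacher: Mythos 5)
Your Step~1 --- the linear-response/variational bound relating $\int c^{-3}\rho_{\ell_0,d}(c^{-1}r)U(r)\,\dr$ to the rescaled difference $\inf\spec C_Z - \inf\spec C_{Z,\lambda}$ via $\tr C_{Z,\lambda}d \ge \inf\spec C_{Z,\lambda}$ and $\tr C_Z d = \inf\spec C_Z$ --- matches the paper's opening move (their identity \eqref{eq:5.2} plus the variational principle). Step~2 is where you deviate, and the deviation matters. You propose to prove full Scott-correction asymptotics $Z^{-2}\bigl(E_Z(\lambda)-E_Z(0)\bigr)\to \tr(C_{\ell_0}^H-\lambda U)_- - \tr(C_{\ell_0}^H)_-$, which would require both a trial-state upper bound and a localization lower bound for the perturbed operator $C_Z(\lambda)$. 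But the proposition only asserts one-sided bounds, and after multiplying through by $\lambda$ both cases collapse to the single inequality $\limsup_Z \lambda\int \ldots \le \tr(C_{\ell_0}^H-\lambda U)_- - \tr(C_{\ell_0}^H)_-$. This requires only an upper bound on $\tr(C_Z - C_{Z,\lambda})d$, i.e., an upper bound on $\inf\spec C_Z$ (the unperturbed energy, taken from \cite{Franketal2008,Franketal2009} with explicit error $O(Z^{47/24})$, Lemma~\ref{energyasymp}) together with a lower bound on $\tr C_{Z,\lambda}d$ (from the Mancas--M\"uller--Siedentop correlation inequality, Lemma~\ref{correlation}). The decisive structural point, which your outline misses, is that these two bounds are set up with identical Thomas--Fermi exchange term $\chi_Z^{\mathrm{TF}}$ and identical non-$\ell_0$ channel operators, so that upon subtraction the bulk $D[\rho_Z^{\mathrm{TF}}]$ and all $\ell\neq\ell_0$ contributions cancel \emph{exactly}, leaving only the $\ell_0$-channel difference $\tr_{\ell_0}C_c(\ldots+\lambda c^2 U)_- - \tr_{\ell_0}C_c(\ldots)_-$ plus the $O(Z^{47/24}) = o(c^2)$ error. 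No trial-state construction for $C_Z(\lambda)$ and no asymptotics for the perturbed energy are ever needed, which is precisely what makes the argument tractable and sidesteps the rigor concerns you flag. Your worry about carrying the singular perturbation through the localization is well-founded for the lower bound side, but the paper handles this cleanly because $U$ only needs to be form bounded with respect to $C_0^H$, which makes $C_{Z,\lambda}$ well-defined for small $|\lambda|$, and the rest goes through the variational principle. Finally, note that Proposition~\ref{redux1} does not assume the decomposition $U=U_1+U_2$; that decomposition is relevant only later in Propositions~\ref{diff1} and~\ref{diff2}.
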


\begin{Prop}\label{redux2}
Let $\gamma\in(0,2/\pi)$, $\ell_0\in\N_0$ and assume that $U\geq 0$ is a measurable function on $(0,\infty)$ that is form bounded with respect to $C_0^H$. Assume that there is a sequence $(a_\ell)_{\ell\geq\ell_0}$ and a $\lambda_0>0$ such that for all $0<\lambda\leq \lambda_0$ and all functions $\chi$ on $(0,\infty)$ with $0\leq\chi\leq\gamma/r$,
$$
\tr\left( C_{\ell}^H +\chi -\lambda U \right)_- - \tr \left( C_{\ell}^H+ \chi \right)_-  \leq \lambda a_\ell \,.
$$
Then
$$
\limsup_{Z\to\infty} \sum_{\ell=\ell_0}^\infty (2\ell+1) \int_0^\infty \!\!c^{-3} \rho_{\ell,d}(c^{-1}r) U(r)\,\dr
\leq \sum_{\ell=\ell_0}^\infty (2\ell+1) a_\ell \,.
$$
\end{Prop}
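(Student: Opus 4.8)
The plan is to combine Proposition~\ref{redux1}, applied channel by channel, with the summability hypothesis on $(a_\ell)$, and to control the presence of the interaction (mean-field) potential via the screening bound $0 \le \chi \le \gamma/r$. First I would fix $\lambda_0>0$ as in the hypothesis and take any $\lambda$ with $0<\lambda\le\lambda_0$ small enough that Proposition~\ref{redux1} applies simultaneously in every channel $\ell\ge\ell_0$ (the smallness threshold in that proposition depends only on the form bound of $U$ relative to $C_0^H$, which dominates $C_\ell^H$ for all $\ell$, so a single $\lambda$ works). Applying the upper bound of Proposition~\ref{redux1} in channel $\ell$ gives
\[
\limsup_{Z\to\infty}\int_0^\infty c^{-3}\rho_{\ell,d}(c^{-1}r)U(r)\,\dr
\le \lambda^{-1}\Bigl(\tr\bigl(C_\ell^H-\lambda U\bigr)_- - \tr\bigl(C_\ell^H\bigr)_-\Bigr).
\]
The issue is that Proposition~\ref{redux1} in its stated form controls the one-body operator \emph{without} the mean-field term, whereas the hypothesis of Proposition~\ref{redux2} is phrased with an arbitrary screening potential $\chi$. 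The resolution is that the actual reduction of the $Z$-dependent many-body problem to a one-body problem — which is what underlies Proposition~\ref{redux1} — produces, in each channel, an effective potential of the form $-\gamma/r$ plus a nonnegative screening correction bounded by the full mean-field potential; monotonicity of $A\mapsto\tr A_-$ in $A$ lets one replace that screening correction by its worst case. Concretely, I expect the honest statement extractable from the proof of Proposition~\ref{redux1} to read: for every $\ell$,
\[
\limsup_{Z\to\infty}\int_0^\infty c^{-3}\rho_{\ell,d}(c^{-1}r)U(r)\,\dr
\le \sup_{0\le\chi\le\gamma/r}\ \lambda^{-1}\Bigl(\tr\bigl(C_\ell^H+\chi-\lambda U\bigr)_- - \tr\bigl(C_\ell^H+\chi\bigr)_-\Bigr),
\]
and the hypothesis on $(a_\ell)$ says precisely that this right-hand side is $\le a_\ell$.

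Granting the per-channel bound $\limsup_{Z\to\infty}\int_0^\infty c^{-3}\rho_{\ell,d}(c^{-1}r)U(r)\,\dr \le a_\ell$, the remaining task is to interchange the sum over $\ell$ with the limit superior in $Z$. Since $U\ge0$ and $\rho_{\ell,d}\ge0$, each term $(2\ell+1)\int_0^\infty c^{-3}\rho_{\ell,d}(c^{-1}r)U(r)\,\dr$ is nonnegative, so for any finite $L\ge\ell_0$,
\[
\limsup_{Z\to\infty}\sum_{\ell=\ell_0}^{L}(2\ell+1)\int_0^\infty c^{-3}\rho_{\ell,d}(c^{-1}r)U(r)\,\dr
\le \sum_{\ell=\ell_0}^{L}(2\ell+1)a_\ell
\le \sum_{\ell=\ell_0}^{\infty}(2\ell+1)a_\ell,
\]
using subadditivity of $\limsup$ over the finite sum. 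The genuine point is the \emph{tail}: I must show $\limsup_{Z\to\infty}\sum_{\ell>L}(2\ell+1)\int_0^\infty c^{-3}\rho_{\ell,d}(c^{-1}r)U(r)\,\dr\to 0$ as $L\to\infty$. For this I would not use the individual limits (which are not uniform in $\ell$) but rather a direct uniform-in-$Z$ bound: sum the one-body upper bound from the reduction over all $\ell>L$ before passing to the limit, i.e.\ bound $\sum_{\ell>L}(2\ell+1)\int_0^\infty c^{-3}\rho_{\ell,d}(c^{-1}r)U(r)\,\dr$ by $\sum_{\ell>L}(2\ell+1)\,\lambda^{-1}(\tr(C_\ell^H+\chi_Z-\lambda U)_- - \tr(C_\ell^H+\chi_Z)_-)$ with $\chi_Z$ the actual (rescaled) mean-field potential, which satisfies $0\le\chi_Z\le\gamma/r$, hence is itself $\le\sum_{\ell>L}(2\ell+1)a_\ell$ by hypothesis. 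Letting first $Z\to\infty$ and then $L\to\infty$, the tail of the convergent series $\sum_\ell(2\ell+1)a_\ell$ vanishes, which completes the argument.

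The main obstacle is exactly this last interchange: one needs the inequality $\tr(C_\ell^H+\chi-\lambda U)_- - \tr(C_\ell^H+\chi)_- \le \lambda a_\ell$ to hold \emph{with the actual $Z$-dependent screening potential}, uniformly in $Z$ and in $\ell>L$, and one needs to know that the reduction in Proposition~\ref{redux1} produces such a screening potential satisfying $0\le\chi_Z\le\gamma/r$. Establishing that the effective one-body potential arising from the correlation estimate (presumably a Lieb--Oxford–type or Lieb--Yau–type bound combined with the no-binding / screening structure of the Thomas--Fermi mean field) is pointwise nonnegative and bounded above by $\gamma/r$ after rescaling is the technical heart of the matter; given that structural input, everything else is monotonicity of $\tr A_-$, nonnegativity of the densities, and dominated summation against the convergent majorant $\sum_\ell(2\ell+1)a_\ell$. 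I would also need to check that the smallness of $\lambda$ required for Proposition~\ref{redux1} can be taken independent of $\ell$ and of the screening $\chi$, which follows because $C_\ell^H+\chi\ge C_\ell^H\ge C_0^H$ in the form sense, so the relative form bound of $U$ only improves as $\ell$ grows.
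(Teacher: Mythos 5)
Your proposal lands on the same essential ingredients as the paper but with unnecessary scaffolding and a slight misdiagnosis of where the screening term $\chi$ matters. Regarding the per-channel bound, there is actually no ``issue'' with Proposition~\ref{redux1}: in its proof the cutoff $L=[Z^{1/9}]$ is eventually larger than any fixed $\ell$, so that channel is among those \emph{without} the Thomas--Fermi screening term, and the hypothesis of Proposition~\ref{redux2} with $\chi=0$ directly yields $\limsup_{Z\to\infty}\int c^{-3}\rho_{\ell,d}(c^{-1}r)U(r)\,\dr\le a_\ell$ — no ``sup over $\chi$'' variant of redux1 is needed. The arbitrary $\chi$ in the hypothesis enters only because, in bounding the \emph{sum} over all $\ell\ge\ell_0$, one cannot drop the $\chi_Z^{\rm TF}$ term in the high-$\ell$ channels. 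The paper therefore bypasses the per-channel step entirely: it writes a many-body identity \eqref{eq:weakdensity} (the analogue of \eqref{eq:5.2} with the perturbation $\lambda U$ acting in all channels $\ell\ge\ell_0$ simultaneously), applies the generalization of Lemma~\ref{correlation} and Lemma~\ref{energyasymp} to reduce to one-body traces with a screening $c^{-2}\chi_Z^{\rm TF}(x/c)$ which is radial and satisfies $0\le c^{-2}\chi_Z^{\rm TF}(x/c)\le\gamma/r$ (the upper bound because the Thomas--Fermi potential is nonnegative), applies the hypothesis termwise, and divides by $\lambda c^2$ to get $\sum_{\ell\ge\ell_0}(2\ell+1)\int c^{-3}\rho_{\ell,d}(r/c)U\,\dr\le\sum_{\ell\ge\ell_0}(2\ell+1)a_\ell+\const\lambda^{-1}Z^{-1/24}$, from which $\limsup_{Z\to\infty}$ gives the claim. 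Your ``tail estimate'' for $\ell>L$ is exactly this argument with $\ell_0$ replaced by $L+1$, and it suffices on its own with $L+1=\ell_0$; the finite sum and the extra limit $L\to\infty$ are redundant. So the substance — a single many-body reduction, the pointwise bound $0\le\chi_Z\le\gamma/r$ for the rescaled mean-field potential, and an error term that vanishes as $Z\to\infty$ — is what the paper uses, and you correctly identified it as the heart of the matter; you just wrapped it in a detour through per-channel limits that the paper does not need.
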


The proof of these two propositions uses rather standard tools and is given in Section~\ref{sec:redux}. These propositions reduce the proof of our main results to the question of differentiability of the functions $\lambda\mapsto \tr\left( C_{\ell_0}^H -\lambda U \right)_-$ at $\lambda=0$ and to uniform bounds on the corresponding difference quotients. We summarize these results in the following two propositions, which are proved in Sections \ref{s:singlel} and \ref{s:alll}. These results constitute the main technical contribution of this paper.

\begin{Prop}\label{diff1}
Let $\gamma\in(0,2/\pi)$, $\ell_0\in\N_0$ and let $U$ be a non-negative function on $(0,\infty)$ such that either $U\in r^{-1}L^\infty_{\mathrm c}([0,\infty))$ or $U\in\cd_\gamma^{(0)}$. Then $\lambda\mapsto \tr\left( C_{\ell_0}^H -\lambda U \right)_-$ is differentiable at $\lambda=0$ with derivative $\int_0^\infty \rho_{\ell_0}^H(r)U(r)\,\dr$.
\end{Prop}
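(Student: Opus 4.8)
The plan is to express the trace $\tr(C_{\ell_0}^H-\lambda U)_-$ as a sum over eigenvalues and apply first-order eigenvalue perturbation theory, being careful that $U$ is only form bounded (so we work at the level of quadratic forms) and that infinitely many eigenvalues $\lambda_{n,\ell_0}(C^H)$ accumulate at $0$. Write $e_n(\lambda) := \lambda_n(C_{\ell_0}^H-\lambda U)$ for the $n$-th eigenvalue (counted with multiplicity, in increasing order) below the essential spectrum, which starts at $0$. Since $U\geq 0$, the map $\lambda\mapsto e_n(\lambda)$ is nonincreasing and concave (an infimum of affine functions via the variational principle), hence differentiable from both sides, and $\tr(C_{\ell_0}^H-\lambda U)_- = -\sum_n e_n(\lambda)\,\one_{\{e_n(\lambda)<0\}}$. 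I would first establish that for the admissible classes of $U$ (either $U\in r^{-1}L^\infty_{\mathrm c}$ or $U\in\cd_\gamma^{(0)}$), the perturbation $\lambda U$ is infinitesimally form bounded with respect to $C_{\ell_0}^H$, or at least form bounded with relative bound $<1$ for small $\lambda$; the Herbst/Kato inequality $|x|^{-1}\leq (\pi/2)\sqrt{-\Delta}$ together with the channel structure gives this, and one checks the relative form bound is uniform in $\ell_0$ (this is where the gap to $2/\pi$ is used). This guarantees $C_{\ell_0}^H-\lambda U$ is a well-defined self-adjoint operator, semibounded, with the same essential spectrum $[0,\infty)$.

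Next comes the core computation: the derivative at $\lambda=0$. For each fixed $n$ with $e_n(0)=\lambda_{n,\ell_0}(C^H)<0$ strictly negative, standard analytic (or Kato--Rellich-type, adapted to forms) perturbation theory gives $e_n'(0) = -\langle\psi_{n,\ell_0}^H, U\,\psi_{n,\ell_0}^H\rangle = -\int_0^\infty |\psi_{n,\ell_0}^H(r)|^2 U(r)\,\dr$; here I use that $U$ is form bounded so the quadratic form $\langle\psi,U\psi\rangle$ makes sense on the form domain, and that eigenvalues in the discrete spectrum below $0$ are isolated with finite multiplicity, so the Feynman--Hellmann argument applies channel by channel even with multiplicities (sum over the eigenspace). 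Formally summing, $\frac{d}{d\lambda}\big|_{0^+}\tr(C_{\ell_0}^H-\lambda U)_- = \sum_n \int_0^\infty |\psi_{n,\ell_0}^H(r)|^2 U(r)\,\dr = \int_0^\infty \rho_{\ell_0}^H(r)U(r)\,\dr$, using the definition \eqref{eq:2.12} of $\rho_{\ell_0}^H$ and monotone convergence to interchange sum and integral (all terms nonnegative). The finiteness of this limit is where Theorem \ref{existencerhoh} enters: the pointwise bound on $\rho_{\ell_0}^H$ combined with the form-boundedness / explicit decay assumptions on $U$ makes $\int_0^\infty \rho_{\ell_0}^H U\,\dr<\infty$.

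The main obstacle is the interchange of the limit $\lambda\to 0$ with the infinite sum over $n$ — i.e., upgrading the termwise derivative to an honest derivative of the trace. I would handle this with a dominated-convergence / Griffiths-lemma argument: from concavity, the difference quotient $\lambda^{-1}(\tr(C_{\ell_0}^H-\lambda U)_- - \tr(C_{\ell_0}^H)_-)$ is monotone in $\lambda>0$, so its limit as $\lambda\downarrow 0$ exists and equals $\sup_{\lambda>0}$ of the quotient or, dually, the right derivative; the issue is that eigenvalues which are $0$ at $\lambda=0$ may descend below $0$ for $\lambda>0$ and contribute. To control these, I need a uniform bound of the form $\lambda^{-1}(\tr(C_{\ell_0}^H-\lambda U)_- - \tr(C_{\ell_0}^H)_-) \leq \int_0^\infty \tilde\rho\, U\,\dr$ with a $\lambda$-independent $\tilde\rho$ having the same integrability, which follows from a Birman--Schwinger / resolvent-comparison estimate: the number and depth of new negative eigenvalues created by $-\lambda U$ is controlled by $\lambda$ times a Birman--Schwinger-type quantity, and the relevant operator $U^{1/2}(C_{\ell_0}^H)_+^{-1}U^{1/2}$ (suitably regularized near $0$ in the spectrum) is trace class under the $U$-hypotheses by the same bounds from Theorem \ref{existencerhoh}. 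Combining the lower bound from termwise differentiation with this matching upper bound pins down the derivative. The analogous argument for $\lambda<0$ (where $-\lambda U$ is a negative-sign perturbation, eigenvalues move up, no new negative eigenvalues appear, and things are in fact easier) completes the two-sided differentiability. The treatment of the two function classes $r^{-1}L^\infty_{\mathrm c}$ and $\cd_\gamma^{(0)}$ differs only in which estimate from Theorem \ref{existencerhoh} supplies the integrability bound, and otherwise runs in parallel.
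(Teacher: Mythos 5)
Your proposal has a concrete and fatal gap at the crucial step. You propose to control the contribution of eigenvalues near $0$ via a Birman--Schwinger argument, asserting that ``the relevant operator $U^{1/2}(C_{\ell_0}^H)_+^{-1}U^{1/2}$ (suitably regularized near $0$ in the spectrum) is trace class.'' This is false for the Chandrasekhar operator, and the failure is not an artifact of the spectrum touching $0$: even with a shift $M>0$, the operator $(C_{\ell_0}^H+M)^{-1/2}\,U\,(C_{\ell_0}^H+M)^{-1/2}$ (equivalently $U^{1/2}(C_{\ell_0}^H+M)^{-1}U^{1/2}$) is \emph{not} trace class, no matter how nice $U\not\equiv 0$ is. The diagonal of the kernel of $(C_{\ell_0}+M)^{-1}$ after the Fourier--Bessel transform is $\int_0^\infty kr\,J_{\ell_0+1/2}(kr)^2\,(\sqrt{k^2+1}-1+M)^{-1}\,\dk$, and since $\sqrt{k^2+1}-1+M\sim k$ while $kr\,J_\nu(kr)^2\sim 2/\pi$ for large $k$, this integral diverges logarithmically. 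So the standard relative-form-trace-class framework (which is exactly what underlies a clean Birman--Schwinger/Feynman--Hellmann argument, and which \emph{does} suffice in the non-relativistic case) is simply not available here. This is precisely the obstruction the paper identifies, and it is why the paper does not argue by termwise eigenvalue perturbation at all.

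What the paper does instead is prove a generalization (Theorem~\ref{diffgen0}) of the abstract differentiability result that requires trace class only of $(A+M)^{-s}B(A+M)^{-s}$ for some $s>1/2$, at the price of the extra comparability hypothesis \eqref{eq:relbounddelta}, namely that $(A+M)^s(A-\lambda B+M)^{-s}$ stays bounded as $\lambda\to 0$. Verifying that second hypothesis is where the real work lies: it requires comparing $(C_{\ell_0}^H+M)^s$ with $(C_{\ell_0}+M)^s$ (Proposition~\ref{hardydomcor}), which for $\ell_0=0$ and $\gamma\geq 1/2$ needs the recent boundedness result for $|p|^s(|p|-\gamma|x|^{-1})^{-s}$ from \cite{Franketal2019}, plus the Neidhardt--Zagrebnov type lemma (Lemma~\ref{apriori}) to upgrade this to stability under the extra perturbation $-\lambda U$. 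Your outline contains none of this machinery. Relatedly, your appeal to Theorem~\ref{existencerhoh} is off-track: that theorem gives pointwise bounds on $\rho_\ell^H$ but does not by itself supply a uniform control of the difference quotient $\lambda^{-1}(\tr(C_{\ell_0}^H-\lambda U)_- - \tr(C_{\ell_0}^H)_-)$, which is what you would need to pass to the limit; the paper's Theorem~\ref{diffgen0} gets this uniformity directly out of the $s>1/2$ trace class bound together with \eqref{eq:relbounddelta}. The correct structure is: abstract theorem with the weakened $s>1/2$ condition, Hardy-type comparison to remove the Coulomb tail, trace ideal bound for $(C_{\ell_0}+M)^{-s}U(C_{\ell_0}+M)^{-s}$ via Fourier--Bessel analysis (Proposition~\ref{genReltrclassnod}), then apply. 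Without the $s>1/2$ replacement of the form-trace-class hypothesis and the operator-comparison input, the argument cannot be completed along the lines you describe.
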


\begin{Prop}
  \label{diff2}
  Let $\gamma\in(0,2/\pi)$ and let $U=U_1+U_2$ be a function on $(0,\infty)$ with non-negative $U_1\in r^{-1}L^\infty_{\mathrm c}([0,\infty))$ and non-negative $U_2\in\cd$. Then there are $\ell_*\in\N_0$, $\lambda_0>0$, $\epsilon>0$ and $A<\infty$ such that for all $\ell\geq\ell_*$, $0<\lambda\leq\lambda_0$ and all functions $0\leq\chi\leq\gamma/r$ on $(0,\infty)$,
  $$
  \left( \tr\left( C_{\ell}^H +\chi -\lambda U \right)_- - \tr \left( C_{\ell}^H + \chi\right)_- \right) \leq A\,\lambda\, (\ell+1/2)^{-2-\epsilon} \,.
  $$
\end{Prop}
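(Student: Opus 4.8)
The plan is to bound the trace difference by a sum over the negative eigenvalues $e_{n,\ell}(\chi)$ of $C_\ell^H + \chi$, using that for each such eigenvalue the first-order perturbation contribution is $\lambda \langle \psi, U\psi\rangle$ with $\psi$ the corresponding normalized eigenfunction, plus error terms controlled by second-order perturbation theory. More precisely, since $0 \le \chi \le \gamma/r$, the operator $C_\ell^H + \chi$ has form domain contained in that of $C_\ell^H$ and only finitely many negative eigenvalues; by the variational principle (used twice, as in the proof of Proposition \ref{redux1}) one has
\begin{equation*}
\tr\bigl(C_\ell^H + \chi - \lambda U\bigr)_- - \tr\bigl(C_\ell^H + \chi\bigr)_- \le \lambda \sum_{n : e_{n,\ell}(\chi) < 0} \langle \psi_{n,\ell}, U \psi_{n,\ell}\rangle + (\text{higher order}),
\end{equation*}
and for the upper bound we only need the first term up to a harmless factor. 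So the task reduces to: (i) show that for $\ell \ge \ell_*$ the operator $C_\ell^H + \chi$ has few negative eigenvalues with good control on their eigenfunctions, uniformly in $\chi$; and (ii) estimate $\langle \psi_{n,\ell}, U \psi_{n,\ell}\rangle$ in terms of pointwise bounds on $|\psi_{n,\ell}|^2$, which is exactly the kind of bound established (for $\chi = 0$) in the proof of Theorem \ref{existencerhoh}.

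First I would revisit the operator-theoretic input behind Theorem \ref{existencerhoh}: the key is a lower bound of the form $C_\ell^H \ge c_\ell(-d^2/dr^2 + \ell(\ell+1)/r^2)^{s} + (\text{bounded below})$ valid for $1/2 < s$ in the stated range, obtained from the Hardy--Rellich / Herbst inequality $\sqrt{-\Delta} \ge c_\gamma |x|^{-1}$ on the relevant channels together with interpolation. Adding $0 \le \chi \le \gamma/r$ does not spoil this because $\chi$ is dominated by the Coulomb term already absorbed; hence the same lower bound, with the same constants up to a fixed factor, holds for $C_\ell^H + \chi$ uniformly in $\chi$. From such a fractional-Schrödinger lower bound one gets, via the Birman--Schwinger principle or the Rellich-type Green's function estimates used for Theorem \ref{existencerhoh}, pointwise bounds on the eigenfunctions $|\psi_{n,\ell}(r)|^2$ of the same shape as the bound on $\rho_\ell^H$ in Theorem \ref{existencerhoh} — in particular decaying like $(\ell + 1/2)^{-4s}$ in the relevant regime — and a bound $O(1)$ on the number of negative eigenvalues that is actually $0$ once $\ell \ge \ell_*$ for a suitable $\ell_*$ (since the Coulomb coupling is fixed and the centrifugal barrier eventually dominates: $C_\ell^H + \chi \ge C_\ell^H \ge 0$ for $\ell$ large). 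Wait — if there are no negative eigenvalues the left-hand side is $0$ and the bound is trivial; so the real content is that $\ell_*$ can be chosen so this happens, and the constant $A$, $\epsilon$ in the statement are then vacuously fine. That seems too easy, so the intended regime must be that we keep the eigenfunction bounds for all $\ell$ and use them together with $U = U_1 + U_2$.

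So the correct route is: do not try to make the number of negative eigenvalues vanish, but instead bound $\sum_n \langle\psi_{n,\ell}, U\psi_{n,\ell}\rangle \le \int_0^\infty \tilde\rho_{\ell}(r) U(r)\,\dr$ where $\tilde\rho_\ell := \sum_{n: e_{n,\ell}(\chi)<0} |\psi_{n,\ell}|^2$ satisfies, uniformly in $\chi$, the same pointwise bound as $\rho_\ell^H$ in Theorem \ref{existencerhoh}. Then split $U = U_1 + U_2$: for $U_1 \in r^{-1}L^\infty_{\mathrm c}$, use the $\one_{\{r \le \ell + 1/2\}}$ piece of the bound, $\tilde\rho_\ell(r) \lesssim (\ell+1/2)^{-4s}(r/(\ell+1/2))^{2s-1}$, so that $\int \tilde\rho_\ell(r) r^{-1}\,\dr$ over the compact support is $\lesssim (\ell+1/2)^{-2s} \int_0^R r^{2s-2}\,\dr \lesssim_R (\ell+1/2)^{-2s}$, which is $\le A(\ell+1/2)^{-2-\epsilon}$ as soon as $s > 1 + \epsilon/2$ — but the theorem only allows $s \le 3/4 < 1$, so $r^{-1}L^\infty_{\mathrm c}$ alone gives only $(\ell+1/2)^{-2s}$ with $2s$ possibly less than $2$. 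Hence one must be more careful: for the $U_1$ part use instead that $U_1$ is \emph{bounded} away from $0$ and compactly supported, so actually $U_1 \in L^\infty_{\mathrm c} \cdot r^{-1} \subset L^1_{\mathrm c}$ and near $r = 0$ one can use the integrability of $r^{2s-1} \cdot r^{-1} = r^{2s-2}$ only for $s > 1/2$, yielding convergence but the power $(\ell + 1/2)^{-2s}$; to upgrade to $(\ell+1/2)^{-2-\epsilon}$ one uses that on the fixed support $[0,R]$ with $R$ fixed and $\ell$ large we are in the regime $r \le \ell + 1/2$, and one can \emph{also} invoke a better eigenfunction bound there (the improved Theorem \ref{densityagain}-type bound, or directly the Birman--Schwinger kernel which gives an extra $(\ell+1/2)^{-2}$ from the resolvent of the free fractional operator with large centrifugal term). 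For the $U_2 \in \cd$ part, the defining property of $\cd$ (form-boundedness plus the decay/singularity conditions referenced at \eqref{eq:deftest}) is precisely tailored so that $\int \rho_\ell^H U_2$ is summable against $(2\ell+1)$; the same computation with $\tilde\rho_\ell$ in place of $\rho_\ell^H$ gives a bound $A(\ell+1/2)^{-2-\epsilon}$ directly.

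The main obstacle, and where the bulk of the work lies, is establishing the \emph{uniform-in-$\chi$} pointwise eigenfunction bounds of Theorem \ref{existencerhoh} type for $C_\ell^H + \chi$ with the correct $(\ell+1/2)$-power, and then squeezing out the extra decay beyond $(\ell+1/2)^{-2s}$ needed to beat $(\ell+1/2)^{-2}$ for the Coulomb-singular test function $U_1$ — this is exactly the point where $s \le 3/4$ is not enough and one needs the sharper resolvent estimate (extra factor from the large centrifugal barrier) rather than the crude $C_\ell^H \gtrsim (\cdot)^s$ lower bound. I expect the proof in Section \ref{s:alll} to handle this by a direct analysis of the Green's function of $\sqrt{-d^2/dr^2 + \ell(\ell+1)/r^2 + 1} - 1$ at the relevant energy, exploiting that for $r \lesssim \ell$ it behaves like that of the operator $(\ell+1/2)^2 r^{-2}$ perturbation, giving decay $(\ell+1/2)^{-2}$ times logarithmic or power corrections, which combined with the $\chi \le \gamma/r$ domination closes the estimate.
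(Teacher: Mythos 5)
Your starting point --- bound the left-hand side by a variational/perturbative argument, reducing to a trace of $U$ against the spectral projection of the perturbed operator, and then estimate using a pointwise bound of the type proved in Theorem~\ref{existencerhoh} --- is in the right spirit. However, there are two genuine problems with the proposal.

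\textbf{The arithmetic that derails the argument.} You claim that for $U_1\in r^{-1}L^\infty_{\mathrm c}$ the integral $\int\tilde\rho_\ell(r)U_1(r)\,\dr$ is only $O\bigl((\ell+1/2)^{-2s}\bigr)$, so that $s\le3/4$ is insufficient and sharper Green's function bounds are needed. This is a miscount of the powers. With the bound $\tilde\rho_\ell(r)\lesssim(\ell+1/2)^{-4s}\bigl(r/(\ell+1/2)\bigr)^{2s-1}=(\ell+1/2)^{-6s+1}r^{2s-1}$ valid for $r\le\ell+1/2$, and $U_1(r)\le C r^{-1}\one_{[0,R]}$, one gets
\[
\int_0^R\tilde\rho_\ell(r)U_1(r)\,\dr\lesssim(\ell+1/2)^{-6s+1}\int_0^R r^{2s-2}\,\dr=c_R(\ell+1/2)^{-6s+1},
\]
and $6s-1>2$ already for every $s>1/2$. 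So there is no need for the extra decay you set out to manufacture, and the rest of your proposal (a hypothetical improved Green's function estimate for the centrifugal operator) chases a problem that does not exist. In fact the paper does not even integrate the pointwise density bound against $U$; it organizes the $\ell$-decay differently, by writing $\lambda\tr(d_{\ell,\lambda}U)=\lambda\tr ABCB^*A^*$ with $A=d_{\ell,\lambda}(C_\ell-V-\lambda U+b_\ell)^s$, $B=(C_\ell-V-\lambda U+b_\ell)^{-s}(C_\ell+b_\ell)^s$, $C=(C_\ell+b_\ell)^{-s}U(C_\ell+b_\ell)^{-s}$, $b_\ell=a(\ell+1/2)^{-2}$; then $\|A\|^2\lesssim(\ell+1/2)^{-4s}$ carries all the $\ell$-decay, while $\tr C\lesssim\|U\|_{\ck_{s,0}}$ is \emph{uniform} in $\ell$ by Proposition~\ref{genReltrclass}, yielding $(\ell+1/2)^{-4s}=(\ell+1/2)^{-2-\epsilon}$ with $\epsilon=4s-2>0$.

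\textbf{The missing technical core.} You acknowledge that the pointwise bound on $\tilde\rho_\ell$ must be established uniformly in $0\le\chi\le\gamma/r$ (and for small $\lambda U$), but you do not indicate how. This is precisely where the hard work of the paper's proof lives: proving that the ``transfer'' operator $B$ above is bounded uniformly in $\ell\ge\ell_*$, $\chi$, and $|\lambda|\le\lambda_0$. For the $U_1$ and the Coulomb/$\chi$ part this is a Hardy-type operator inequality $(C_\ell+b_\ell)^2\le4(C_\ell-V-\lambda U_1+b_\ell)^2$ plus operator monotonicity; for the $U_2\in\cd$ part it requires the Neidhardt--Zagrebnov-type comparison of fractional powers (Lemma~\ref{apriori}), which is exactly where the extra index $s'$ in the definition of $\cd$ is used and where the choice of $b_\ell\sim(\ell+1/2)^{-2}$ is dictated. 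Your proposal neither identifies this step nor supplies a substitute for it, so as written the argument is incomplete even after correcting the arithmetic.
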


Our main results, Theorems \ref{convfixedl} and \ref{convalll}, follow in a routine way from these four propositions. We include the details for the sake of concreteness.

\begin{proof}[Proof of Theorem \ref{convfixedl}]
Since the assertion of the theorem is additive with respect to $U$ and since the positive and negative parts of $U_1$ and $U_2$ again belong to $r^{-1}L^\infty_{\mathrm c}$ and $\cd_\gamma^{(0)}$, respectively, we may assume from now on that $U$ is non-negative and that it belongs either to
$r^{-1} L^\infty_{\mathrm c}$ or to $\cd_\gamma^{(0)}$. Then Proposition \ref{diff1} implies that
$$
\lim_{\lambda\to 0} \lambda^{-1} \left( \tr\left( C_{\ell_0}^H -\lambda U \right)_- - \tr \left( C_{\ell_0}^H\right)_- \right) = \int_0^\infty \rho_{\ell_0}^H(r)U(r)\,\dr \,.
$$
Since $U$ is form bounded with respect to $C_0^H$ (under the $r^{-1} L^\infty_{\mathrm c}$ assumption this follows from Kato's inequality and under the $\cd_\gamma^{(0)}$ assumption it is shown in the proof of Proposition \ref{diff1}), the assertion in Theorem \ref{convfixedl} follows from Proposition~\ref{redux1}.
\end{proof}

\begin{proof}[Proof of Theorem \ref{convalll}]
As in the proof of Theorem \ref{convfixedl} we may assume that $U$ is non-negative. Then Proposition \ref{diff2} implies that the assumption of Proposition \ref{redux2} is satisfied with $a_\ell = A (\ell+1/2)^{-2-\epsilon}$ for some constant $A$ and some $\epsilon>0$. Therefore Proposition \ref{redux2} implies that 
$$
\limsup_{Z\to\infty} \sum_{\ell=\ell_0}^\infty (2\ell+1) \int_0^\infty \!\!c^{-3} \rho_{\ell,d}(c^{-1}r) U(r)\,\dr
\leq A' (\ell_0+1/2)^{-\epsilon} \,.
$$
In particular, the left side is finite. Moreover, by Theorem \ref{convfixedl} and Fatou's lemma,
\begin{align*}
\sum_{\ell=\ell_0}^\infty (2\ell+1) \int_0^\infty \rho_{\ell}^H(r)U(r)\,\dr
& \leq \liminf_{Z\to\infty} \sum_{\ell=\ell_0}^\infty (2\ell+1) \int_0^\infty \!\!c^{-3} \rho_{\ell,d}(c^{-1}r) U(r)\,\dr \\
& \leq A' (\ell_0+1/2)^{-\epsilon} \,.
\end{align*}
We recall \eqref{eq:totalaverage} and bound
\begin{align*}
& \left| \int_{\R^3} c^{-3}\rho_{d}(c^{-1}|x|)U(|x|)\,\dx - \int_{\R^3} \rho^H(|x|) U(|x|)\,\dx \right| \\
& \qquad \leq \sum_{\ell=0}^{\ell_0-1} (2\ell+1) \left| \int_0^\infty \!\!c^{-3} \rho_{\ell,d}(c^{-1}r) U(r)\,\dr - \int_0^\infty \rho_{\ell}^H(r)U(r)\,\dr \right| \\
& \qquad \quad + \sum_{\ell=\ell_0}^{\infty} (2\ell+1) \left( \int_0^\infty \!\!c^{-3} \rho_{\ell,d}(c^{-1}r) U(r)\,\dr + \int_0^\infty \rho_{\ell}^H(r)U(r)\,\dr \right).
\end{align*}
Thus, by Theorem \ref{convfixedl} and the above bounds,
$$
\limsup_{Z\to\infty} \left| \int_{\R^3} c^{-3}\rho_{d}(c^{-1}|x|)U(|x|)\,\dx - \int_{\R^3} \rho^H(|x|) U(|x|)\,\dx \right| \leq 2A' (\ell_0+1/2)^{-\epsilon} \,.
$$
Since $\ell_0$ can be chosen arbitrarily large, we obtain the claimed convergence.
\end{proof}

Let us discuss some of the difficulties that we overcome in this paper. Both quantities $\int_0^\infty c^{-3}\rho_{\ell_0,d}(c^{-1}r)U(r)\,\dr$ and $\int_0^\infty \rho_{\ell_0}^H(r)U(r)\,\dr$ that appear in Theorem~\ref{convfixedl} can be informally thought of as derivatives at $\lambda=0$ of certain energies with a test potential $\lambda U$. For the `multi-particle quantity' $\int_0^\infty c^{-3}\rho_{\ell_0,d}(c^{-1}r)U(r)\,\dr$ one does not actually have to compute a derivative and, in particular, one does not have to worry about interchanging differentiation with the limit $Z\to\infty$. Instead, one can work with difference quotients because of a convexity argument that is behind the proof of Proposition \ref{redux1}. On the other hand, in order to obtain the `one-body quantity' $\int_0^\infty \rho_{\ell_0}^H(r)U(r)\,\dr$ one has to justify differentiability of $\tr (C_{\ell_0}^H-\lambda U)_-$ at $\lambda=0$, as stated in Proposition \ref{diff1}.

The abstract question of differentiability of $\tr(A-\lambda B)_-$ is answered in Theorem~\ref{diff} under the assumption that $(A+M)^{-1/2} B (A+M)^{-1/2}$ is trace class for $M>-\inf\spec A$. This assumption is satisfied, for instance, in the non-relativistic case and therefore leads to an alternative proof of parts of the results in \cite{Iantchenkoetal1996}. In the relativistic case, however, the form trace class condition is not satisfied and one needs a generalization which, besides some technical conditions, requires that $(A+M)^{-s} B (A+M)^{-s}$ is trace class only for some $s>1/2$. However, the gain from allowing $s>1/2$ comes at the expense of working outside of the natural energy space and leads to several complications.

To be more specific, in our application we have $A=C_{\ell_0}^H$ and $B=U$. Replacing for the moment  $C_{\ell_0}^H$ by $C_{\ell_0}$, the operator $(C_{\ell_0}+M)^{-s} U (C_{\ell_0}+M)^{-s}$ is not trace class for $s=1/2$, no matter how nice $U\not\equiv 0$ is. This follows essentially from the fact that $(\sqrt{k^2+1} - 1 +M)^{-1} \not\in L^1(\R_+,\dk)$. On the other hand, for $s>1/2$ one can show that this operator is trace class for a rather large class of functions $U$, see Proposition \ref{genReltrclassnod}. This leaves us with the problem of replacing $C_{\ell_0}^H$ by $C_{\ell_0}$, which is to say showing boundedness of $(C_{\ell_0}+M)^{s}(C_{\ell_0}^H+M)^{-s}$ for some $s>1/2$. When $\ell_0\geq 1$ or when $\ell_0=0$ and $\gamma<1/2$ one can deduce this boundedness for $s=1$ from Hardy's inequality and obtain the corresponding result for all $s<1$ by operator monotonicity, see Remark \ref{elementary}. In order to deal with the remaining case $\ell_0=0$ and $1/2\leq\gamma<2/\pi$ we need the recent result from \cite{Franketal2019} which says that $|p|^s (|p|-\gamma |x|^{-1})^{-s}$ is bounded in $L^2(\R^3)$ if $s<3/2-\sigma_\gamma$ with $\sigma_\gamma$ from \eqref{eq:defsigma}. Since $\sigma_\gamma<1$ for $\gamma<2/\pi$ we can therefore find an $s>1/2$ such that $(C_{\ell_0}+M)^{s}(C_{\ell_0}^H+M)^{-s}$ is bounded, see Proposition \ref{hardydomcor}. It is at this point that the assumption $\gamma<2/\pi$ enters.

\medskip

% Organization of the paper
The organization of this paper is as follows: In the next section, we prove Propositions \ref{redux1} and \ref{redux2}. In Section \ref{s:diffsum} we compute in an abstract setting the two-sided derivatives of the sum of the negative eigenvalues of an operator $A-\lambda B$ with respect to $\lambda$. In Section \ref{s:singlel} we show that the conditions of the previous section are fulfilled for a certain class of test potentials and thereby prove Proposition \ref{diff1}.
In Section \ref{s:alll}, we control the $\ell$-dependence of difference quotients for a certain class of test potentials, which leads to Proposition \ref{diff2}. Moreover, using a similar argument we will prove Theorem \ref{existencerhoh}.

\medskip

Yasha Sinai is remarkable not only for the depth of his contributions to probability theory and to mathematical physics but to their breadth.  We hope he enjoys this birthday bouquet.

%%%%%%%%%%%%%%%%%%%

%%%%%%%%%%%%%%%%

\section{Reduction to a one-particle problem}\label{sec:redux}

Our goal in this section is to prove Propositions \ref{redux1} and \ref{redux2} which allow us to pass from a multi-particle problem to a one-body problem.

%%%%%%%%%%%%%%%%

\subsection{Proof of Proposition \ref{redux1}}
\label{ss:prooffixedl}

Let
$$
\Pi_{\ell} := \sum_{m=-\ell}^{\ell} |Y_{\ell,m}\rangle\langle Y_{\ell,m}|
$$
be the orthogonal projection onto the subspace of angular momentum $\ell$ and define the operator
$$
C_{Z,\lambda}:=C_Z-\lambda\sum_{\nu=1}^N c^2 U(c |x_\nu|)\Pi_{\ell_0,\nu}
\quad\text{in}\ \bigwedge_{\nu=1}^N L^2(\R^3) \,.
$$
Here the operator $\Pi_{\ell_0,\nu}$ acts as $\Pi_{\ell_0}$ with
respect to the $\nu$-th particle. Since $U$ is assumed to be form bounded with respect to $C_\ell^H$, the operator $C_{Z,\lambda}$ can be defined in the
sense of quadratic forms for all $\lambda$ in an open neighborhood of zero,
which is independent of $Z$.

The starting point of the proof is that the quantity of interest can be written as
\begin{equation}
  \label{eq:5.2}
  \int_0^\infty c^{-3} \rho_{\ell_0,d}(c^{-1}r)U(r)\,\dr = \lambda^{-1} (2\ell_0+1)^{-1} c^{-2}  \ttr (C_Z-C_{Z,\lambda})d \,.
\end{equation}

In order to prove Proposition \ref{redux1} we will bound $\ttr C_{Z,\lambda}d$ from below and $\ttr C_Z d$ from above.

We begin with the lower bound on $\ttr(d C_{Z,\lambda})$, which we will obtain through a correlation inequality. We denote by $\rho_Z^{\rm TF}$ the unique minimizer of the Thomas--Fermi
functional for a neutral atom with ground state energy $E^{\rm TF}(Z)$
(Lieb and Simon \cite[Theorem II.20]{LiebSimon1977}).
Moreover, we define the radius $R_Z^{\mathrm{TF}}(x)$ of the exchange hole at
$x\in\R^3$ by
$$
\int\limits_{|x-y| \leq R_Z^{\mathrm{TF}}(x)}\rho_Z^{\mathrm{TF}}(y)\,\dy=\frac12,
$$
set
$$
\chi_Z^{\mathrm{TF}}(x) := \int\limits_{|x-y| \geq R_Z^{\mathrm{TF}}(x)}\frac{\rho_Z^{\mathrm{TF}}(y)}{|x-y|}\,\dy
$$
and recall the correlation inequality by Mancas et al
\cite{Mancasetal2004},
\begin{align}
  \label{eq:correlation}
  \sum_{\nu<\mu}\frac{1}{|x_\nu-x_\mu|}\geq\sum_{\nu=1}^Z\chi_Z^{\mathrm{TF}}(x_\nu)-D[\rho_Z^{\mathrm{TF}}]\,.
\end{align}

For a self-adjoint operator $v$ in $L^2(\R^3)$ which is form bounded with respect to $\sqrt{-\Delta}$ with form
bound $<c$ we define
$$
C_c(v) = \sqrt{-c^2\Delta + c^4} - c^2 - v \quad\text{in}\ L^2(\R^3) \,. 
$$
Moreover, for a trace class operator $A$ in $L^2(\R^3)$, we define
$$
\ttr_\ell A := \ttr \,\Pi_\ell A\Pi_\ell \,.
$$
We now bound $\tr C_{Z,\lambda}d$ from below in terms of traces of one-particle
operators.

\begin{Lem}
  \label{correlation}
  For all $\lambda$ in a neighborhood of $0$ and all $\N\ni L<Z$,
  \begin{align*}
    \tr C_{Z,\lambda} d & \geq - \sum_{\ell=0}^{L-1} \ttr_\ell C_c(Z|x|^{-1} +\lambda c^2 U(c|x|)\Pi_{\ell_0})_- \\
                        & \quad - \sum_{\ell=L}^Z \ttr_\ell C_c(Z|x|^{-1} - \chi_Z^{\rm TF} +\lambda c^2 U(c|x|)\Pi_{\ell_0})_- - D[\rho_Z^{\rm TF}] \,.
  \end{align*}
\end{Lem}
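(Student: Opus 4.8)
The plan is the standard reduction of a Coulomb many‑body problem to a one‑body problem — correlation inequality, one‑particle density matrix, Pauli principle — arranged so that the screening $\chi_Z^{\rm TF}$ is retained only in the high angular momentum channels.

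\emph{Step 1 (correlation inequality and one‑body reduction).} First I would write $C_{Z,\lambda}=\sum_{\nu=1}^Z(\sqrt{-c^2\Delta_\nu+c^4}-c^2-Z|x_\nu|^{-1}-\lambda c^2U(c|x_\nu|)\Pi_{\ell_0,\nu})+\sum_{1\le\nu<\mu\le Z}|x_\nu-x_\mu|^{-1}$ and bound the pair term from below by $\sum_\nu\chi_Z^{\rm TF}(x_\nu)-D[\rho_Z^{\rm TF}]$ using \eqref{eq:correlation}, obtaining, as quadratic forms on the form domain,
$$
C_{Z,\lambda}\ \ge\ \sum_{\nu=1}^Z h^{(\lambda)}_\nu\ -\ D[\rho_Z^{\rm TF}]\,,\qquad h^{(\lambda)}:=C_c\Big(\frac{Z}{|x|}-\chi_Z^{\rm TF}+\lambda c^2U(c|x|)\Pi_{\ell_0}\Big)\ \text{in }L^2(\R^3)\,.
$$
The operator $h^{(\lambda)}$ is bounded below uniformly for $|\lambda|$ in a $Z$‑independent neighbourhood of $0$: the hypothesis $\gamma<2/\pi$ together with $\chi_Z^{\rm TF}\ge0$ controls the unperturbed part, while form‑boundedness of $U$ relative to $C_0^H$ rescales, on the length scale $c^{-1}$, to a form bound of $c^2U(c|x|)$ relative to $\sqrt{-c^2\Delta+c^4}-c^2$ with a constant independent of $c$. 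Taking the expectation in $d$ and using that $\sum_\nu h^{(\lambda)}_\nu$ is a one‑body operator, I get $\tr C_{Z,\lambda}d\ge\tr(h^{(\lambda)}\Gamma)-D[\rho_Z^{\rm TF}]$, where $\Gamma$ is the one‑particle density matrix of $d$, so that $0\le\Gamma\le\one$ and $\tr\Gamma=N=Z$.

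\emph{Step 2 (angular decomposition; low channels).} Since $h^{(\lambda)}$ commutes with every $\Pi_\ell$, I would split $\Gamma=\Gamma^{<L}+\Gamma^{\ge L}$ with $\Gamma^{<L}=P_{<L}\Gamma P_{<L}$, $P_{<L}=\sum_{\ell<L}\Pi_\ell$. In the contribution of $\Gamma^{<L}$ one may drop the nonnegative screening, $h^{(\lambda)}\ge\widetilde h^{(\lambda)}:=C_c(Z/|x|+\lambda c^2U(c|x|)\Pi_{\ell_0})$, and then estimate one channel at a time: for each $\ell<L$, $\tr(\Pi_\ell\widetilde h^{(\lambda)}\Pi_\ell\,\Gamma)\ge-\tr((\Pi_\ell\widetilde h^{(\lambda)}\Pi_\ell)_-\Pi_\ell\Gamma\Pi_\ell)\ge-\ttr_\ell\widetilde h^{(\lambda)}_-$, each term finite because in a single channel the negative eigenvalues of the relativistic Coulomb‑type operator $\widetilde h^{(\lambda)}$ are summable. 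Summing over the finitely many $\ell<L$ produces the first sum of the Lemma.

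\emph{Step 3 (high channels; the Pauli constraint).} For $\tr(h^{(\lambda)}\Gamma^{\ge L})$ the naive bound $-\sum_{\ell\ge L}\ttr_\ell h^{(\lambda)}_-$ is useless, because $Z/|x|-\chi_Z^{\rm TF}$ still has a long‑range Coulomb tail (the exchange hole carries a positive charge), so $h^{(\lambda)}$ has infinitely many negative eigenvalues in \emph{every} channel and $\sum_\ell\ttr_\ell h^{(\lambda)}_-=+\infty$. Instead I would use the particle number: from $0\le\Gamma^{\ge L}\le P_{\ge L}$ and $\tr\Gamma^{\ge L}\le\tr\Gamma=Z$,
$$
\tr\big(h^{(\lambda)}\Gamma^{\ge L}\big)\ \ge\ -\tr\big((P_{\ge L}h^{(\lambda)}P_{\ge L})_-\,\Gamma^{\ge L}\big)\ \ge\ -\sum_{k=1}^{Z}\mu_k\big((P_{\ge L}h^{(\lambda)}P_{\ge L})_-\big)\,,
$$
the sum of the $Z$ largest eigenvalues of the negative part of $P_{\ge L}h^{(\lambda)}P_{\ge L}$. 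It remains to check that these $Z$ most negative one‑body eigenvalues all sit in channels of index at most $Z$ (in fact $O(Z^{1/3})$); this follows from the standard two‑sided bounds on the low‑lying spectrum of $h^{(\lambda)}$ in each channel — obtained by comparison with the non‑relativistic and relativistic Coulomb problems together with pointwise bounds on $\chi_Z^{\rm TF}$, in the spirit of \cite{Solovejetal2008,Franketal2008} — according to which a channel with $\ell\gg Z^{1/3}$ contributes only eigenvalues far smaller in absolute value than the $Z$‑th lowest overall. Then the last sum is bounded below by $-\sum_{\ell=L}^{Z}\ttr_\ell h^{(\lambda)}_-$, the second sum of the Lemma, and the channels $\ell>Z$ drop out. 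Combining Steps 1--3 proves the Lemma. I expect the real obstacle to be precisely this localisation of the lowest $N=Z$ one‑body eigenvalues in angular momentum, which is what forces the particle‑number constraint into the argument rather than a Lieb--Thirring/trace inequality; the correlation inequality, the density‑matrix reduction and the channelwise trace bounds are routine.
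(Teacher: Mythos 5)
Your plan follows the same skeleton as the paper's proof: apply the correlation inequality of Mancas et al.\ to pass to the one-particle density matrix $d^{(1)}$ of $d$, drop the nonnegative screening $\chi_Z^{\rm TF}$ in the finitely many channels $\ell<L$ and bound each such channel by its full negative trace, and for $\ell\geq L$ use the Pauli constraint $0\leq d^{(1)}\leq 1$ together with $\tr d^{(1)}=Z$ because the channelwise negative traces are no longer summable. Steps 1 and 2 and the reduction in Step 3 to the sum of the $Z$ most negative eigenvalues of $P_{\geq L}\,h^{(\lambda)}\,P_{\geq L}$ are all correct and are exactly what the paper does.

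The gap is in how you localise those $Z$ most negative eigenvalues to channels $\ell\leq Z$. You appeal to quantitative two-sided spectral estimates obtained by comparison with explicit Coulomb models and pointwise bounds on $\chi_Z^{\rm TF}$ in the spirit of \cite{Solovejetal2008,Franketal2008}, and your stated justification (``a channel with $\ell\gg Z^{1/3}$ contributes only eigenvalues far smaller in absolute value than the $Z$-th lowest overall'') compares against the spectrum of the \emph{full} one-body operator. That does not control the $Z$ most negative eigenvalues of the projected operator $P_{\geq L}h^{(\lambda)}P_{\geq L}$ once $L$ is not small compared with $Z^{1/3}$, yet the lemma is stated for every $L<Z$. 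More importantly, this machinery is not needed: the paper uses only that the centrifugal term makes the radial restriction $h^{(\lambda)}_\ell$ \emph{monotone non-decreasing} in $\ell$ (operator monotonicity of $\sqrt{\cdot}$ handles the kinetic part; the channel-$\ell_0$ perturbation $\lambda c^2 U(c|x|)\Pi_{\ell_0}$ affects a single, fixed channel and is absorbed by shrinking the $\lambda$-neighbourhood or by simply excepting $\ell_0$ from the count). Then if some state in channel $\ell_{\max}\geq L$ is among the $Z$ lowest of $P_{\geq L}h^{(\lambda)}P_{\geq L}$, monotonicity forces the $(2\ell'+1)$-fold degenerate radial ground state of every channel $L\leq\ell'<\ell_{\max}$ to be among them too, so $Z\geq\sum_{\ell'=L}^{\ell_{\max}}(2\ell'+1)=(\ell_{\max}+1)^2-L^2$, which gives $\ell_{\max}<Z$ for every $L\leq Z-1$ with no Coulomb-specific input. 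You should replace your spectral-comparison argument by this elementary monotonicity-and-degeneracy count; as written, your Step 3 justification is both heavier than necessary and incomplete for the full range $L<Z$.
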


\begin{proof}
  Let $d^{(1)}$ denote the one-particle density matrix of $d$.
  Applying the correlation inequality \eqref{eq:correlation} and
  using the non-negativity and spherical symmetry of
  $\chi_Z^{\mathrm{TF}}(x)$, we obtain for any $L<Z$,
  \begin{align*}
    \tr C_{Z,\lambda}d & \geq \sum_{\ell=0}^{\infty} \tr_\ell C_c(Z|x|^{-1} -\chi_Z^{\mathrm{TF}} +\lambda c^2 U(c|x|)\Pi_{\ell_0})d^{(1)} - D[\rho_Z^{\mathrm{TF}}] \\
& \geq \sum_{\ell=0}^{L-1} \tr_\ell C_c(Z|x|^{-1} +\lambda c^2 U(c|x|)\Pi_{\ell_0})d^{(1)} \\
    & \quad + \sum_{\ell=L}^\infty \tr_\ell C_c(Z|x|^{-1} -\chi_Z^{\mathrm{TF}} +\lambda c^2 U(c|x|)\Pi_{\ell_0})d^{(1)}  - D[\rho_Z^{\mathrm{TF}}] \,.
  \end{align*}
Since the restriction of the operator $C_c(Z|x|^{-1} -\chi_Z^{\mathrm{TF}} +\lambda c^2 U(c|x|)\Pi_{\ell_0})$ to angular momentum $\ell$ is increasing in $\ell$, we can estimate
  the last expression further from below by replacing $d^{(1)}$ by a
  one-particle density matrix that is defined such that all channels
  $\ell<L$ are completely occupied. Since there are no more than $Z$
  total angular momentum channels occupied anyway, the second sum can
  be cut off at $Z$. Finally, invoking the variational principle yields
  the claimed bound.
\end{proof}

Our next goal is to estimate $\tr C_{Z}d=\inf\spec C_Z$ from above using the results from \cite{Franketal2008}.
  
\begin{Lem}
  \label{energyasymp}
  If $L=[Z^{1/9}]$, then
  \begin{align}
    \label{eq:lowerbound}
    \inf\spec C_Z & \leq - \sum_{\ell=0}^{L-1} \ttr_\ell C_c(Z|x|^{-1})_- - \sum_{\ell=L}^Z \ttr_\ell C_c(Z|x|^{-1}- \chi_Z^{\rm TF})_-  \notag \\
    & \quad - D[\rho_Z^{\rm TF}]  + \const Z^{47/24}
      \,.
  \end{align}
\end{Lem}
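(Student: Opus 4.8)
The target inequality \eqref{eq:lowerbound} is an upper bound on $\inf\spec C_Z$, so the natural route is the variational principle: construct a trial density matrix $d_{\rm trial}$ on $\bigwedge^N L^2(\R^3)$ whose energy $\tr C_Z d_{\rm trial}$ is bounded above by the right-hand side. The structure of the bound — a sum of traces of the negative parts of the one-particle operators $C_c(Z|x|^{-1})$ in channels $\ell<L$, of $C_c(Z|x|^{-1}-\chi_Z^{\rm TF})$ in channels $L\le\ell\le Z$, minus $D[\rho_Z^{\rm TF}]$, plus an error $\const Z^{47/24}$ — is essentially the Scott-corrected Thomas--Fermi upper bound, and it is exactly of the form established in Frank et al \cite{Franketal2008}. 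So the proof should be: (i) recall that \cite{Franketal2008} (or \cite{Solovejetal2008}) produces, for the Chandrasekhar atom, an upper bound on $\inf\spec C_Z$ by a sum of one-particle negative-eigenvalue traces coming from the screened potential $Z|x|^{-1}-\chi_Z^{\rm TF}$ plus the $-D[\rho_Z^{\rm TF}]$ correction, with an error term; (ii) note that splitting off the low angular momentum channels $\ell<L$ and replacing $\chi_Z^{\rm TF}$ by $0$ there only \emph{raises} the operator $C_c(Z|x|^{-1}-\chi_Z^{\rm TF})$ (since $\chi_Z^{\rm TF}\ge0$), hence lowers $-\tr_\ell(\cdot)_-$, so we need to control the \emph{cost} of this replacement in the finitely many channels $\ell<L$; (iii) show this cost, together with the error term already present in \cite{Franketal2008}, is $\le\const Z^{47/24}$ when $L=[Z^{1/9}]$.

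**Key steps in order.** First, I would quote the relevant energy upper bound from \cite{Franketal2008}: there is a trial state giving $\inf\spec C_Z \le -\sum_{\ell=0}^{Z}\tr_\ell C_c(Z|x|^{-1}-\chi_Z^{\rm TF})_- - D[\rho_Z^{\rm TF}] + \const\, Z^{\alpha}$ for some exponent $\alpha<2$ (the Scott correction paper gives an error of lower order than $Z^2$; one should check the precise exponent there, which is comfortably below $47/24$). Second, I would write $-\sum_{\ell=0}^{Z} = -\sum_{\ell=0}^{L-1} - \sum_{\ell=L}^{Z}$ and, in the block $\ell<L$, estimate the difference
\[
0 \le \tr_\ell C_c(Z|x|^{-1})_- - \tr_\ell C_c(Z|x|^{-1}-\chi_Z^{\rm TF})_- \,,
\]
which is nonnegative because $\chi_Z^{\rm TF}\ge 0$ makes the potential in the first operator larger. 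Bounding this difference from above is a standard perturbation estimate: $\chi_Z^{\rm TF}(x)$ is bounded (by $\sup_x\chi_Z^{\rm TF}\lesssim Z^{4/3}$, from the TF scaling $\rho_Z^{\rm TF}(x)=Z^2\rho_1^{\rm TF}(Z^{1/3}x)$ and the definition of the exchange hole), and in each fixed channel $\ell<L$ the number of negative eigenvalues of $C_c(Z|x|^{-1})$ is $O(1)$ with eigenvalues of size $O(Z^2/c^2\cdot c^2)=O(Z^2)$... here one uses the relativistic hydrogen scaling: $C_c(Z|x|^{-1})$ has the same spectrum as $c^2 C^H$ after rescaling $x\mapsto c^{-1}x$ with $\gamma=Z/c$ fixed, so its negative eigenvalues are $O(c^2)$ and there are $O(1)$ of them per channel. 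Hence each channel contributes a difference $\lesssim c^2\cdot (\text{relative size of }\chi_Z^{\rm TF}) $; multiplying by $L=[Z^{1/9}]$ channels gives the total cost of the replacement, which one then checks is $\le\const Z^{47/24}$. Third, add the two error contributions and conclude.

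**Main obstacle.** The delicate point is the bookkeeping of exponents: one must verify that the cost of artificially removing $\chi_Z^{\rm TF}$ from the low-$\ell$ block, summed over $L=[Z^{1/9}]$ channels, stays below $Z^{47/24}$, and simultaneously that the intrinsic error term imported from \cite{Franketal2008} is no larger. The choice $L=[Z^{1/9}]$ is presumably tuned precisely so that both contributions balance at $Z^{47/24}$; getting the constants and powers of $c=Z/\gamma$ right in the per-channel estimate for $\tr_\ell C_c(Z|x|^{-1})_- - \tr_\ell C_c(Z|x|^{-1}-\chi_Z^{\rm TF})_-$ — in particular using a first-order perturbation bound $\tr_\ell\big(C_c(V)_- - C_c(V-\chi)_-\big)\le \sum_{j:\,E_j<0}|\langle\psi_j,\chi\psi_j\rangle| + (\text{second order})$ together with decay estimates on the hydrogenic eigenfunctions $\psi_{n,\ell}^H$ and the bound on $\chi_Z^{\rm TF}$ near the nucleus — is where the real work lies. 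Everything else is a citation or a monotonicity remark.
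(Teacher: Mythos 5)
The paper's proof of Lemma~\ref{energyasymp} does \emph{not} construct a trial state or estimate the low-$\ell$ replacement cost directly. Instead it uses a comparison trick between the relativistic and non-relativistic problems. Writing $\Delta^C(Z)$ for the quantity to be bounded, and $\Delta^S(Z)$ for its non-relativistic analogue (with $C_Z\mapsto S_Z$, $C_c\mapsto S$), the paper cites three ingredients adapted from \cite{Franketal2008,Franketal2009}: the identity $\inf\spec S_Z-\inf\spec C_Z=\Delta^S(Z)-\Delta^C(Z)+Z^2 s(\gamma)+\mathcal O(Z^{17/9})$, the lower bound $\inf\spec S_Z-\inf\spec C_Z\ge Z^2 s(\gamma)-\const Z^{47/24}$, and the non-relativistic estimate $\Delta^S(Z)=\mathcal O(Z^{47/24})$. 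Subtracting gives $\Delta^C(Z)\le\const Z^{47/24}$ immediately, with no per-channel analysis in the Chandrasekhar case at all.

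Your approach is genuinely different, and as sketched it has gaps that I do not think can be closed the way you describe. First, step (i) quotes from \cite{Franketal2008} an upper bound of the form $\inf\spec C_Z\le-\sum_{\ell=0}^{Z}\tr_\ell C_c(Z|x|^{-1}-\chi_Z^{\rm TF})_--D[\rho_Z^{\rm TF}]+\const Z^{\alpha}$; but that paper's argument is also structured around a comparison with the non-relativistic Scott correction rather than a clean standalone estimate of this form, so you would still be on the hook for producing it. Second, and more seriously, the per-channel cost estimate relies on the claim that ``the number of negative eigenvalues of $C_c(Z|x|^{-1})$ is $O(1)$ per channel.'' This is false: in each angular momentum channel the Chandrasekhar--Coulomb operator has infinitely many negative eigenvalues accumulating at zero, just as in the Schr\"odinger case, and the lemma concerns $\tr_\ell(\cdot)_-$, not a finite truncation of it. The cost of dropping $\chi_Z^{\rm TF}$ in channel $\ell$ is (after rescaling $x\mapsto x/c$) controlled by $(2\ell+1)c^2\int_0^\infty c^{-2}\chi_Z^{\rm TF}(r/c)\,\rho_\ell^H(r)\,\dr$, an integral against the full hydrogenic channel density; making this small enough, summing over $\ell<L=[Z^{1/9}]$, and verifying that the result balances with the imported error at $Z^{47/24}$ is the hard analytic content, and your sketch does not supply it. (There is also a small sign slip: removing $\chi_Z^{\rm TF}\ge0$ \emph{lowers} the operator $C_c(V)$, not raises it; you reached the correct conclusion that $-\tr_\ell(\cdot)_-$ decreases, but via an inverted intermediate step.) The paper's subtraction route sidesteps all of this by transferring the entire per-channel bookkeeping to the non-relativistic result $\Delta^S(Z)=\mathcal O(Z^{47/24})$, which is already known.
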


\begin{proof}
We denote by $S_Z$ the non-relativistic analogue of $C_Z$ which is given by the same formula but with $\sqrt{-c^2\Delta+c^4}-c^2$ replaced by $-(1/2)\Delta$. Similarly, we denote by $S(v)$ the non-relativistic analogue of $C_c(v)$ and set
\begin{align*}
\Delta^C(Z)\! & := \inf\spec C_Z \!+\! \sum_{\ell=0}^{L-1} \ttr_\ell C_c(Z|x|^{-1})_- \!+\! \sum_{\ell=L}^Z \ttr_\ell C_c(Z|x|^{-1}- \chi_Z^{\rm TF})_-  \! +\! D[\rho_Z^{\rm TF}], \\
\Delta^S(Z)\! & := \inf\spec S_Z + \sum_{\ell=0}^{L-1} \ttr_\ell S(Z|x|^{-1})_- + \sum_{\ell=L}^Z \ttr_\ell S(Z|x|^{-1}- \chi_Z^{\rm TF})_-  + D[\rho_Z^{\rm TF}].
\end{align*}
Then
\begin{equation}
\label{eq:energyproof1}
\inf\spec S_Z - \inf\spec C_Z = \Delta^S(Z)-\Delta^C(Z) + Z^2 s(\gamma) + \mathcal O(Z^{17/9})
\end{equation}
for a certain constant $s(\gamma)$. The analogue of this bound in the Brown--Ravenhall case is proved in \cite[Subsection 4.1, Proof of Theorem 1.1 -- First part]{Franketal2009}, but extends to the Chandrasekhar case; see also the slightly less precise version in \cite[Proof of Theorem 1 -- First part]{Franketal2008}. On the other hand, we have
\begin{equation}
\label{eq:energyproof2}
\inf\spec S_Z - \inf\spec C_Z \geq Z^2 s(\gamma) - \const Z^{47/24} \,.
\end{equation}
Again, in the Brown--Ravenhall case this is proved in \cite[Subsection 4.2.3]{Franketal2009}, but it extends, with a simpler proof, to the Chandrasekhar case. (We note that the corresponding bound in \cite[Proof of Theorem 1 -- Second part]{Franketal2008} only gives a $o(Z^2)$ error.) Combining \eqref{eq:energyproof1} and \eqref{eq:energyproof2} we obtain
$$
\Delta^S(Z)-\Delta^C(Z) \geq -\const Z^{47/24} \,.
$$
Since $\Delta^S(Z) = \mathcal O(Z^{47/24})$ (which is, essentially, \cite[Proposition 4.1]{Franketal2009}, which is similar to \cite[Proposition 3]{Franketal2008}), we deduce that $\Delta^C(Z) \leq \const Z^{47/24}$, as claimed in the lemma.
\end{proof}

After these preliminaries we begin with the main part of the proof of Proposition~\ref{redux1}. Inserting the bounds from Lemmas \ref{correlation} (with $L=[Z^{1/9}]$) and \ref{energyasymp} into \eqref{eq:5.2}, we obtain
\begin{align*}
& c^2 \lambda (2\ell_0+1) \int_0^\infty c^{-3} \rho_{\ell_0,d}(c^{-1}r)U(r)\,\dr =  \ttr (C_Z-C_{Z,\lambda})d \\
& \quad \leq 
  \sum_{\ell=0}^{L-1} \left(\ttr_\ell C_c(Z|x|^{-1}+\lambda c^2 U(c|x|)\Pi_{\ell_0})_- - \ttr_\ell C_c(Z|x|^{-1})_- \right) \\
  & \quad \quad + \sum_{\ell=L}^Z \left(\ttr_\ell C_c(Z|x|^{-1}+\lambda c^2 U(c|x|)\Pi_{\ell_0} - \chi_Z^{\rm TF})_-  - \ttr_\ell C_c(Z|x|^{-1}-\chi_Z^{\rm TF})_- \right) \\
  & \quad\quad + \const Z^{47/24} \,.
\end{align*}
For sufficiently large $Z$, we have $L=[Z^{1/9}]>\ell_0$. Thus, the
last expression simplifies to
\begin{align*}
& c^2 \lambda (2\ell_0+1) \int_0^\infty c^{-3} \rho_{\ell_0,d}(c^{-1}r)U(r)\,\dr \\
& \quad \leq 
  \ttr_{\ell_0} C_c(Z|x|^{-1}+\lambda c^2 U(c|x|))_- - \ttr_\ell C_c(Z|x|^{-1})_- + \const Z^{47/24} \\
& \quad = c^2 \left( \ttr_{\ell_0} C_1(\gamma|x|^{-1}+\lambda U(|x|))_- - \ttr_\ell C_1(\gamma |x|^{-1})_- + \const Z^{-1/24} \right)  \\ 
& \quad = c^2 (2\ell_0+1) \left( \tr\left( C_{\ell_0}^H -\lambda U \right)_- - \tr \left( C_{\ell_0}^H\right)_- + \const Z^{-1/24} \right).
\end{align*}
Letting $Z\to\infty$ we obtain
$$
\limsup_{Z\to\infty} \lambda \int_0^\infty c^{-3} \rho_{\ell_0,d}(c^{-1}r) U(r)\,\dr
\leq \tr\left( C_{\ell_0}^H -\lambda U \right)_- - \tr \left( C_{\ell_0}^H\right)_-  \,.
$$
This implies the bounds in the proposition.
\qed

%%%%%%%%%%%%%%%%%%%%%%%

\subsection{Proof of Proposition \ref{redux2}}

Similarly as in the previous subsection, for $\ell_0\in\N$ we introduce
\begin{align*}
  C_{Z,\lambda}^{\ell_0} & := C_Z - \lambda \sum_{\nu=1}^N c^2U(c|x_\nu|)\sum_{\ell=\ell_0}^{\infty}\Pi_{\ell,\nu} \qquad \text{in}\ \bigwedge_{\nu=1}^N L^2(\R^3)\,.
\end{align*}
As in \eqref{eq:5.2}, we have
\begin{align}
  \label{eq:weakdensity}
  \sum_{\ell=\ell_0}^{\infty}(2\ell+1)\int_0^\infty c^{-3}\rho_{\ell,d}(r/c)U(r)\,\dr
  = \lambda^{-1} c^{-2}\tr (C_Z-C_{Z,\lambda}^{\ell_0})d \,.
\end{align}
Note that both sides are well-defined although possibly equal to $+\infty$. The left side is a sum of non-negative terms and on the right side, $\tr C_Z d=\inf\spec C_Z>-\infty$.

Combining identity \eqref{eq:weakdensity} with an obvious generalization of Lemma \ref{correlation} and with Lemma \ref{energyasymp} we obtain for $L>\ell_0$,
\begin{align*}
& c^2 \lambda \sum_{\ell=\ell_0}^{\infty}(2\ell+1)\int_0^\infty c^{-3}\rho_{\ell,d}(r/c)U(r)\,\dr \\
& \leq 
  \sum_{\ell=0}^{L-1} \left(\ttr_\ell C_c(Z|x|^{-1}+\lambda c^2 U(c|x|)\! \sum_{\ell'=\ell_0}^{\infty} \!\Pi_{\ell'})_- - \ttr_\ell C_c(Z|x|^{-1})_- \right) \\
  & \quad + \sum_{\ell=L}^Z \left(\ttr_\ell C_c(Z|x|^{-1}+\lambda c^2 U(c|x|)\! \sum_{\ell'=\ell_0}^{\infty} \!\Pi_{\ell'} - \chi_Z^{\rm TF})_-  - \ttr_\ell C_c(Z|x|^{-1}-\chi_Z^{\rm TF})_- \right) \\
  & \quad + \const Z^{47/24} \\
  & =
  c^2 \sum_{\ell=\ell_0}^{L-1} \left(\ttr_\ell C_1(Z|x|^{-1}+\lambda c^2 U(c|x|))_- - \ttr_\ell C_1(Z|x|^{-1})_- \right)
   \\
  & \quad + c^2 \sum_{\ell=L}^Z \left(\ttr_\ell C_1(\gamma |x|^{-1}+\lambda U(|x|) - c^{-2} \chi_Z^{\rm TF}(x/c))_-  \right. \\
  & \qquad\qquad\quad \left. - \ttr_\ell C_1(\gamma |x|^{-1}-c^{-2} \chi_Z^{\rm TF}(x/c))_- \right)   
  \\
  & \quad + \const Z^{47/24} \,.
\end{align*}
Since $c^{-2} \chi_Z^{\rm TF}(x/c)$ is radial (because the Thomas--Fermi density is radial) and $0\leq c^{-2} \chi_Z^{\rm TF}(x/c) \leq \gamma/r$ (the second inequality here follows from the fact that the Thomas--Fermi potential is non-negative), the assumption of the proposition implies that for $0<\lambda\leq\lambda_0$
\begin{align*}
\sum_{\ell=\ell_0}^{\infty}(2\ell+1)\int_0^\infty c^{-3}\rho_{\ell,d}(r/c)U(r)\,\dr \leq
  \sum_{\ell=\ell_0}^{\infty} (2\ell+1) a_\ell + \const \lambda^{-1} Z^{-1/24} \,.
\end{align*}
Taking the limsup as $Z\to\infty$, we obtain the bound in the proposition.
\qed

\begin{Rem}\label{approxgs1}
In Remark \ref{approxgs} we claim that that Theorems \ref{convfixedl} and \ref{convalll} continue to hold for approximate ground states in the sense of \eqref{eq:approxgs}. To justify this claim let us show that Propositions \ref{redux1} and \ref{redux2} continue to hold in this more general set-up. In fact, Lemma \ref{energyasymp} and \eqref{eq:approxgs} now imply
  \begin{align*}
     \tr C_Z d & \leq - \sum_{\ell=0}^{L-1} \ttr_\ell C_c(Z|x|^{-1})_- - \sum_{\ell=L}^Z \ttr_\ell C_c(Z|x|^{-1}- \chi_Z^{\rm TF})_-  \notag \\
    & \quad - D[\rho_Z^{\rm TF}]  + o(Z^2) \,.
  \end{align*}
The rest of the proof remains unchanged. With the analogues of Propositions \ref{redux1} and \ref{redux2} for approximate ground states in place, the analogues of Theorems \ref{convfixedl} and \ref{convalll} follow by the same arguments as in Subsection \ref{sec:strategy}.
\end{Rem}

%%%%%%%%%%%%%%%%%%%

\section{Differentiability of the sum of negative eigenvalues}
\label{s:diffsum}

\subsection{Differentiating under a relative trace class assumption}

We say that an operator $B$ is \emph{relatively form trace class} with respect to a lower bounded self-adjoint operator $A$ if $(A+M)^{-1/2}B(A+M)^{-1/2}$ is trace class for some (and hence any) large enough $M>0$. We recall that we use the notation $A_-=-A\chi_{(-\infty,0)}(A)$.

\begin{Thm}
  \label{diff}
  Assume that $A$ is self-adjoint with $A_-$ trace class.
  Assume that $B$ is non-negative and relatively form trace class with
  respect to $A$. Then the one-sided derivatives of
  $$
  \lambda\mapsto S(\lambda) := \Tr(A-\lambda B)_-
  $$
  satisfy
\begin{equation}
\label{eq:diff}
  \Tr B \chi_{(-\infty,0)}(A) = D^-S(0) 
  \leq D^+S(0) = \Tr B \chi_{(-\infty,0]}(A) \,.
\end{equation}
  In particular, $S$ is differentiable at $\lambda=0$ if and only
  if $B|_{\ker A} = 0$.
\end{Thm}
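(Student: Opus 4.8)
The plan is to show that $S$ is convex in $\lambda$, read off the two ``easy'' one-sided bounds from fixed trial states, and obtain the matching bounds from the exact identity $S(\lambda)=-\Tr\big((A-\lambda B)\,\chi_{(-\infty,0)}(A-\lambda B)\big)$ together with a limiting argument in which the relative form trace class hypothesis is the crucial ingredient. As a preliminary, note that since $(A+M)^{-1/2}B(A+M)^{-1/2}$ is trace class it is compact, hence has norm $\epsilon_0<1$ for $M$ large; fix such an $M$. Then $A-\lambda B$ is self-adjoint with the same form domain as $A$ for all real $\lambda$, and on the range of $\chi_{(-\infty,t)}(A-\lambda B)$ one has $A-\lambda B<t$, so $(1-\lambda\epsilon_0)(A+M)<t+M$ for $\lambda\ge0$ and $A+M<t+M$ for $\lambda\le0$. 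This Birman--Schwinger-type bound gives $\Tr\big(B\,\chi_{(-\infty,0)}(A-\lambda B)\big)\le C_\lambda\,\|(A+M)^{-1/2}B(A+M)^{-1/2}\|_{\mathfrak{S}_1}<\infty$ with $C_\lambda$ bounded near $\lambda=0$, hence $S(\lambda)=\Tr(A-\lambda B)_-\le\Tr A_-+\max(\lambda,0)\,\Tr\big(B\,\chi_{(-\infty,0)}(A-\lambda B)\big)<\infty$ for $|\lambda|$ small; in particular $\Tr\big(B\,\chi_{(-\infty,0)}(A)\big)$ and $\Tr\big(B\,\chi_{(-\infty,0]}(A)\big)$ are finite. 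The bound also shows that $\|(A+M)^{1/2}\chi_{(-\infty,t)}(A-\lambda B)(A+M)^{1/2}\|$ is bounded uniformly for $\lambda$ near $0$ and $t$ in a bounded set.

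Next I would observe that $S(\lambda)=\sup\{-\Tr((A-\lambda B)R)\}$, the supremum over finite-rank $R$ with $0\le R\le1$ and range in the form domain of $A$: here ``$\ge$'' is the variational estimate $(A-\lambda B)_-\ge-(A-\lambda B)R$ read in trace, and ``$\le$'' follows by testing with $R=\chi_{[-n,-1/n]}(A-\lambda B)$ and letting $n\to\infty$. Being a supremum of affine functions of $\lambda$, $S$ is convex, so the one-sided derivatives $D^\pm S(0)$ exist and $D^-S(0)\le D^+S(0)$. Applying the variational estimate $S(\lambda)\ge-\Tr((A-\lambda B)R)$ with $R=\chi_{(-\infty,0]}(A)$ (which lies in the form domain of $A$ since $A_-$ is bounded) and using $\Tr\big(A\,\chi_{(-\infty,0]}(A)\big)=-\Tr A_-$, one gets $S(\lambda)-S(0)\ge\lambda\,\Tr\big(B\,\chi_{(-\infty,0]}(A)\big)$ and hence $D^+S(0)\ge\Tr\big(B\,\chi_{(-\infty,0]}(A)\big)$; with $R=\chi_{(-\infty,0)}(A)$, dividing the resulting inequality $S(\lambda)-S(0)\ge\lambda\,\Tr(B\,\chi_{(-\infty,0)}(A))$ by $\lambda<0$ gives $D^-S(0)\le\Tr\big(B\,\chi_{(-\infty,0)}(A)\big)$.

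For the matching bounds I would start from $S(\lambda)=-\Tr\big((A-\lambda B)\,\chi_{(-\infty,0)}(A-\lambda B)\big)$ and the estimate $-\Tr\big(A\,\chi_{(-\infty,0)}(A-\lambda B)\big)\le\Tr A_-=S(0)$ (using $-A\le A_-$ and $0\le\chi_{(-\infty,0)}(A-\lambda B)\le1$), which together give, for every real $\lambda$,
\[
S(\lambda)-S(0)\le\lambda\,\Tr\big(B\,\chi_{(-\infty,0)}(A-\lambda B)\big).
\]
For $\lambda>0$ this yields $\lambda^{-1}(S(\lambda)-S(0))\le\Tr\big(B\,\chi_{(-\infty,\delta)}(A-\lambda B)\big)$ for any $\delta>0$ (as $B\ge0$); choosing $\delta$ outside the point spectrum of $A$, strong resolvent convergence of $A-\lambda B$ to $A$ gives $\chi_{(-\infty,\delta)}(A-\lambda B)\to\chi_{(-\infty,\delta)}(A)$ strongly, and interchanging this limit with the trace (see below) gives $\limsup_{\lambda\downarrow0}\lambda^{-1}(S(\lambda)-S(0))\le\Tr\big(B\,\chi_{(-\infty,\delta)}(A)\big)$; letting $\delta\downarrow0$ by monotone convergence gives $D^+S(0)\le\Tr\big(B\,\chi_{(-\infty,0]}(A)\big)$. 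The same inequality for $\lambda<0$, divided by $\lambda$ and with $\chi_{(-\infty,0)}\ge\chi_{(-\infty,-\delta)}$, gives $D^-S(0)\ge\Tr\big(B\,\chi_{(-\infty,0)}(A)\big)$ in the limit. Combined with the easy inequalities, $D^-S(0)=\Tr\big(B\,\chi_{(-\infty,0)}(A)\big)$ and $D^+S(0)=\Tr\big(B\,\chi_{(-\infty,0]}(A)\big)$, so $D^+S(0)-D^-S(0)=\Tr\big(B\,\chi_{\{0\}}(A)\big)=\|B^{1/2}\chi_{\ker A}\|_{\mathfrak{S}_2}^2$; this is finite by the hypothesis and vanishes exactly when $B^{1/2}\psi=0$ for all $\psi\in\ker A$, i.e.\ when $B|_{\ker A}=0$.

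The step I expect to be the main obstacle is the interchange of the limit $\lambda\to0$ with the infinite trace, which is exactly where the relative form trace class hypothesis is essential: writing
\[
\Tr\big(B\,\chi_{(-\infty,\delta)}(A-\lambda B)\big)=\Tr\Big(\big[(A+M)^{-1/2}B(A+M)^{-1/2}\big]\,\big[(A+M)^{1/2}\chi_{(-\infty,\delta)}(A-\lambda B)(A+M)^{1/2}\big]\Big),
\]
the first factor is a \emph{fixed} trace class operator while the second is uniformly bounded (by the Birman--Schwinger bound above) and converges in the weak operator topology; dominated convergence applied to the singular-value expansion of the first factor then gives convergence of the traces. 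A minor technical point is that $0$ may be an eigenvalue of $A$, which is why one first passes to $\chi_{(-\infty,\pm\delta)}$ with $\delta$ a non-eigenvalue and only afterwards sends $\delta\to0$. Everything else — convexity, the variational estimate, the trial-state computations, and the algebra of negative parts — is soft.
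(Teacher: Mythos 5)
Your argument is correct and, while it shares the core analytic mechanism with the paper's proof (trace class factor $C=(A+M)^{-1/2}B(A+M)^{-1/2}$, times a uniformly bounded weakly convergent sandwich, giving convergence of traces), it is organized along a genuinely different route. The paper's proof first establishes the integral formula $S(\lambda)-S(0)=\int_0^\lambda T(\lambda')\,\td\lambda'$ with $T(\lambda)=\Tr B\chi_{(-\infty,0)}(A-\lambda B)$ via a spectral cutoff at $\mu<0$ and analytic perturbation theory of the finitely many eigenvalues below $\mu$, then regularizes with continuous cutoff functions $f_\epsilon^\pm$ and norm resolvent convergence (which requires knowing $(A-\lambda B+M)^{-1}-(A+M)^{-1}$ is compact). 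You instead work directly with the variational inequality $S(\lambda)-S(0)\le\lambda\,T(\lambda)$, regularize with sharp cutoffs $\chi_{(-\infty,\pm\delta)}$ at a non-eigenvalue $\delta$, and use strong resolvent convergence, so the analytic-family machinery is avoided entirely. Both routes are sound; yours is a bit more elementary, whereas the paper's integral formula carries slightly more structural information (it gives $S$ as an explicit antiderivative, not just the one-sided derivatives at $0$).

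Two places where your sketch needs a little more care. First, your preliminary bound ``hence $S(\lambda)=\Tr(A-\lambda B)_-\le\Tr A_-+\max(\lambda,0)\Tr\big(B\chi_{(-\infty,0)}(A-\lambda B)\big)$'' implicitly splits $\Tr\big((A-\lambda B)P_\lambda\big)$ into $\Tr(AP_\lambda)$ and $\Tr(BP_\lambda)$, which presupposes the very trace class facts one is trying to establish. The fix is the one you yourself set up: run the estimate inside the supremum over the finite-rank trial states $R_n=\chi_{[-n,-1/n]}(A-\lambda B)$ (for which everything is manifestly trace class and the sandwich $(A+M)^{1/2}R_n(A+M)^{1/2}$ is bounded uniformly in $n$ and $\lambda$ near $0$ by your Birman--Schwinger bound), and then take $n\to\infty$. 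Second, in the final weak-convergence step you should spell out why $(A+M)^{1/2}\chi_{(-\infty,\delta)}(A-\lambda B)(A+M)^{1/2}$ converges weakly to $(A+M)^{1/2}\chi_{(-\infty,\delta)}(A)(A+M)^{1/2}$: strong convergence of the spectral projection (at a non-eigenvalue $\delta$) gives weak convergence of the sandwiched operator on the dense set $\dom\,(A+M)^{1/2}\times\dom\,(A+M)^{1/2}$, and the uniform bound on the sandwich then extends it to all of $\ch\times\ch$. With those two points made explicit, the proof is complete.
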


\emph{Remarks.} (1) Note that the relative trace class assumption implies
that the expression on the right of \eqref{eq:diff}, and consequently also that on the left, is finite. In fact, denoting $P=\chi_{(-\infty,0]}(A)$, we find
$$
\Tr PB = \Tr \Big( P(A+M) \Big) \Big( (A+M)^{-1/2}B(A+M)^{-1/2} \Big) <\infty \,,
$$
since $P(A+M)$ is bounded.\\
(2) It follows from the variational principle that $S$ is convex.
Therefore, by general arguments, $S$ has left and right sided derivatives.\\
(3) If the bottom of the essential spectrum of $A$ is strictly positive,
then the result is well known and will actually be used in our proof. Our
point is that the formulas remain valid even when the bottom of the
essential spectrum is zero, so that perturbation theory is not (directly)
applicable.

\begin{proof}
  \emph{Step 1.} 
  We claim that for any $\lambda\in\R$, $(A-\lambda B)_-$ is trace class
  and that
  $$
  S(\lambda) -S(0) = \int_0^\lambda T(\lambda')\,\td\lambda'
  $$
  with 
  $$
  T(\lambda) := \Tr B \chi_{(-\infty,0)}(A-\lambda B) \,.
  $$
  
  Note that $S(0)$ is finite by assumption. Moreover, $T(\lambda)$ is
  finite for any $\lambda\in\R$, since relative form boundedness of $B$
  implies that $(A-\lambda B+M)^{-1/2} (A+M)^{1/2}$ is bounded and therefore
  $B$ is relatively trace class with respect to $A-\lambda B$, so
  $T(\lambda)<\infty$ follows in the same way as $\tr PB<\infty$ in the
  first remark above. This argument also shows that the integral above
  is finite.

  In order to prove the claimed trace class property and the formula
  for $S(\lambda)$, we let $\mu\in(-\infty,0)\cap\rho(A)$ and set
  $$
  S_\mu(\lambda) := \Tr(A-\lambda B-\mu)_-
  \qquad\text{and}\qquad
  T_\mu(\lambda) := \Tr B \chi_{(-\infty,\mu)}(A-\lambda B) \,.
  $$
  Since $B$ is relatively compact and the infimum of the essential
  spectrum of $A$ is non-negative, $A-\lambda B$ has only finitely many
  eigenvalues below $\mu$. Moreover, by standard perturbation theory, the
  function $\lambda\mapsto S_\mu(\lambda)$ is differentiable at any
  $\lambda$ for which $\mu\not\in\sigma_p(A-\lambda B)$ with derivative
  $T_\mu(\lambda)$. By the Birman--Schwinger principle, the condition
  $\mu\not\in\sigma_p(A-\lambda B)$ is equivalent to
  $1/\lambda\not\in\sigma(B^{1/2}(A-\mu)^{-1} B^{1/2})$, which, since $B$ is
  relatively compact, is true on the complement of a discrete set.
  Therefore, for any $\lambda\in\R$,
  $$
  S_\mu(\lambda) = S_\mu(0) + \int_0^\lambda T_\mu(\lambda')\,\td\lambda' \,.
  $$

  We now let $\mu\to 0-$. Since $\mu\mapsto S_\mu(0)$ and
  $\mu\mapsto T_\mu(\lambda')$ are non-decreasing with finite limit $S(0)$
  and finite, integrable limit $T(\lambda')$, respectively, we conclude
  that the limit $S(\lambda)$ of $S_\mu(\lambda)$ as $\mu\to 0-$ is finite
  and satisfies the required equality.

  \emph{Step 2.} We claim that
  $$
  \limsup_{\lambda\to 0+} T(\lambda) \leq \Tr B \chi_{(-\infty,0]}(A)
  $$
  and
  $$
  \liminf_{\lambda\to 0-} T(\lambda) \geq \Tr B \chi_{(-\infty,0)}(A) \,.
  $$
  This, together with Step 1, immediately implies 
  \begin{equation}
    \label{eq:step2}
    \Tr B \chi_{(-\infty,0)}(A) \leq D^-S(0) 
    \leq D^+S(0) \leq \Tr B \chi_{(-\infty,0]}(A) \,.
  \end{equation}
  
  For $\epsilon>0$ let $f_\epsilon^+$ be the
  function which is $1$ on $(-\infty,0]$, $0$ on $[\epsilon,\infty)$ and
  linear in-between. Similarly, let $f_\epsilon^-$ be the function which
  is $1$ on $(-\infty,-\epsilon]$, $0$ on $[0,\infty)$ and linear
  in-between. Thus, $f^-_\epsilon \leq \chi_{(-\infty,0)}\leq f^+_\epsilon$
  and therefore
  \begin{equation}
    \label{eq:regularization}
    \Tr B f_\epsilon^-(A-\lambda B) \leq T(\lambda) \leq \Tr B f_\epsilon^+(A-\lambda B) \,.
  \end{equation}
  We claim that for any $\epsilon>0$
  \begin{equation}
    \label{eq:src}
    \lim_{\lambda\to 0} \Tr B f_\epsilon^\pm(A-\lambda B) = \Tr B f_\epsilon^\pm(A) \,,
  \end{equation}
  and that
  \begin{equation}
    \label{eq:domconv}
    \limsup_{\epsilon\to 0+}\Tr B f_\epsilon^+(A) = \Tr B \chi_{(-\infty,0]}(A) \,,
    \quad
    \liminf_{\epsilon\to 0+}\Tr B f_\epsilon^-(A) = \Tr B \chi_{(-\infty,0)}(A) \,.
  \end{equation}
  Once we have shown these two facts we can first let $\lambda\to0$ and then
  $\epsilon\to 0+$ in \eqref{eq:regularization} and obtain the claim.

  To prove \eqref{eq:src} we write
  $$
  \Tr B f_\epsilon^\pm(A-\lambda B) = \Tr C K(\lambda) g_\epsilon^\pm(A-\lambda B) K(\lambda)^* 
  $$
  with $C = (A+M)^{-1/2} B (A+M)^{-1/2}$,
  $K(\lambda) = (A+M)^{1/2} (A-\lambda B+M)^{-1/2}$ and
  $g_\epsilon^\pm(\alpha) = (\alpha+M) f_\epsilon^\pm(\alpha)$.
  Since $A-\lambda B$ converges in norm resolvent sense to $A$ as
  $\lambda\to 0$ and since $g_\epsilon^\pm$ are continuous, we have
  $g_\epsilon^\pm(A-\lambda B) \to g_\epsilon^\pm (A)$ in norm
  \cite[Theorem VIII.20]{ReedSimon1972}. Moreover, it is easy to see
  that $K(\lambda)^*$ converges strongly to the identity. (On elements
  in $\ran (A+M)^{-1/2}$ this follows from strong resolvent convergence
  of $A-\lambda B$ and for general elements one uses the uniform
  boundedness of $K(\lambda)^*$ with respect to $\lambda$, which follows
  from the boundedness of $B$ relative to $A$.) We conclude that
  $K(\lambda) g_\epsilon^\pm(A-\lambda B) K(\lambda)^*$ converges weakly to $g_\epsilon^\pm(A)$. Since $C$ is
  trace class, this implies \eqref{eq:src}.

  To prove \eqref{eq:domconv} we write similarly
  $$
  \Tr B f_\epsilon^\pm(A)
  = \int_\R g_\epsilon^\pm(\alpha)\, \td \left( \sum_n c_n (\psi_n, E(\alpha)\psi_n) \right),
  $$
  where $C = \sum_n c_n |\psi_n\rangle\langle\psi_n|$ and $\td E$ is the
  spectral measure for $A$. The functions $g_\epsilon^+$ and $g_\epsilon^-$
  are bounded on the support of $\td E$ and converge pointwise to
  $(\alpha+M) \chi_{(-\infty,0]}(\alpha)$ and
  $(\alpha+M) \chi_{(-\infty,0)}(\alpha)$, respectively, as $\epsilon\to 0+$.
  Since $\td \sum_n c_n (\psi_n, E(\alpha)\psi_n)$ is a finite measure,
  dominated convergence implies that
  $$
  \lim_{\epsilon\to 0+} \!\Tr B f_\epsilon^+(A)
  = \!\int_\R (\alpha+M) \chi_{(-\infty,0]} (\alpha)\, \td \!\left( \!\sum_n c_n (\psi_n, E(\alpha)\psi_n) \!\right)\!
  = \Tr B \chi_{(-\infty,0]}(A)
  $$
  and
  $$
  \lim_{\epsilon\to 0+} \!\Tr B f_\epsilon^-(A)
  = \!\int_\R (\alpha+M) \chi_{(-\infty,0)} (\alpha)\, \td \! \left( \!\sum_n c_n (\psi_n, E(\alpha)\psi_n) \! \right)\!
  = \Tr B \chi_{(-\infty,0)}(A).
  $$
  This proves \eqref{eq:domconv}.

  \emph{Step 3.} We prove that the left and the right inequality in \eqref{eq:step2} are, in fact, equalities.

  By the variational principle, the functions $S_\mu$ are convex and
  converge pointwise to $S$ as $\mu\to 0-$. Thus, by general facts about
  convex functions (see, e.g., \cite[Theorem~1.27]{Simon2011}),
  $$
  D^-S(0) \leq \liminf_{\mu\to0-} D^-S_\mu(0) \,.
  $$ 
  It is well known that
  $$
  D^-S_\mu(0) = T_\mu(0) \,.
  $$
  By monotone convergence,
  $$
  \lim_{\mu\to 0-} T_\mu(0) =  \Tr B \chi_{(-\infty,0)}(A) \,,
  $$
  and therefore $D^-S(0) \leq \Tr B \chi_{(-\infty,0)}(A)$. Thus, the left inequality in \eqref{eq:step2} is an equality.
  
  We abbreviate again $P=\chi_{(-\infty,0]}(A)$ and note that by the
  variational principle
  $$
  -\tr(A-\lambda B)_- \leq \Tr(A-\lambda B)P \,.
  $$
  Thus,
  $$
  S(\lambda) -S(0) \geq \lambda \Tr BP
  $$
  and
  $$
  D^+S(0) = \lim_{\lambda\to 0+} \frac{S(\lambda)-S(0)}{\lambda} \geq \Tr BP \,, 
  $$
  which shows that the right inequality in \eqref{eq:step2} is an equality.
\end{proof}

%%%%%%%%%%%%%%%%%%%%%%%

\subsection{A generalization}

In the application that we have in mind the relative trace class assumption in Theorem \ref{diff} is too strong.
In this subsection we present a generalization of Theorem \ref{diff} where
this assumption is replaced by the weaker assumption that $B$ is relatively
form trace class with respect to $(A+M)^{2s}$ for some $s>1/2$. However, in
this situation we also need to require that the operators $(A+M)^{s}$ and
$(A-\lambda B+M)^{s}$ are comparable in a certain sense.

\begin{Thm}
  \label{diffgen0}
  Assume that $A$ is self-adjoint with $A_-$ trace class. Assume that $B$ is
  non-negative and relatively form bounded with respect to $A$. Assume that
  there are $1/2< s\leq 1$ such that for some $M>-\inf\spec A$,
    \begin{equation}
    \label{eq:traceclassdelta0}
    (A+M)^{-s} B(A+M)^{-s} \qquad\text{is trace class}
  \end{equation}
  and
  \begin{equation}
    \label{eq:relbounddelta}
    \limsup_{\lambda\to 0} \left\| (A+M)^{s} (A-\lambda B+M)^{-s} \right\| <\infty \,.
  \end{equation}
  Then the conclusions in Theorem \ref{diff} are valid.
\end{Thm}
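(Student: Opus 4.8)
The plan is to reduce Theorem \ref{diffgen0} to Theorem \ref{diff} by approximating $A$ from below by operators with strictly positive essential spectrum, exactly in the spirit of the $\mu\to 0-$ argument in the proof of Theorem \ref{diff}, but now having to keep track of the weaker trace-class hypothesis \eqref{eq:traceclassdelta0}. Concretely, I would first re-run Step 1: for $\mu\in(-\infty,0)\cap\rho(A)$, the operator $A-\lambda B$ has only finitely many eigenvalues below $\mu$ (relative compactness of $B$, which follows from \eqref{eq:traceclassdelta0} together with relative form boundedness), so $S_\mu(\lambda)=\Tr(A-\lambda B-\mu)_-$ is differentiable off a discrete set of $\lambda$ with derivative $T_\mu(\lambda)=\Tr B\chi_{(-\infty,\mu)}(A-\lambda B)$, and $S_\mu(\lambda)=S_\mu(0)+\int_0^\lambda T_\mu(\lambda')\,\td\lambda'$. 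Letting $\mu\to 0-$ and using monotonicity gives $S(\lambda)=S(0)+\int_0^\lambda T(\lambda')\,\td\lambda'$ with $T(\lambda)=\Tr B\chi_{(-\infty,0)}(A-\lambda B)$; one must check $T(\lambda)<\infty$, which now uses \eqref{eq:relbounddelta} to compare $(A-\lambda B+M)^{s}$ with $(A+M)^{s}$ and hence bound $\Tr B\chi_{(-\infty,0)}(A-\lambda B)$ by $\|(A+M)^{s}(A-\lambda B+M)^{-s}\|^2$ times $\Tr\big((A+M)^{-s}B(A+M)^{-s}\big)$ times a bounded spectral factor (note that $\chi_{(-\infty,0)}(A-\lambda B)(A-\lambda B+M)^{s}$ is bounded since $M>-\inf\spec(A-\lambda B)$ for small $\lambda$).

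The heart of the matter is the analogue of Step 2: proving
\begin{equation*}
\limsup_{\lambda\to 0+}T(\lambda)\leq \Tr B\chi_{(-\infty,0]}(A),
\qquad
\liminf_{\lambda\to 0-}T(\lambda)\geq \Tr B\chi_{(-\infty,0)}(A).
\end{equation*}
As before I would sandwich $\chi_{(-\infty,0)}$ between the Lipschitz cutoffs $f_\epsilon^-\leq\chi_{(-\infty,0)}\leq f_\epsilon^+$, so it suffices to show $\lim_{\lambda\to 0}\Tr B f_\epsilon^\pm(A-\lambda B)=\Tr B f_\epsilon^\pm(A)$ for fixed $\epsilon$, and then $\limsup_{\epsilon\to 0+}\Tr B f_\epsilon^+(A)=\Tr B\chi_{(-\infty,0]}(A)$ and the matching liminf, the latter two by dominated convergence with respect to the finite measure $\td\sum_n c_n(\psi_n,E(\alpha)\psi_n)$ coming from the spectral decomposition $C:=(A+M)^{-s}B(A+M)^{-s}=\sum_n c_n|\psi_n\rangle\langle\psi_n|$ — here the boundedness of $g_\epsilon^{\pm,s}(\alpha):=(\alpha+M)^{s}f_\epsilon^\pm(\alpha)$ on $\spec A$ and its pointwise convergence to $(\alpha+M)^{s}\chi_{(-\infty,0]}(\alpha)$, resp.\ $(\alpha+M)^{s}\chi_{(-\infty,0)}(\alpha)$, is what is needed. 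The step $\lim_{\lambda\to 0}\Tr B f_\epsilon^\pm(A-\lambda B)=\Tr B f_\epsilon^\pm(A)$ is where the power $s>1/2$ forces an adjustment: I would write
\begin{equation*}
\Tr B f_\epsilon^\pm(A-\lambda B)=\Tr C\, K_s(\lambda)\, g_\epsilon^{\pm,s}(A-\lambda B)\, K_s(\lambda)^*,
\end{equation*}
where $K_s(\lambda)=(A+M)^{s}(A-\lambda B+M)^{-s}$ and $g_\epsilon^{\pm,s}(\alpha)=(\alpha+M)^{s}f_\epsilon^\pm(\alpha)$ is a bounded continuous function with compact support in $\alpha$. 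Since $A-\lambda B\to A$ in norm resolvent sense we get $g_\epsilon^{\pm,s}(A-\lambda B)\to g_\epsilon^{\pm,s}(A)$ in operator norm, so the only remaining point is the convergence of $K_s(\lambda)$.

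The main obstacle, and the place where \eqref{eq:relbounddelta} is used essentially, is that $K_s(\lambda)^*=(A-\lambda B+M)^{-s}(A+M)^{s}$ need not converge strongly to the identity in the naive way, because $(A+M)^{s}$ for $s>1/2$ is not controlled by the form of $A$ and the usual density argument on $\ran(A+M)^{-1/2}$ is not available. I would instead argue: on the dense set $\ran(A+M)^{-s}$, write $(A+M)^{s}(A+M)^{-s}\phi=\phi$ and use that $(A-\lambda B+M)^{-s}\to(A+M)^{-s}$ strongly (from norm resolvent convergence and the functional calculus, $t\mapsto(t+M)^{-s}$ being continuous and bounded on $[\inf\spec A-1,\infty)$) to conclude $K_s(\lambda)^*\phi\to\phi$ for such $\phi$; the uniform bound $\sup_{|\lambda|\text{ small}}\|K_s(\lambda)^*\|=\sup\|K_s(\lambda)\|<\infty$ from \eqref{eq:relbounddelta} then upgrades this to strong convergence $K_s(\lambda)^*\to I$ on all of the Hilbert space, and likewise $K_s(\lambda)\to I$ strongly by taking adjoints (using the uniform bound again). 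Hence $K_s(\lambda)g_\epsilon^{\pm,s}(A-\lambda B)K_s(\lambda)^*\to g_\epsilon^{\pm,s}(A)$ weakly, and since $C$ is trace class the trace converges, giving the fixed-$\epsilon$ limit. Finally, Step 3 carries over verbatim: convexity of each $S_\mu$ and pointwise convergence $S_\mu\to S$ give $D^-S(0)\leq\liminf_{\mu\to 0-}D^-S_\mu(0)=\liminf_{\mu\to 0-}T_\mu(0)=\Tr B\chi_{(-\infty,0)}(A)$ by monotone convergence, forcing equality on the left of the sandwich; and the variational bound $-\Tr(A-\lambda B)_-\leq\Tr(A-\lambda B)P$ with $P=\chi_{(-\infty,0]}(A)$ gives $D^+S(0)\geq\Tr BP$, forcing equality on the right. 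Combining with Step 1 and Step 2 yields \eqref{eq:diff}, and the differentiability criterion $B|_{\ker A}=0$ follows as before.
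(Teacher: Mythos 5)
Your Step 2 and Step 3 follow the paper essentially verbatim (with one cosmetic slip: for the factorization $\Tr Bf_\epsilon^\pm(A-\lambda B)=\Tr C\,K(\lambda)\,g_\epsilon^\pm(A-\lambda B)\,K(\lambda)^*$ with $C=(A+M)^{-s}B(A+M)^{-s}$ and $K(\lambda)=(A+M)^s(A-\lambda B+M)^{-s}$ you need $g_\epsilon^\pm(\alpha)=(\alpha+M)^{2s}f_\epsilon^\pm(\alpha)$, not $(\alpha+M)^s f_\epsilon^\pm(\alpha)$ as you wrote; this is harmless since $f_\epsilon^\pm$ has compact support, so $g_\epsilon^\pm$ is still bounded and continuous). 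The strong convergence argument for $K(\lambda)^*$ and the Step 3 convexity sandwich are exactly as in the paper.

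The genuine gap is in Step 1, where you assert that ``relative compactness of $B$ follows from \eqref{eq:traceclassdelta0} together with relative form boundedness.'' This is false in general. Take $A=\sum_n n\,|e_n\rangle\langle e_n|$ and $B=\sum_k 2^k|e_{2^k}\rangle\langle e_{2^k}|$ in $\ell^2(\N)$: then $(A+M)^{-1/2}B(A+M)^{-1/2}$ has eigenvalues $2^k/(2^k+M)\to 1$, so $B$ is relatively form bounded but \emph{not} relatively form compact, while $(A+M)^{-s}B(A+M)^{-s}$ has eigenvalues $\sim 2^{k(1-2s)}$, which is summable for every $s>1/2$, so \eqref{eq:traceclassdelta0} holds. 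Thus you cannot invoke the Birman--Schwinger reasoning from Theorem \ref{diff} (which genuinely used $(A+M)^{-1/2}B(A+M)^{-1/2}$ compact) to conclude either that the negative spectrum of $A-\lambda B$ is discrete or that a fixed $\mu<0$ is an eigenvalue of $A-\lambda B$ only for a discrete set of $\lambda$. The paper handles this by a different device: it writes
$$
(A-\lambda B+M)^{-1}-(A+M)^{-1}=(A-\lambda B+M)^{-1+s}\,D(\lambda)\,E(\lambda)\,(A+M)^{-1+s}
$$
with $D(\lambda)=(A-\lambda B+M)^{-s}(A+M)^s$ bounded by \eqref{eq:relbounddelta} and $E(\lambda)=\lambda(A+M)^{-s}B(A+M)^{-s}$ trace class by \eqref{eq:traceclassdelta0}, so the resolvent difference is trace class (using $s\le 1$); Weyl's theorem then gives $\inf\sigma_{\rm ess}(A-\lambda B)\ge 0$, and the analytic-family-of-type-(B) structure of $\lambda\mapsto A-\lambda B$ gives the discreteness of the exceptional $\lambda$'s. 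You need to replace your compactness claim with this (or an equivalent) argument; without it, Step 1 does not go through.
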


Note that, since $B$ is relatively form bounded with respect to $A$, for any $M>-\inf\spec A$ there is a $\lambda_M$ such that $A-\lambda B+M\geq 0$ for all $|\lambda|\leq \lambda_M$. Therefore $(A-\lambda B+M)^{-s}$ is well-defined for $|\lambda|\leq\lambda_M$.

\begin{proof}[Proof of Theorem \ref{diffgen0}]
  We follow the steps in the proof of Theorem \ref{diff}. At the beginning
  of Step 1 we needed to show that
  $T(\lambda)=\Tr B \chi_{(-\infty,0)}(A-\lambda B)$ is finite and
  uniformly bounded for $\lambda$ near zero. This follows from
  \begin{equation}
    \label{eq:traceclassdelta}
    \limsup_{\lambda\to 0} \tr (A-\lambda B+M)^{-s} B (A-\lambda B+M)^{-s} <\infty
  \end{equation}  
  and the fact that $(A-\lambda B+M)^{2s}\chi_{(-\infty,0)}(A-\lambda B)$ is
  uniformly bounded for $\lambda$ near zero. Note that
  \eqref{eq:traceclassdelta} follows from \eqref{eq:traceclassdelta0} and
  \eqref{eq:relbounddelta}.

  Furthermore, in Step 1, we also used the fact that any given
  $\mu\in(-\infty,0)\cap\rho(A)$ is not an eigenvalue of $A-\lambda B$
  away from a discrete set of $\lambda$'s near zero. Let us justify this
  fact under the present assumptions. We first note that
  $$
  (A-\lambda B+M)^{-1} - (A+M)^{-1} = (A-\lambda B+M)^{-1+s} D(\lambda) E(\lambda) (A+M)^{-1+s}
  $$
  with
  $$
  D(\lambda) =
  (A-\lambda B+M)^{-s} (A+M)^{s}
  $$
  and
  $$
  E(\lambda) = \lambda(A+M)^{-s} B (A+M)^{-s} \,.
  $$
  By assumption \eqref{eq:relbounddelta}, $D(\lambda)$ is bounded and,
  by assumption \eqref{eq:traceclassdelta0}, $E(\lambda)$ is trace class.
  Since $s\leq 1$, this shows that $(A-\lambda B+M)^{-1} - (A+M)^{-1}$ is
  trace class and, in particular, compact. Therefore, by Weyl's theorem, the
  negative spectrum of $A-\lambda B$ is discrete. Since $B$ is relatively
  form bounded with respect to $A$, $A-\lambda B$ forms an analytic family
  of type (B) \cite[Chapter Seven, Theorem 4.8]{Kato1966} and therefore,
  locally, the eigenvalues can be labeled to be analytic functions
  of $\lambda$. Since, by assumption $\mu$ is not an eigenvalue of $A$,
  there is only a discrete set of $\lambda$'s near zero such that $\mu$
  is an eigenvalue of $A-\lambda B$, as claimed.

  Turning now to Step 2, we need to show \eqref{eq:src}. We write again
  $$
  \Tr B f_\epsilon^\pm(A-\lambda B) = \Tr C K(\lambda) g_\epsilon^\pm(A-\lambda B) K(\lambda)^* 
  $$
  where now $C = (A+M)^{-s} B (A+M)^{-s}$,
  $K(\lambda) = (A+M)^{s} (A-\lambda B+M)^{-s}$, and
  $g_\epsilon^\pm(\alpha) = (\alpha+M)^{2s} f_\epsilon^\pm(\alpha)$. We again
  have $g_\epsilon^\pm(A-\lambda B)\to g_\epsilon^\pm(A)$ in norm. In order
  to show that $K(\lambda)^*\to 1$ strongly, we observe again that this
  holds on elements in $\ran (A+M)^{-s}$ and that $K(\lambda)^*$ is
  uniformly bounded for $\lambda$ near zero by assumption
  \eqref{eq:relbounddelta}. Thus, as before,
  $K(\lambda) g_\epsilon^\pm(A-\lambda B) K(\lambda)^*\to g_\epsilon^\pm(A)$
  in the sense of weak operator convergence and, since $C$ is trace class
  by assumption \eqref{eq:traceclassdelta0}, we obtain \eqref{eq:src}.

  Finally, Step 3 remains unchanged. This concludes the proof of
  Theorem~\ref{diffgen0}.
\end{proof}

Let us give a sufficient condition for \eqref{eq:relbounddelta}.

\begin{Prop}
  \label{diffgen}
  Assume that $A$ is self-adjoint and $B$ is non-negative and let $1/2< s\leq 1$. Assume that there is an $s'<s$ such that for some $M>0$ and $a>0$, 
  \begin{equation}
    \label{eq:increasedrelbdd}
    B^{2s} \leq a(A+M)^{2s'} \,.
  \end{equation}
Then $B$ is form bounded with respect to $A$ with form bound zero and \eqref{eq:relbounddelta} holds.
\end{Prop}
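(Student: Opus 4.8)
My plan is to deal with the two assertions in turn. The form-boundedness claim is elementary: since $2s\in(1,2]$ we have $1/(2s)\in[1/2,1]$, so $t\mapsto t^{1/(2s)}$ is operator monotone, and applying it to \eqref{eq:increasedrelbdd} gives $B\le a^{1/(2s)}(A+M)^{s'/s}$. Writing $\theta:=s'/s\in(0,1)$ and using the scalar Young bound $t^\theta\le\delta t+C_\delta$ (any $\delta>0$, $t\ge0$), the spectral theorem yields $B\le\delta\,a^{1/(2s)}(A+M)+C_\delta\,a^{1/(2s)}$, so $B$ is form bounded with respect to $A$ with form bound zero. In particular, by the remark following Theorem~\ref{diffgen0}, $A-\lambda B+M\ge0$ for $|\lambda|$ small, so $(A-\lambda B+M)^{-s}$ is well defined; I also record the pointwise bound $B\le C_0(A+M)^\theta$ with $C_0:=a^{1/(2s)}$, to be reused below.

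For \eqref{eq:relbounddelta}, set $P:=A+M$ and $Q:=A-\lambda B+M$, and let $\mu_0:=\inf\spec P$, which is $>0$ (this is implicit in \eqref{eq:increasedrelbdd} once one insists, as one may by enlarging $M$, that $(A-\lambda B+M)^{-s}$ make sense near $\lambda=0$). From $B\le C_0P^\theta\le C_0\mu_0^{\theta-1}P$ one gets, for $|\lambda|$ small, the comparison $(1-c|\lambda|)P\le Q\le(1+c|\lambda|)P$ with $c:=C_0\mu_0^{\theta-1}$; in particular $Q$ is bounded below by a positive multiple of $P$, so $Q^{-s}$ is bounded and $P^sQ^{-s}$ makes sense. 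The key reduction is the equivalence
\[
\|P^sQ^{-s}\|\le K\quad\Longleftrightarrow\quad P^{2s}\le K^2\,Q^{2s},
\]
obtained by substituting $\psi=Q^{-s}\phi$ in $\langle\phi,Q^{-s}P^{2s}Q^{-s}\phi\rangle$. So it suffices to produce such an operator inequality with $\limsup_{\lambda\to0}K<\infty$.

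To do that I would exploit that, for $2s\in[1,2]$, the function $t\mapsto t^{2s}$ is \emph{operator convex} on $[0,\infty)$. Consider first $\lambda>0$, so $P=Q+\lambda B$ is a sum of two positive operators; operator convexity gives, for every $\tau\in(0,1)$,
\[
P^{2s}=(Q+\lambda B)^{2s}\le\tau^{1-2s}\,Q^{2s}+(1-\tau)^{1-2s}\lambda^{2s}\,B^{2s}.
\]
Into this I insert the hypothesis $B^{2s}\le aP^{2s'}$ together with $P^{2s'}\le\mu_0^{2s'-2s}P^{2s}$ (valid since $2s'<2s$ and $P\ge\mu_0$), getting
\[
P^{2s}\le\tau^{1-2s}\,Q^{2s}+(1-\tau)^{1-2s}a\mu_0^{2s'-2s}\lambda^{2s}\,P^{2s}.
\]
Fixing e.g. $\tau=\tfrac12$ (or, for the sharper constant $1$, letting $\tau=\tau(\lambda)\uparrow1$ slowly) and taking $\lambda$ so small that the $P^{2s}$-coefficient on the right is $<1$, I rearrange to $P^{2s}\le K(\lambda)^2Q^{2s}$ with $K(\lambda)^2=2^{2s-1}/(1-2^{2s-1}a\mu_0^{2s'-2s}\lambda^{2s})\to2^{2s-1}$; hence $\limsup_{\lambda\to0+}\|(A+M)^s(A-\lambda B+M)^{-s}\|\le2^{\,s-1/2}<\infty$.

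The case $\lambda<0$ (where now $Q=P+|\lambda|B\ge P$, but still $P\le Q\le(1+c|\lambda|)P$) is handled in the same spirit, applying the operator-convexity inequality to the sum $Q=P+|\lambda|B$, though squeezing it into exactly the same mould needs a small extra twist since the roles of $P$ and $Q$ are interchanged. The point I expect to require the most care is precisely the operator inequality $P^{2s}\le K^2Q^{2s}$: because $2s>1$, $t\mapsto t^{2s}$ is \emph{not} operator monotone, so the mere comparability $(1-c|\lambda|)P\le Q\le(1+c|\lambda|)P$ does \emph{not} by itself bound $\|P^sQ^{-s}\|$ (this genuinely fails for a perturbation of the full order of $P$). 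It is essential to use operator convexity on the exponent range $[1,2]$ and, crucially, that \eqref{eq:increasedrelbdd} controls $B^{2s}$ — not merely $B$ — by a strictly lower power of $A+M$, which is what makes the error term in the displayed inequality genuinely lower order.
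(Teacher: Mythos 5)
Your first part (form bound zero) is fine and essentially coincides with the paper's Step~0 of Lemma~\ref{apriori}. Your treatment of $\lambda>0$ via operator convexity of $t\mapsto t^{2s}$ on the exponent range $(1,2]$ is a genuinely different and rather elegant route to the inequality $(A+M)^{2s}\leq K^2(A-\lambda B+M)^{2s}$; the paper instead proves the two-sided comparability of $(A+M)^{2s}$ and $(A-\lambda B+M)^{2s}$ via Lemma~\ref{apriori}, using the integral representation $h^{-\alpha}=c_\alpha\int_0^\infty (h+t)^{-1}t^{-\alpha}\,\mathrm dt$ together with a Neumann-type inversion of $1\pm Y(t)$.

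The gap is the case $\lambda<0$, which you explicitly do not carry out and whose difficulty you substantially underestimate. For $\lambda<0$ set $P=A+M$, $Q=A-\lambda B+M=P+|\lambda|B\geq P$. You still need $P^{2s}\leq K^2Q^{2s}$, i.e.\ a bound on $\|P^sQ^{-s}\|$. Applying operator convexity to the sum $Q=P+|\lambda|B$, as you suggest, yields
\[
Q^{2s}\leq \tau^{1-2s}P^{2s}+(1-\tau)^{1-2s}|\lambda|^{2s}B^{2s}\lesssim P^{2s},
\]
which bounds $\|Q^sP^{-s}\|$. But $Q^sP^{-s}=(P^sQ^{-s})^{-1}$, so this is a \emph{lower} bound on $P^sQ^{-s}$, not an upper bound. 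The two-sided comparability $P\leq Q\leq(1+c|\lambda|)P$, which you correctly point out is insufficient because $t\mapsto t^{2s}$ is not operator monotone for $2s>1$, does not come to the rescue here either. One cannot simply interchange the roles of $P$ and $Q$: writing $P=\tau Q+(1-\tau)R$ with $R=(P-\tau Q)/(1-\tau)\geq 0$ and applying operator convexity produces a term $R^{2s}$ with $R=P-c^{-1}B$, which is again a perturbation of $P$ by a \emph{negative} lower-order term, i.e.\ precisely the problem you started with. This is not a small twist; it is exactly the content that the paper's Lemma~\ref{apriori} supplies uniformly for $B\geq 0$ and $B\leq 0$ via the integral representation of $h^{-\alpha}$. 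To complete your argument you would need to either (i) run an integral-representation/Neumann-series argument as in Lemma~\ref{apriori} (or cite a result like Neidhardt--Zagrebnov) for the $\lambda<0$ sign, or (ii) find a genuinely convexity-based proof that produces the inequality in the needed direction; as written, the proposal does not establish \eqref{eq:relbounddelta}.
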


The assumption $s'<s$ is crucial for our proof, but we do not know
whether it is necessary for \eqref{eq:relbounddelta} to hold.

Proposition \ref{diffgen} is an immediate consequence of the following lemma, where $A$, $B$, $\alpha$, $\beta$ play the roles of $A+M$, $-\lambda B$, $s$ and $s'$, respectively. The statement and proof of this lemma are inspired by Neidhardt and Zagrebnov \cite[Lemma 2.2]{NeidhardtZagrebnov1999}.

\begin{Lem}
  \label{apriori}
  Let $A$ be a self-adjoint operator with $\inf\spec A>0$ and let $B$ be an
  operator which satisfies $B\geq 0$ or $B\leq 0$. Assume that for some
  numbers $\max\{\beta,1/2\}<\alpha<1$ one has
  \begin{equation*}
    \||B|^\alpha A^{-\beta}\|<\infty \,.
  \end{equation*}
  Then $B$ is form bounded with respect to $A$ with relative bound zero and, if $M\geq C \||B|^\alpha A^{-\beta}\|^{1/(\alpha-\beta)}$ for some
  constant $C$ depending only on $\alpha$ and $\beta$,
  $$
  \frac12 (A+M)^{2\alpha} \leq (A+B+M)^{2\alpha} \leq 2(A+M)^{2\alpha} \,.
  $$
\end{Lem}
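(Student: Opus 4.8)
The plan is to first establish the relative form boundedness and then bootstrap to the two-sided operator inequality for the $2\alpha$-powers using operator monotonicity of $t\mapsto t^{2\alpha}$ (valid since $0<2\alpha<2$, though one must be slightly careful: $t\mapsto t^\theta$ is operator monotone for $0\le\theta\le 1$, so for $2\alpha\in(1,2)$ one should write $2\alpha = 1 + (2\alpha-1)$ and argue in two stages, or use that $t\mapsto t^\theta$ is operator \emph{concave} for $\theta\in[0,1]$ together with a perturbation estimate). First I would record the elementary consequence of the hypothesis: writing $T:=|B|^\alpha A^{-\beta}$, which is bounded by assumption, we have $|B|^\alpha \le \|T\|\,A^\beta$ in the form sense (more precisely $\||B|^{\alpha/2} A^{-\beta/2}\|\le\|T\|^{1/2}$ after interpolating, since $|B|^{\alpha/2}A^{-\beta/2} = (|B|^{\alpha/2}A^{-\alpha/2})(A^{\alpha/2}\cdot\text{nothing})$ — actually the cleanest route is to note $\||B|^{\alpha}A^{-\beta}\|<\infty$ implies, by taking adjoints and interpolating with the identity, that $\||B|^{\alpha\theta}A^{-\beta\theta}\|<\infty$ for $\theta\in[0,1]$). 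Taking $\theta = 1/\alpha$ (legitimate since $\alpha<1$ forces $1/\alpha>1$ — so instead take $\theta$ with $\alpha\theta=1$, impossible; one uses instead that from $|B|^{2\alpha}\le\|T\|^2 A^{2\beta}$ and operator monotonicity of $t\mapsto t^{1/(2\alpha)}$, since $2\alpha>1$, one gets $|B|\le \|T\|^{1/\alpha} A^{\beta/\alpha}$, and since $\beta/\alpha<1$ this already gives relative form boundedness of $B$ with respect to $A$ with relative bound zero, because $A^{\beta/\alpha}\le \epsilon A + C_\epsilon$ for every $\epsilon>0$).

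Next, for the two-sided bound, I would work with $A_M := A+M$ and estimate the perturbation $(A_M+B)^{2\alpha}-A_M^{2\alpha}$. The key identity is the integral representation $X^{2\alpha} = c_\alpha \int_0^\infty t^{2\alpha-1}\big(1 - t(X+t)^{-1}\cdot\text{(something)}\big)\dt$; more usefully, for $\theta\in(0,1)$ one has $X^\theta = \frac{\sin\pi\theta}{\pi}\int_0^\infty t^{\theta-1} X(X+t)^{-1}\dt$. Applying this with $\theta = \alpha$ (wait, we want power $2\alpha$); so split $2\alpha = \alpha\cdot 2$ and iterate, or directly use $X^{2\alpha}$ with the representation valid for the exponent in $(0,1)$ applied to $(A_M+B)$ versus $A_M$. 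The difference of resolvents $(A_M+B+t)^{-1}-(A_M+t)^{-1} = -(A_M+B+t)^{-1}B(A_M+t)^{-1}$ is the object to bound. Using $|B|\le\|T\|^{1/\alpha}A^{\beta/\alpha}\le\|T\|^{1/\alpha}A_M^{\beta/\alpha}$ and the freedom to choose $M$ large, one estimates
\begin{align*}
\big\| A_M^{-\alpha}\big((A_M+B+t)^{-1}-(A_M+t)^{-1}\big)A_M^{-\alpha}\big\|
\end{align*}
by distributing the powers of $A_M$ onto the resolvents (each resolvent absorbing a power $A_M^{\gamma}(A_M+t)^{-1}$ which is bounded by $(M+t)^{\gamma-1}$ for $\gamma\le 1$, uniformly once we know $A_M+B\ge cA_M$ for some $c>0$, which itself follows from $M$ large). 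The exponent bookkeeping: we need $\alpha + \beta/\alpha\le$ something manageable; the condition $\beta<\alpha$ ensures $\beta/\alpha<1$ and, crucially, that $\alpha + \beta/\alpha - 2 < 2\alpha - 2 < 0$ with a margin, making the $t$-integral $\int_0^\infty t^{2\alpha-1}(M+t)^{\text{neg}}\dt$ converge and producing a factor $M^{-\delta}$ with $\delta = \alpha-\beta>0$ (after absorbing $\|T\|^{1/\alpha}$). Choosing $M\ge C\|T\|^{1/(\alpha-\beta)}$ then makes this perturbation have norm $\le 1/2$ relative to $A_M^{2\alpha}$, which is exactly the claimed two-sided bound $\frac12 A_M^{2\alpha}\le (A_M+B)^{2\alpha}\le 2A_M^{2\alpha}$ after noting $(A_M+B)^{2\alpha} = A_M^\alpha\big(1 + A_M^{-\alpha}(\text{perturbation})A_M^{-\alpha}\big)A_M^\alpha$ — here I am implicitly using that $(A_M+B)^{2\alpha}$ and $A_M^\alpha(\cdots)A_M^\alpha$ agree, which requires care since $2\alpha\ne 2$; the honest approach is to prove $\frac12\le A_M^{-\alpha}(A_M+B)^{2\alpha}A_M^{-\alpha}\le 2$ directly via the resolvent representation of the \emph{power} $2\alpha$ applied symmetrically, i.e. sandwich the representation of $(A_M+B)^{2\alpha}$ between $A_M^{-\alpha}$ and compare term by term with that of $A_M^{2\alpha}$.

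The main obstacle I anticipate is precisely this last point: handling the non-integer, and in particular $>1$, power $2\alpha$ cleanly. Operator monotonicity of $t\mapsto t^{2\alpha}$ fails for $2\alpha>1$, so one cannot simply apply a monotonicity argument to an inequality between $A_M+B$ and constant multiples of $A_M$; one genuinely needs the quantitative resolvent/Stieltjes representation and the relative-boundedness bookkeeping to keep all intermediate operators bounded uniformly in $t$ and $M$. The delicate accounting is to verify that every factor of the form $A_M^{\gamma}(A_M+t)^{-1}$ that appears has $\gamma\le 1$ (so that it is bounded with norm $\le (M+t)^{\gamma-1}$) — this is where the hypothesis $\beta<\alpha$ (giving $\beta/\alpha<1$) and $\alpha<1$ are both used — and that the resulting $t$-integral converges with a negative power of $M$ to spare, so that the smallness can be achieved by taking $M$ large with the stated dependence $M\gtrsim\|T\|^{1/(\alpha-\beta)}$. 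Once the perturbative estimate $\|A_M^{-\alpha}\big((A_M+B)^{2\alpha}-A_M^{2\alpha}\big)A_M^{-\alpha}\|\le 1/2$ is in hand, the conclusion is immediate, and the relative form bound zero for $B$ with respect to $A$ has already been extracted along the way from $|B|\le\|T\|^{1/\alpha}A^{\beta/\alpha}$.
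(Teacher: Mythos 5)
Your Step~0 (extracting $|B|\le\|\,|B|^\alpha A^{-\beta}\|^{1/\alpha}A^{\beta/\alpha}$ and hence relative form bound zero, via operator monotonicity of $x\mapsto x^{1/(2\alpha)}$) matches the paper exactly. The trouble begins with the two-sided bound: you correctly locate the obstacle — $2\alpha>1$, so operator monotonicity is unavailable and the simple representation $X^\theta=\frac{\sin\pi\theta}{\pi}\int_0^\infty t^{\theta-1}X(X+t)^{-1}\,\dt$ is only valid for $\theta\in(0,1)$ — but the plan you give does not actually resolve it. ``Split $2\alpha=\alpha\cdot 2$ and iterate'' does not give a workable representation: $(X^\alpha)^2$ leads to a double integral with cross terms that do not compare termwise between $A_M+B$ and $A_M$, and a single representation for an exponent in $(1,2)$ requires a subtracted (regularized) kernel, after which the resolvent-difference bookkeeping you sketch has a real convergence problem — distributing $|B|\lesssim A_M^{\beta/\alpha}$ across the resolvents gives decay of order $t^{-2+\beta/\alpha}$, and $\int_0^\infty t^{2\alpha-1}\,t^{-2+\beta/\alpha}\,\dt$ is not convergent at $\infty$ for all admissible $(\alpha,\beta)$ without shifting powers more carefully than you indicate. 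In short, you have diagnosed the difficulty but the proposed cure is unexecuted and, as written, does not close.

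The paper sidesteps the bad exponent entirely by never representing a power in $(1,2)$. It works with $-\alpha\in(-1,-1/2)$, where the formula $h^{-\alpha}=c_\alpha\int_0^\infty (h+t)^{-1}t^{-\alpha}\,\dt$ is unproblematic, and proves not a difference estimate but a \emph{factorization}: after replacing $A$ by $A+M$, it exhibits a bounded $S_M$ with
$$(A+M+B)^{-\alpha}=(1-S_M)(A+M)^{-\alpha}\,,\qquad \|S_M\|\le 1-\tfrac{1}{\sqrt2}\,.$$
The $S_M$ is built from $Y(t):=(A+M+t)^{-\alpha}|B|(A+M+t)^{-1+\alpha}$, whose norm is controlled by $\|\,|B|^\alpha(A+M)^{-\alpha}\|^{1/\alpha}<1$ (using $\alpha\ge1/2$ and monotonicity of $x\mapsto x^{(1-\alpha)/\alpha}$), so that $(1\pm Y(t))^{-1}$ exists uniformly; the $t$-integral then converges because the integrand decays like $t^{-\alpha}(\inf\spec(A+M)+t)^{-1+\beta}$ and $\alpha>\beta$ provides the margin and the factor $M^{-(\alpha-\beta)}$ that gives the threshold $M\gtrsim\|\,|B|^\alpha A^{-\beta}\|^{1/(\alpha-\beta)}$. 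From the factorization one gets immediately
$$(A+B+M)^{-2\alpha}=(A+M)^{-\alpha}(1-S_M^*)(1-S_M)(A+M)^{-\alpha}\le \|1-S_M\|^2(A+M)^{-2\alpha}\le 2(A+M)^{-2\alpha}$$
and the reverse inequality from $(1-S_M)^{-1}$, all at the bounded-operator level; inverting these gives the claimed inequality for the positive power $2\alpha$. This is the missing idea in your proposal: prove a factorization of $(A+B+M)^{-\alpha}$ against $(A+M)^{-\alpha}$ rather than a difference estimate for $(\cdot)^{2\alpha}$, so that the only exponent you ever integrate against is $-\alpha\in(-1,0)$.
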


The constants $1/2$ and $2$ can be replaced by arbitrary constants
$1-\epsilon$ and $1+\epsilon$ with $\epsilon>0$ at the expense of
choosing $C$ depending on $\epsilon>0$.

\begin{proof}[Proof of Lemma \ref{apriori}]
\emph{Step 0.} By assumption, we have
$$
|B|^{2\alpha} \leq  \||B|^\alpha A^{-\beta}\|^2 A^{2\beta}
$$
and therefore by operator monotonicity of $x\mapsto x^{1/(2\alpha)}$, for any $\epsilon>0$,
$$
|B| \leq  \||B|^\alpha A^{-\beta}\|^{1/\alpha} A^{\beta/\alpha} \leq \tfrac\beta\alpha \epsilon A + (1-\tfrac{\beta}{\alpha}) \||B|^\alpha A^{-\beta}\|^{1/(\beta-\alpha)} \epsilon^{-\beta/(\alpha-\beta)}.
$$
This shows that $B$ is form bounded with respect to $A$ with relative bound zero. Thus, the operator $A+B$ is defined in the sense of quadratic forms.

  \emph{Step 1.} We begin by showing that under the additional assumptions
  \begin{equation}
    \label{eq:small}
    \left\| |B|^\alpha A^{-\alpha} \right\|<1 \,,
  \end{equation}
  as well as
  \begin{equation}
    \label{eq:invertible}
    A+B>0 \,,
  \end{equation}
  there is an operator $S$ satisfying
  \begin{equation}
    \label{eq:abs}
    (A+B)^{-\alpha} = (1-S)A^{-\alpha}
  \end{equation}
  with
  $$
  \| S \|\leq C^{(1)}_{\alpha,\beta} \frac{1}{(\inf\spec A)^{\alpha-\beta}} \frac{\left\| |B|^\alpha A^{-\beta} \right\| \left\| |B|^\alpha A^{-\alpha}\right\|^{(1-\alpha)/\alpha}}{1-\left\| |B|^\alpha A^{-\alpha}\right\|^{1/\alpha}} \,.
  $$
  
  In order to prove this, we define, for $t>0$,
  $$
  Y(t) := (A+t)^{-\alpha} |B| (A+t)^{-1+\alpha} \,.
  $$  
  
Let us show that the operators $1\pm Y(t)$ are invertible for all $t>0$. We have
  \begin{align*}
    \|Y(t)\| & \leq \|(A+t)^{-\alpha} |B|^\alpha\| \||B|^{1-\alpha} (A+t)^{-1+\alpha} \| \\
             & \leq \|A^{-\alpha} |B|^\alpha\| \||B|^{1-\alpha} A^{-1+\alpha} \| \\
             & \leq \|  |B|^\alpha A^{-\alpha} \|^{1/\alpha} \,.
  \end{align*}
  In the last step we used
  $|B|^{2\alpha} \leq \||B|^\alpha A^{-\alpha}\|^2 A^{2\alpha}$ and the operator
  monotonicity of $x\mapsto x^{(1-\alpha)/\alpha}$ (since $1/2\leq\alpha\leq 1$).
  By assumption \eqref{eq:small}, we have $\|Y(t)\|<1$ and therefore,
  $1\pm Y(t)$ is invertible with
  \begin{equation}
  \label{eq:nzbound}
  \left\| (1\pm Y(t))^{-1} \right\| \leq (1-\|Y(t)\|)^{-1} \leq \left(1 - \| |B|^\alpha A^{-\alpha} \|^{1/\alpha} \right)^{-1} \,.
  \end{equation}
  
The definition of $Y(t)$ and the invertibility of $1\pm Y(t)$ implies that for $t>0$,
  $$
  (A+B+t)^{-1} = (A+t)^{-1} \mp (A+t)^{-1+\alpha}Y(t)(1\pm Y(t))^{-1} (A+t)^{-\alpha} \,,
  $$
  where the upper sign is chosen for $B\geq 0$ and the lower sign for
  $B\leq 0$. We now use the fact that for any number $h>0$
  $$
  h^{-\alpha} = c_\alpha \int_0^\infty (h+t)^{-1} t^{-\alpha} \,\dt
  \qquad\text{with}\ c_\alpha= \pi^{-1}\sin(\pi\alpha) \,.
  $$
  (Here we use $\alpha<1$.) Therefore, by the spectral theorem,
  \eqref{eq:abs} holds with
  $$
  S := \pm c_\alpha \int_0^\infty (A+t)^{-1+\alpha} Y(t) (1\pm Y(t))^{-1} (1+tA^{-1})^{-\alpha} t^{-\alpha} \,\dt \,.
  $$
  Clearly,
  \begin{equation}
    \label{eq:sbound}
    \|S\| \leq c_\alpha \int_0^\infty \left\|(A+t)^{-1+\beta}\right\| \left\|(A+t)^{\alpha-\beta}Y(t)\right\| \left\| (1\pm Y(t))^{-1} \right\|  t^{-\alpha} \,\dt \,.
  \end{equation}
  We bound the three terms on the right side separately. For the last factor we use \eqref{eq:nzbound}. Next, we bound
  \begin{align*}
    \left\| (A+t)^{\alpha-\beta}Y(t) \right\| & \leq \left\| (A+t)^{-\beta} |B|^\alpha\right\| \left\|  (A+t)^{-1+\alpha} |B|^{1-\alpha} \right\| \\
                                              & \leq \left\| A^{-\beta} |B|^\alpha \right\| \left\| A^{-1+\alpha} |B|^{1-\alpha} \right\| \\
                                              & \leq \left\| |B|^\alpha A^{-\beta} \right\| \left\| |B|^{\alpha} A^{-\alpha} \right\|^{(1-\alpha)/\alpha}.
  \end{align*}
  Finally,
  $$
  \left\| (A+t)^{-1+\beta} \right\| \leq (\inf\spec A +t)^{-1+\beta} \,.
  $$
  Inserting these bounds into \eqref{eq:sbound} we find
  \begin{align*}
    \|S\| & \leq c_\alpha \frac{\left\| |B|^\alpha A^{-\beta} \right\| \left\| |B|^{\alpha} A^{-\alpha} \right\|^{(1-\alpha)/\alpha}}{1-\| |B|^\alpha A^{-\alpha} \|^{1/\alpha}} \int_0^\infty \frac{\dt}{t^\alpha (\inf\spec A +t)^{1-\beta}} \\
          & = C_{\alpha,\beta}^{(1)} \frac{1}{(\inf\spec A)^{\alpha-\beta}} \frac{\left\|  |B|^\alpha A^{-\beta} \right\| \left\| |B|^{\alpha} A^{-\alpha} \right\|^{(1-\alpha)/\alpha}}{1-\| |B|^\alpha A^{-\alpha} \|^{1/\alpha}} \,.
  \end{align*}
  This proves the claim in Step 1.
  
  \medskip

  \emph{Step 2.} We now prove the statement of the lemma by applying
  Step 1 with $A$ replaced by $A+M$ with a sufficiently large constant $M$.

  We first note that
  \begin{equation}
    \label{eq:ass1}
    \left\| |B|^\alpha (A+M)^{-\beta} \right\| \leq \left\| |B|^\alpha A^{-\beta} \right\| <\infty \,.
  \end{equation}
  Moreover, we claim that
  \begin{equation}
    \label{eq:ass2}
    \left\| |B|^\alpha (A+M)^{-\alpha} \right\| \leq C_{\alpha,\beta}^{(2)} \frac{\left\| |B|^\alpha A^{-\beta} \right\|}{M^{\alpha-\beta}} \,.
  \end{equation}
  In fact, by the spectral theorem,
  \begin{align}
    \label{eq:alphabeta}
    |B|^{2\alpha} & \leq \left\| |B|^\alpha A^{-\beta} \right\|^2 A^{2\beta} \notag \\
                  & \leq \left\| |B|^\alpha A^{-\beta} \right\|^2 \left( \sup_{a\geq 0} \frac{a^{2\beta}}{(a+M)^{2\alpha}} \right) (A+M)^{2\alpha} \notag \\
                  & = \left( \left\| |B|^\alpha A^{-\beta} \right\| \frac{C_{\alpha,\beta}^{(2)}}{M^{\alpha-\beta}} \right)^2 (A+M)^{2\alpha} \,,
  \end{align}
  which proves \eqref{eq:ass2}.
  
  It follows from \eqref{eq:ass2} that assumption \eqref{eq:small} in
  Step 1 (with $A$ replaced by $A+M$) is satisfied if
  $$
  M > \left( C_{\alpha,\beta}^{(2)} \left\| |B|^\alpha A^{-\beta} \right\| \right)^{1/(\alpha-\beta)} \,,
  $$
  which we assume henceforth. Inequality \eqref{eq:alphabeta} together
  with the operator monotonicity of $x\mapsto x^{1/2\alpha}$ (since
  $\alpha\geq 1/2$) implies that, if $B\leq 0$,
  $$
  B \geq - \left( \left\| |B|^\alpha A^{-\beta} \right\| \frac{C_{\alpha,\beta}^{(2)}}{M^{\alpha-\beta}} \right)^{1/\alpha} (A+M)
  $$
  and therefore
  $$
  A+M+B \geq \left( 1 - \left( \left\| |B|^\alpha A^{-\beta} \right\| \frac{C_{\alpha,\beta}^{(2)}}{M^{\alpha-\beta}} \right)^{1/\alpha} \right) (A+M) >0 \,.
  $$
  This shows that assumption \eqref{eq:invertible} in Step 1 (with $A$
  replaced by $A+M$) is satisfied.
  
  Applying the result there, we find that there is an operator $S_M$
  (which is defined as before but with $A$ replaced by $A+M$) with
  $$
  (A+M+B)^{-\alpha} = (1-S_M) (A+M)^{-\alpha} = (A+M)^{-\alpha} (1-S_M^*)
  $$
  such that
  $$
  \left\| S_M \right\| \leq \frac{C_{\alpha,\beta}^{(1)}}{M^{(\alpha-\beta)/\alpha}} \frac{\left( C_{\alpha,\beta}^{(2)}\right)^{(1-\alpha)/\alpha} \left\| |B|^\alpha A^{-\beta} \right\|^{1/\alpha}}{1 - \left(C_{\alpha,\beta}^{(2)}\right)^{1/\alpha} \left\| |B|^\alpha A^{-\beta} \right\|^{1/\alpha} M^{-(\alpha-\beta)/\alpha}} \,.
  $$
  Thus, there is a constant $C$, depending only on $\alpha$ and $\beta$,
  such that
  $$
  \|S_M\| \leq 1 - \frac{1}{\sqrt 2}
  \qquad\text{if}\ M \geq C\left\| |B|^\alpha A^{-\beta} \right\|^{1/(\alpha-\beta)}\,,
  $$
  and therefore, since $1-1/\sqrt 2<\sqrt 2 - 1$,
  \begin{align*}
    (A+B+M)^{-2\alpha} & = (A+M)^{-\alpha} (1-S_M^*)(1-S_M) (A+M)^{-\alpha} \\ & \leq \|1-S_M\|^2 (A+M)^{-2\alpha} \\
                       & \leq (1 + \|S_M\|)^2 (A+M)^{-2\alpha} \\
                       & \leq 2 (A+M)^{-2\alpha} \,.
  \end{align*}
  Moreover,
  \begin{align*}
    (A+M)^{-2\alpha} & = (A+B+M)^{-\alpha} (1-S_M^*)^{-1} (1-S_M)^{-1} (A+B+M)^{-\alpha} \\
                     & \leq \left\| \left( 1- S_M \right)^{-1} \right\|^2 (A+B+M)^{-2\alpha} \\
                     & \leq (1- \|S_M\|)^{-2} (A+B+M)^{-2\alpha} \\
                     & \leq 2 (A+B+M)^{-2\alpha} \,.
  \end{align*}
  This concludes the proof of the lemma.
\end{proof}

%%%%%%%%%%%%%%%%%%%

\section{Differentiability for fixed angular momentum}
\label{s:singlel}

Our goal in this section is to prove Proposition \ref{diff1} about the differentiability of $\lambda\mapsto\tr (C_\ell^H-\lambda U)_-$ at $\lambda=0$. Our strategy is to deduce this from the abstract results in the previous section with $A=C_\ell^H$ and $B=U$. Therefore this section is mostly concerned with verifying the assumptions of Theorem \ref{diffgen0}.

We introduce the notations
\begin{align*}
  p_\ell  := \sqrt{ - \frac{\td^2}{\dr^2} + \frac{\ell(\ell+1)}{r^2}}
  \qquad\text{and}\qquad
  C_{\ell} :=\sqrt{p_\ell^2+1}-1
  \qquad\text{in}\ L^2(\R_+,\dr) \,.
\end{align*}
Throughout we fix a constant $\gamma \in (0,2/\pi)$ and recall that we have defined
$$
C_\ell^H = C_\ell - \gamma r^{-1} \,.
$$

%%%%%%%%%%%%%%%%%%%

\subsection{Removing the Coulomb potential}

The following proposition allows us to remove the Coulomb potential in the operator $(C_\ell^H+M)^{s}$ provided $s$ is not too large. It will be important later that for any $\gamma<2/\pi$ we can choose $s>1/2$.

We recall that the constant $\sigma_\gamma$ was defined after \eqref{eq:defsigma}. The value $\gamma=1/2$ will play a special role in some of the results below and we note that $\sigma_{1/2}=1/2$.

\begin{Prop}\label{hardydomcor}
Let $s\leq 1$ if $\gamma\in(0,1/2)$ and let $s<3/2-\sigma_\gamma$ if $\gamma\in[1/2,2/\pi]$. Then for any $\ell\in\N_0$ and any $M>-\inf\spec C_\ell^H$,
  $$
  \left( C_\ell^H +M \right)^{-s} \left( C_\ell+M \right)^s \qquad\text{and}\qquad
  \left( C_\ell^H +M \right)^{s} \left( C_\ell+M \right)^{-s} \qquad\text{are bounded}.
  $$
\end{Prop}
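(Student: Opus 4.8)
The plan is to reduce the claim to a boundedness statement for $|p_\ell|^s (|p_\ell| - \gamma r^{-1})^{-s}$ — or rather the Coulomb-type version with the shifted dispersion relation — and then bootstrap from the case $s=1$ to all $s<1$ by operator monotonicity, handling the delicate case $\ell=0$, $\gamma\geq 1/2$ separately with the input from \cite{Franketal2019}. First I would observe that the two stated operators are adjoints of each other (up to the self-adjointness of the involved operators), so it suffices to bound one of them, say $T_s := (C_\ell^H+M)^s (C_\ell+M)^{-s}$; equivalently, by taking adjoints and replacing $s$ by $s$, it suffices to bound $(C_\ell+M)^{-s}(C_\ell^H+M)^s$. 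By polar decomposition / the equivalence $\|X Y^{-1}\|<\infty \iff X^* X \leq c\, Y^* Y$, the claim is equivalent to the form inequalities
\begin{equation*}
  c^{-1} (C_\ell+M)^{2s} \leq (C_\ell^H+M)^{2s} \leq c\, (C_\ell+M)^{2s}
\end{equation*}
for a suitable $c<\infty$.

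Next I would reduce the exponent. For $\ell\geq 1$, or for $\ell=0$ and $\gamma<1/2$, Hardy's inequality $r^{-1}\leq \frac{\pi}{2}|p_\ell|$ (with the sharp constant, so $\gamma r^{-1} \leq \gamma\frac\pi2 |p_\ell| $ and $\gamma\frac\pi2<1$ in the relevant regime) combined with $|p_\ell|\leq C_\ell + \text{(bounded)}$ for the bulk of the spectrum shows that $C_\ell^H+M$ and $C_\ell+M$ are comparable operators for $M$ large, i.e.\ the $s=1$ version of the displayed inequality holds. Then I would invoke operator monotonicity: if $0\leq A_1 \leq c A_2$ and $A_2^{-1}\leq c A_1^{-1}$ as forms, then for $0<s<1$ the function $x\mapsto x^s$ being operator monotone yields $A_1^{2s}\leq c^{2s} A_2^{2s}$ (applied to $A_i = (C_\ell^H+M)$ and to their inverses), giving the claim for all $s\leq 1$ at once. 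This is the content sketched in Remark \ref{elementary}.

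The main obstacle is the remaining case $\ell=0$ and $1/2\leq\gamma<2/\pi$, where the sharp Hardy constant is exactly critical or supercritical and the naive comparison at $s=1$ fails. Here I would not try to compare $C_0^H+M$ and $C_0+M$ directly but instead use the recent result of \cite{Franketal2019}: $|p|^s(|p|-\gamma|x|^{-1})^{-s}$ is bounded on $L^2(\R^3)$ for $s<3/2-\sigma_\gamma$. I would transfer this to the radial channel $\ell=0$ (the radial realization of $|p|$ in the $s$-wave is exactly $|p_0| = \sqrt{-\td^2/\dr^2}$ acting in $L^2(\R_+,\dr)$, via the unitary $f(r)\mapsto (4\pi)^{1/2} r f(r|\omega|)$ intertwining the $s$-wave subspace of $L^2(\R^3)$ with $L^2(\R_+)$), obtaining boundedness of $|p_0|^s (|p_0|-\gamma r^{-1})^{-s}$ for $s<3/2-\sigma_\gamma$. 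Since $\sigma_\gamma<1$ for $\gamma<2/\pi$, we may pick $s>1/2$. Finally I would pass from $|p_0|$ to $C_0 = \sqrt{p_0^2+1}-1$: these differ by a bounded-and-relatively-form-small correction near the bottom of the spectrum and agree to leading order at high energies, so with $M$ chosen large enough, $(C_0+M)$ and $(|p_0|+M)$ are comparable, and likewise $(C_0^H+M)$ is comparable to $(|p_0|-\gamma r^{-1}+M)$; combining these comparisons with the $\R^3$ result (and the inequality $|p_0|-\gamma r^{-1} \geq 0$, since $\gamma<2/\pi$ is exactly the form-positivity threshold) gives the desired two-sided bound. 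The technical care needed is in making the "comparable for large $M$" claims precise when one of the operators is only form-defined and not semibounded by an obvious margin; this is where I expect the real work, and it is where the hypothesis $\gamma<2/\pi$ is genuinely used (it guarantees $|p_0|-\gamma r^{-1}\geq 0$ and $\sigma_\gamma<1$, hence the existence of an admissible $s>1/2$).
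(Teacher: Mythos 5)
Your overall plan matches the paper's: pass from $C_\ell^H$ and $C_\ell$ to $p_\ell-\gamma r^{-1}$ and $p_\ell$ via the bounded difference $C_\ell - p_\ell = \sqrt{p_\ell^2+1}-1-p_\ell\in[-1,0]$ (Kato--Rellich, then operator monotonicity of $x\mapsto x^s$, $s\leq 1$, applied to the resulting comparison of squares), and then compare $p_\ell - \gamma r^{-1}$ with $p_\ell$. The paper in fact invokes \cite{Franketal2019} uniformly in all cases (the hypothesis $s\leq 1$ already implies $s<3/2-\sigma_\gamma$ when $\gamma<1/2$); the case split you propose corresponds to Remark~\ref{elementary}, where the paper explains that a simpler Hardy argument suffices when $\gamma<\ell+1/2$. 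So the organization is a legitimate variant, and the ``real work'' you anticipate in the hard case---passing from $p_0$ to $C_0$ and from $p_0-\gamma r^{-1}$ to $C_0^H$---is actually the easy one-line Kato--Rellich step.

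The genuine gap is in your easy-case Hardy argument. The Kato inequality $r^{-1}\leq\frac\pi2\,p_\ell$ is a \emph{form} inequality, so at best it yields a first-power form comparison $c^{-1}(C_\ell+M)\leq C_\ell^H+M\leq c(C_\ell+M)$. Applying the operator monotone function $x\mapsto x^{t}$ (valid only for $t\in[0,1]$) to this gives $(C_\ell^H+M)^{2s}\leq c'(C_\ell+M)^{2s}$ only when $2s\leq 1$, i.e.\ $s\leq 1/2$. Your step ``if $0\leq A_1\leq cA_2$ and the reverse hold as forms, then $A_1^{2s}\leq c^{2s}A_2^{2s}$ for all $0<s<1$'' is false for $s>1/2$, since $x\mapsto x^{2s}$ is not operator monotone in that range. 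To reach the full range $s\leq 1$ you need the \emph{squared} comparison $(C_\ell^H+M)^2\sim(C_\ell+M)^2$, that is, comparability of $\|(C_\ell^H+M)f\|$ and $\|(C_\ell+M)f\|$, and only then apply $x\mapsto x^s$ with $s\leq 1$. The correct input for this is the channel Hardy \emph{operator} bound $\|r^{-1}f\|\leq(\ell+1/2)^{-1}\|p_\ell f\|$ from \eqref{eq:hardy} (via $p_\ell^2\leq(C_\ell+1)^2$), not the Kato form inequality: it is precisely the constant $(\ell+1/2)^{-1}$, not $\pi/2$, that produces the threshold $\gamma<\ell+1/2$ (hence $\gamma<1/2$ when $\ell=0$) governing the case split. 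With $\pi/2$ your ``relevant regime'' would be all $\gamma<2/\pi$, which already contradicts the division into cases that you set up and would make the appeal to \cite{Franketal2019} superfluous---a sign that the inequality has been used in the wrong form.
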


The proof shows that the operators are bounded uniformly in $\ell$. We will, however, not use this fact.

\begin{proof}
  Since $0\geq \sqrt{p_\ell^2+1}-1-p_\ell\geq -1$, the Kato--Rellich theorem
  implies that
  $\left( C_\ell^H +M \right)^{-1} \left( p_\ell - \gamma r^{-1} +M\right)$
  and $(C_\ell^H+M)\left(p_\ell - \gamma r^{-1} +M\right)^{-1}$ are bounded.
Note that the assumptions of the Proposition imply that $s\leq 1$. Therefore the operator monotonicity of $x\mapsto x^s$ implies that
  $\left( C_\ell^H +M \right)^{-s} \left( p_\ell - \gamma r^{-1} +M\right)^s$
  and
  $\left( C_\ell^H +M \right)^{s} \left( p_\ell - \gamma r^{-1} +M\right)^{-s}$
  are bounded. Thus, it suffices to prove that
  $\left( p_\ell - \gamma r^{-1} +M\right)^{-s}$ $\left( p_\ell+M \right)^s$
  and
  $\left( p_\ell - \gamma r^{-1} +M\right)^{s} \left( p_\ell+M \right)^{-s}$
  are bounded. By \cite[Theorem 1.1]{Franketal2019} we have
  \begin{align*}
    & \left( p_\ell+M \right)^{2s} \leq 2^{(2s-1)_+} \left( p_\ell^{2s} + M^{2s} \right)
      \leq 2^{(2s-1)_+} \left( A_{s,\gamma} \left( p_\ell - \frac{\gamma}{r} \right)^{2s} + M^{2s} \right).
  \end{align*}
  Clearly, the operator on the right side is bounded by a constant times
  $(p_\ell -\gamma r^{-1} +M)^{2s}$.
  This shows that
  $\left( p_\ell - \gamma r^{-1} +M\right)^{-s} \left( p_\ell+M \right)^s$
  is bounded. The proof for
  $\left( p_\ell - \gamma r^{-1} +M\right)^{s} \left( p_\ell+M \right)^{-s}$
  is similar, using also \cite[Theorem 1.1]{Franketal2019}.
\end{proof}

\begin{Rem}\label{elementary}
The above proof relies on \cite{Franketal2019}. However, this machinery is only needed for $1/2\leq\gamma<2/\pi$ and $\ell=0$.  To see this, we recall Hardy's inequality in angular momentum channel $\ell$,
\begin{equation}
\label{eq:hardy}
  \left(f,\frac{(\ell+1/2)^2}{r^2}f\right)=\int_0^\infty\frac{\ell(\ell+1)+1/4}{r^2}|f(r)|^2\,\dr\leq(f,p_\ell^2f) \,.
\end{equation}
This implies that
$$
\left\| \left( p_\ell - \gamma r^{-1} \right) f \right\|
  \geq \left\| p_\ell f\right\| - \gamma \left\| r^{-1} f \right\|
  \geq (1-(\ell+1/2)^{-1} \gamma) \left\| p_\ell f\right\|.
$$
This together with operator monotonicity of $x\mapsto x^s$ for $s\leq 1$ implies that, if $\gamma<\ell+1/2$, then
$$
p_\ell^{2s} \leq (1-(\ell+1/2)^{-1}\gamma)^{-2s} (p_\ell-\gamma r^{-1})^{2s} \,.
$$
Note that the assumption $\gamma<\ell+1/2$ is satisfied for all $\gamma\leq 2/\pi$ if $\ell\geq 1$. Similarly, one shows that
$$
(p_\ell-\gamma r^{-1})^{2s} \leq (1+(\ell+1/2)^{-1}\gamma)^{-2s} p_\ell^{2s} \,.
$$
The previous two bounds yield Proposition \ref{hardydomcor} in the claimed restricted range.
\end{Rem}

\begin{Prop}
  \label{hardydomcoru}
Let $s\leq 1$ if $\gamma\in(0,1/2)$ and let $s<3/2-\sigma_\gamma$ if $\gamma\in[1/2,2/\pi)$. Let $\ell\in\N_0$, $M>-\inf\spec C_\ell^H$ and $0\leq U\in r^{-1} L^\infty((0,\infty))$. Then
  $$
  \limsup_{\lambda\to 0} \left\| \left( C_\ell^H -\lambda U+M \right)^{-s} \left(C_\ell^H +M \right)^s \right\| <\infty \,.
  $$
\end{Prop}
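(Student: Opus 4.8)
The plan is to run the proof of Proposition~\ref{hardydomcor} with the Coulomb potential $\gamma r^{-1}$ replaced by $V_\lambda := \gamma r^{-1} + \lambda U$, so that $C_\ell^H - \lambda U = C_\ell - V_\lambda$; the point is that $\lambda U$ can be absorbed into the coupling. Put $c_U := \|rU\|_\infty$, so $0\le U\le c_U r^{-1}$. I would first fix $\delta>0$ small enough that $\gamma+\delta<2/\pi$, that $\gamma-\delta>0$, that $\gamma+\delta<1/2$ in case $\gamma<1/2$, and that $s<3/2-\sigma_{\gamma+\delta}$ in case $\gamma\ge1/2$ (possible since $\sigma$ is continuous and $s<3/2-\sigma_\gamma$ strictly). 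Then for $|\lambda|\le\delta/c_U$ one has, pointwise, $0\le V_\lambda\le(\gamma+\delta)r^{-1}$; in particular $p_\ell-V_\lambda\ge p_\ell-(\gamma+\delta)r^{-1}\ge 0$ by Herbst's inequality, while $C_\ell-V_\lambda+M=C_\ell^H-\lambda U+M>0$ for all small $\lambda$ since $\inf\spec(C_\ell^H-\lambda U)\to\inf\spec C_\ell^H$ and $M>-\inf\spec C_\ell^H$.

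Next, for such $\lambda$ I would factor
\begin{align*}
(C_\ell^H-\lambda U+M)^{-s}(C_\ell^H+M)^s
&= \big[(C_\ell-V_\lambda+M)^{-s}(p_\ell-V_\lambda+M)^s\big]\big[(p_\ell-V_\lambda+M)^{-s}(p_\ell+M)^s\big]\\
&\quad\times \big[(p_\ell+M)^{-s}(C_\ell+M)^s\big]\big[(C_\ell+M)^{-s}(C_\ell^H+M)^s\big]
\end{align*}
and bound each bracket uniformly in $\lambda$. The fourth bracket is bounded by Proposition~\ref{hardydomcor}. For the first and third I would use that $C_\ell$ and $p_\ell$ differ by the bounded operator $\sqrt{p_\ell^2+1}-1-p_\ell$ (with values in $[-1,0]$), so that $(C_\ell-V_\lambda+M)^{-1}(p_\ell-V_\lambda+M)$ and $(p_\ell+M)^{-1}(C_\ell+M)$ are bounded, uniformly in $\lambda$ (using that $\|(C_\ell-V_\lambda+M)^{-1}\|$ is bounded for small $\lambda$); operator monotonicity of $x\mapsto x^s$ for $0\le s\le1$ — in the form $\|A^{-s}B^s\|\le\|A^{-1}B\|^s$ for positive boundedly invertible $A,B$ — then handles the fractional powers. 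Thus everything reduces to the second bracket, i.e.\ to proving the operator inequality
$$
(p_\ell+M)^{2s}\le C\,(p_\ell-V_\lambda+M)^{2s}
$$
with $C$ independent of $\lambda$; conjugating this with $(p_\ell-V_\lambda+M)^{-s}$ then bounds the second bracket.

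This last inequality is the crux, and it is here that I expect the main difficulty. I would obtain it from the generalized Hardy inequality of \cite{Franketal2019}, which for coupling $\gamma+\delta<2/\pi$ and $s$ in the admissible range yields $p_\ell^{2s}\le A_{s,\gamma+\delta}(p_\ell-V)^{2s}$ for every measurable $0\le V\le(\gamma+\delta)r^{-1}$, with $A_{s,\gamma+\delta}$ depending only on $s$ and $\gamma+\delta$. (For $\gamma<1/2$, or for any $\ell\ge1$, only the ordinary Hardy inequality $r^{-2}\le(\ell+\tfrac12)^{-2}p_\ell^2$ is needed, as in Remark~\ref{elementary}.) Applying this with $V=V_\lambda$ and using $2s\le2$ together with the fact that the shift by $M$ is scalar,
$$
(p_\ell+M)^{2s}\le 2^{(2s-1)_+}\big(p_\ell^{2s}+M^{2s}\big)\le 2^{(2s-1)_+}\big(A_{s,\gamma+\delta}(p_\ell-V_\lambda)^{2s}+M^{2s}\big)\le C\,(p_\ell-V_\lambda+M)^{2s},
$$
the last step using $(p_\ell-V_\lambda)^{2s}\le(p_\ell-V_\lambda+M)^{2s}$ and $M^{2s}\le(p_\ell-V_\lambda+M)^{2s}$, valid since $p_\ell-V_\lambda\ge0$. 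The delicate case is $\ell=0$, $\gamma\in[1/2,2/\pi)$: there $\lambda U$ need not be operator bounded by $C_0^H$ at all (elements of $\dom C_0^H$ may behave like $r^{1-\sigma_\gamma}$ near the origin with $\sigma_\gamma\ge1/2$, so $r^{-1}f\notin L^2$), the operator domains of $C_0^H-\lambda U$ and $C_0^H$ differ, and ordinary perturbation theory does not apply; the resolution is precisely to absorb $\lambda U$ into the still-subcritical coupling $\gamma+\delta<2/\pi$, for which the fractional Hardy inequality remains valid — which is also the place where the hypothesis $\gamma<2/\pi$ is used.
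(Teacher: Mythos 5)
Your proof is correct and takes essentially the same route as the paper's (very terse) argument: factor the operator into pieces, handle the Chandrasekhar-versus-ultrarelativistic difference via Kato--Rellich plus operator monotonicity of $x\mapsto x^s$ for $s\le 1$, and reduce the crux to a generalized Hardy inequality from \cite{Franketal2019} applied to a potential that is still subcritical because $|\lambda|$ is small. The paper organizes the factorization slightly differently --- its displayed reduction is to boundedness of $(p_\ell-\lambda U+M)^{-s}(p_\ell+M)^s$, i.e.\ it peels off the Coulomb term separately and applies \cite[Theorem 4.1]{Franketal2019} to the perturbation $\lambda U\le |\lambda|\,\|rU\|_\infty\, r^{-1}$ alone, whereas you absorb the whole $V_\lambda=\gamma r^{-1}+\lambda U$ at once --- but both rely on exactly the same ingredient (the generalized Hardy inequality for measurable $0\le V\le\gamma'/r$ with $\gamma'<2/\pi$, which is the content of \cite[Theorem 4.1]{Franketal2019}), and both exploit the same observation that smallness of $\lambda$ keeps the effective coupling subcritical. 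Your side remarks --- that for $\gamma<1/2$ or $\ell\ge1$ the elementary Hardy bound of Remark~\ref{elementary} suffices, and that the delicate case $\ell=0$, $\gamma\in[1/2,2/\pi)$ is precisely where the operator domain of $C_0^H-\lambda U$ may differ from that of $C_0^H$ and where $\gamma<2/\pi$ is used --- are accurate and match the paper's intent.
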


\begin{proof}
Using Proposition \ref{hardydomcor} and the arguments in its proof we see that it is enough to prove that
$$
  \limsup_{\lambda\to 0} \left\| \left( p_\ell -\lambda U+M \right)^{-s} \left(p_\ell +M \right)^s \right\| <\infty \,.
  $$
This follows as in the proof of Proposition \ref{hardydomcor} from \cite[Theorem 4.1]{Franketal2019}.
\end{proof}

%%%%%%%%%%%%%%%%%%%

\subsection{Trace ideal bounds}

As we already mentioned, we denote by $L^p_{\rm c}([0,\infty))$ the space of all functions in $L^p$ whose support is a compact subset of $[0,\infty)$. Moreover, we denote by $L^p_{\rm loc}((0,\infty))$ the space of all functions which are in $L^p$ on any compact subset of $(0,\infty)$. We now introduce the test function space which appears in Theorem \ref{convfixedl} and Proposition \ref{diff1},
\begin{align}
  \label{eq:deftestnod}
  \begin{split}
    \cd_\gamma^{(0)} & :=
\begin{cases}
  & \{ W\in L^1_{\rm loc}((0,\infty)):\ W \in L^{2s}((0,\infty),\min\{r^{2s'-1},1\}\dr)\cap L^1((1,\infty))  \\
  & \qquad\qquad\qquad\qquad\quad\ \text{for some}\ 1/2<s'<s\leq 1 \} \\
    & \qquad\qquad\qquad \text{if}\ 0<\gamma<1/2 \,,
\\
  & \{ W\in L^1_{\rm loc}((0,\infty)):\ W \in L^{2s}((0,\infty),\min\{r^{2s'-1},1\}\dr)\cap L^1((1,\infty))  \\
  & \qquad\qquad\qquad\qquad\quad\ \text{for some}\ 1/2<s'<s <3/2-\sigma_\gamma \} \\
    & \qquad\qquad\qquad \text{if}\ 1/2\leq\gamma<2/\pi \,.
  \end{cases}
  \end{split}
\end{align}

It will be convenient to introduce another class of function spaces, namely, for $s\geq 1/2$ we define
\begin{align}
  \label{eq:defsigmadeltanod}
  \begin{split}
    \ck_{s}^{(0)} & :=\left\{W \in L^1_{\rm loc}((0,\infty)):\ \|W\|_{\ck_{s}^{(0)}}<\infty\right\} \,, \\
    \|W\|_{\ck_{s}^{(0)}} & := \int_0^1 r^{2s-1}|W(r)|\,\dr + \int_1^\infty |W(r)|\,\dr \,.
      \end{split}
\end{align}
The connection between these spaces and $\cd_\gamma^{(0)}$ is that
\begin{align}
  \label{eq:deftestnod0}
  \begin{split}
    \cd_\gamma^{(0)} =
\begin{cases}
  & \{ W\in\ck_{s}^{(0)}:\  |W|^{2s}\in\ck_{s'}^{(0)} \ \text{for some}\ 1/2<s'<s\leq 1 \} \\
    & \qquad\qquad\qquad \text{if}\ 0<\gamma<1/2 \,.
\\
  & \{ W\in\ck_{s}^{(0)}:\  |W|^{2s}\in\ck_{s'}^{(0)}  \ \text{for some}\ 1/2<s'<s<3/2-\sigma_\gamma \} \\
  & \qquad\qquad\qquad \text{if}\ 1/2\leq\gamma<2/\pi \,.
  \end{cases}
  \end{split}
\end{align}
Indeed, the inclusion $\supset$ is clear and for $\subset$ it suffices to note that
\begin{align}
\label{eq:holder}
\begin{split}
\int_0^1 r^{2s-1} |W|\,\dr & \leq \left( \int_0^1 r^{2s'-1}|W|^{2s}\,\dr \right)^{1/(2s)} \left(\int_0^1 r^{2s-1 + \tfrac{2(s-s')}{2s-1}} \,\dr \right)^{(2s-1)/(2s)} \\
& = A_{s,s'} \left( \int_0^1 r^{2s'-1}|W|^{2s}\,\dr \right)^{1/(2s)}.
\end{split}
\end{align}

The spaces $\ck_s^{(0)}$ appear naturally in the following trace ideal bound. We denote the Hilbert--Schmidt norm by $\|\cdot\|_2$.

\begin{Prop}
  \label{genReltrclassnod}
  Let $s\in(1/2,1]$, $\ell\in\N_0$ and $M>0$. Then for all $0\leq W \in\ck_{s}^{(0)}$,
  \begin{align}
    \label{eq:genReltrclassnod}
    \|W^{1/2}(C_\ell+M)^{-s}\|_2^2 \leq A_{s,\ell,M} \|W\|_{\ck_{s}^{(0)}}\,.
  \end{align}
  In particular,
  \begin{align}
    \label{eq:genSobolevDualnod}
    W \leq A_{s,\ell,M} \|W\|_{\ck_{s}^{(0)}} (C_\ell+M)^{2s}\,.
  \end{align}
\end{Prop}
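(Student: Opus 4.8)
The plan is to estimate the Hilbert--Schmidt norm $\|W^{1/2}(C_\ell+M)^{-s}\|_2^2$ by computing the integral kernel of the operator $(C_\ell+M)^{-s}$ explicitly enough to obtain pointwise bounds on its diagonal, and then integrate against $W$. First I would write
$$
\|W^{1/2}(C_\ell+M)^{-s}\|_2^2 = \int_0^\infty W(r) \, G_{s,\ell,M}(r,r)\,\dr \,,
$$
where $G_{s,\ell,M}$ is the kernel of $(C_\ell+M)^{-2s}$; so everything reduces to a pointwise upper bound on the diagonal $G_{s,\ell,M}(r,r)$. Since $C_\ell+M = \sqrt{p_\ell^2+1}-1+M$, functional calculus in the spectral representation of $p_\ell$ (a Hankel-type transform, with $p_\ell^2$ acting as multiplication by $k^2$ on $L^2(\R_+,\dk)$ after conjugating by the appropriate Bessel-function kernel) gives
$$
G_{s,\ell,M}(r,r) = \int_0^\infty \big(\sqrt{k^2+1}-1+M\big)^{-2s} \, |j_\ell(kr)|^2 \, k^2 \,\dk
$$
up to normalization, where $j_\ell$ is the spherical Bessel function (more precisely the function $\sqrt{r}J_{\ell+1/2}(r)$ entering the radial transform). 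The key analytic input is the elementary two-regime bound on the multiplier, $(\sqrt{k^2+1}-1+M)^{-2s} \le A_{s,M}\min\{1, k^{-4s}\}$, valid for $s \le 1$, together with the standard bounds $|j_\ell(t)|^2 t \lesssim \min\{t^{2\ell+1},1\}$ (or, more crudely, $\lesssim \min\{t,1\}$, which already suffices since $\ell \ge 0$).

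Carrying this out, I would split the $k$-integral at $k=1/r$ (say). For $kr \le 1$ one uses $|j_\ell(kr)|^2 (kr) \lesssim \min\{kr,1\} = kr$ and $(\sqrt{k^2+1}-1+M)^{-2s}\lesssim 1$, giving a contribution $\lesssim r^{-1}\int_0^{1/r} (kr)\cdot k\,\dk \cdot (\text{stuff}) \sim \min\{r^{2s-1},1\}$-type behavior after also accounting for the region $k\ge 1$. More carefully: for $r \le 1$ the dangerous region is $k \gtrsim 1$, where the multiplier decays like $k^{-4s}$ and $|j_\ell(kr)|^2 k^2 \lesssim k/r$, so $\int_{1}^\infty k^{-4s}\cdot (k/r)\,\dk \lesssim r^{-1}$ — but this must be balanced against the small-$k$ part, and the upshot of tracking both is precisely $G_{s,\ell,M}(r,r) \lesssim_{s,\ell,M} r^{2s-1}$ for $r \le 1$ and $G_{s,\ell,M}(r,r)\lesssim_{s,\ell,M} 1$ for $r \ge 1$; i.e.
$$
G_{s,\ell,M}(r,r) \le A_{s,\ell,M}\,\big(r^{2s-1}\one_{\{r\le 1\}} + \one_{\{r>1\}}\big) \,.
$$
Plugging this into the identity for $\|W^{1/2}(C_\ell+M)^{-s}\|_2^2$ yields exactly the weighted $\ck_s^{(0)}$-norm on the right-hand side, proving \eqref{eq:genReltrclassnod}. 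The operator inequality \eqref{eq:genSobolevDualnod} then follows immediately: for any $f$,
$$
(f, Wf) = \|W^{1/2}f\|^2 = \|W^{1/2}(C_\ell+M)^{-s}\,(C_\ell+M)^s f\|^2 \le \|W^{1/2}(C_\ell+M)^{-s}\|^2\, \|(C_\ell+M)^s f\|^2 \,,
$$
and bounding the operator norm by the Hilbert--Schmidt norm gives \eqref{eq:genSobolevDualnod}.

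The main obstacle is obtaining the diagonal kernel bound with the correct power $r^{2s-1}$ for small $r$ uniformly in the relevant parameters: one must handle the spherical Bessel functions carefully near the origin and at infinity, and the borderline nature of $s>1/2$ (where $\int_0^\infty(\sqrt{k^2+1}-1+M)^{-s}\,\dk$ diverges at $s=1/2$, matching the remark in Section~\ref{sec:strategy} about $(\sqrt{k^2+1}-1+M)^{-1}\notin L^1$) means the integral estimates are genuinely sharp and cannot be done by brute-force domination alone — the cancellation/oscillation of $j_\ell$ is not needed, but the correct interplay of the two scaling regimes of the multiplier with the two regimes of $|j_\ell(kr)|^2$ is essential. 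An alternative, perhaps cleaner, route to the same pointwise kernel bound is to use the subordination formula $(C_\ell+M)^{-2s} = c_s\int_0^\infty e^{-t(C_\ell+M)} t^{2s-1}\,\dt$ together with known Gaussian-type bounds on the heat kernel of $\sqrt{p_\ell^2+1}$ (or of $p_\ell$, since the bounded perturbation $\sqrt{p_\ell^2+1}-1-p_\ell$ is harmless), but either way the crux is the same sharp two-regime integration.
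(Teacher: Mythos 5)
Your overall plan is the same as the paper's: reduce the Hilbert--Schmidt norm to an integral of $W$ against the diagonal of the kernel of $(C_\ell+M)^{-2s}$, diagonalize via the Fourier--Bessel (Hankel) transform, and prove a pointwise two-regime bound
$\int_0^\infty kr\,J_{\ell+1/2}(kr)^2 (\sqrt{k^2+1}-1+M)^{-2s}\,\dk \lesssim_{s,\ell,M} r^{2s-1}\one_{\{r\le 1\}}+\one_{\{r>1\}}$.
This is exactly Lemma~\ref{auxboundnod} in the paper, and the final deduction of the operator inequality from the Hilbert--Schmidt bound is also the same.

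However, there are concrete problems in your kernel estimate. First, the multiplier decays like $k^{-2s}$, not $k^{-4s}$: $\sqrt{k^2+1}-1+M\sim k$ as $k\to\infty$, so $(\sqrt{k^2+1}-1+M)^{-2s}\lesssim \min\{1,k^{-2s}\}$. With the correct $k^{-2s}$ your intermediate integral $\int_1^\infty k^{-2s}\cdot(k/r)\,\dk$ diverges for $s\le 1$, so the split cannot be taken at $k=1$; it must be taken at $k=1/r$ (or at $k=\nu/r$ as in the paper, which gives $\ell$-uniform constants), together with the small-argument bound $J_{\ell+1/2}(t)^2\, t\lesssim t$ for $t\le 1$. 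Second and more seriously, the naive two-regime integration genuinely fails at the endpoint $s=1$: it produces an extra factor $\log(1/r)$ for small $r$, which is not allowed by the claimed bound $r^{2s-1}=r$. The paper circumvents this by observing that for $s=1$ one can replace the denominator by $k^2+1$ and use the closed form
$\int_0^\infty kr\,J_\nu(kr)^2(k^2+1)^{-1}\,\dk = r\,K_\nu(r)I_\nu(r)$,
together with the explicit product bound on $K_\nu I_\nu$; the spurious log disappears. Your sketch does not address this endpoint, and the alternative subordination route you mention would face the same issue unless one again appeals to an explicit identity or a sharper heat-kernel bound, so as written there is a real gap at $s=1$.
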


\begin{proof}
We recall the definition of the Fourier--Bessel (or Hankel) transform $\Phi_\ell$,
$$
(\Phi_\ell f)(k):= i^\ell \int_0^\infty (kr)^{1/2}J_{\ell+1/2}(kr)f(r)\,\dr \quad \text{for all}\ \ell\in\N_0\,;
$$
see e.g., \cite[(B.105)]{Messiah1969}. It is well known that for each
$\ell\in\N_0$, $\Phi_\ell$ is unitary on $L^2(\R_+)$ and, moreover, it
diagonalizes $p_\ell^2$ in the sense that for any $f$ from the domain of
this operator,
$$
(\Phi_\ell p_\ell^2 f)(k) = k^2 (\Phi_\ell f)(k)\,.
$$
This implies
  \begin{align*}
    \|W^{1/2}(C_\ell+M)^{-s}\|_2^2 & = \int_0^\infty \dr\, W(r)\int_0^\infty \dk\, \frac{kr J_{\ell+1/2}(kr)^2}{(\sqrt{k^2+1}-1+M)^{2s}}\\
                                   & \leq A_{s,\ell,M} \|W\|_{\ck_{s}^{(0)}}\,.
  \end{align*}
The inequality here follows from Lemma \ref{auxboundnod} in the appendix. Inequality \eqref{eq:genSobolevDualnod} follows from \eqref{eq:genReltrclassnod} since the Hilbert--Schmidt norm does not exceed the operator norm.
\end{proof}

The following bound for the Chandrasekhar hydrogen operator is an immediate consequence of Propositions \ref{hardydomcor} and \ref{genReltrclassnod}.

\begin{Cor}
  \label{genReltrclassHydro}
Let $s\in(1/2,1]$ if $\gamma\in(0,1/2)$ and let $s\in(1/2,3/2-\sigma_\gamma)$ if $\gamma\in[1/2,2/\pi)$. Let $\ell\in\N_0$ and $M>-\inf\spec C_\ell^H$. Then for all $0\leq W \in\ck_{s}^{(0)}$,
  \begin{align}
    \label{eq:genReltrclassHydro}
    \|W^{1/2}(C_\ell^H+M)^{-s}\|_2^2 \leq A_{\gamma,s,\ell,M} \|W\|_{\ck_{s}^{(0)}}\,.
  \end{align}
  In particular,
  \begin{align}
    \label{eq:genSobolevDualHydro}
    W \leq A_{\gamma,s,\ell,M} \|W\|_{\ck_{s}^{(0)}} (C_\ell^H+M)^{2s}\,.
  \end{align}
\end{Cor}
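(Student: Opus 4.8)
The plan is to derive \eqref{eq:genReltrclassHydro} from Propositions \ref{hardydomcor} and \ref{genReltrclassnod} by a single factorization, and then to read off \eqref{eq:genSobolevDualHydro}. First I would note that the hypothesis $M>-\inf\spec C_\ell^H$ in fact forces $M>0$: the operator $C_\ell^H$ carries an attractive long-range Coulomb tail and hence has (infinitely many) negative eigenvalues, so $\inf\spec C_\ell^H<0$. In particular the operators $(C_\ell+M)^{\pm s}$ are well defined, and by Proposition \ref{hardydomcor} --- which applies precisely under the stated restrictions on $s$ and $\gamma$ --- the operator $T:=(C_\ell+M)^{s}(C_\ell^H+M)^{-s}$ is bounded (being the adjoint of the bounded operator $(C_\ell^H+M)^{-s}(C_\ell+M)^{s}$), with $\|T\|$ controlled in terms of $\gamma,s,\ell,M$, and one has $(C_\ell^H+M)^{-s}=(C_\ell+M)^{-s}\,T$.

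Next I would write
\[
W^{1/2}(C_\ell^H+M)^{-s}=\bigl(W^{1/2}(C_\ell+M)^{-s}\bigr)\,T .
\]
Here $W\geq 0$ and $W\in\ck_s^{(0)}$ with $s\in(1/2,1]$: for $\gamma\in[1/2,2/\pi)$ one has $\sigma_\gamma\geq\sigma_{1/2}=1/2$, hence $3/2-\sigma_\gamma\leq 1$, so the admissible range $s\in(1/2,3/2-\sigma_\gamma)$ is contained in $(1/2,1]$ and Proposition \ref{genReltrclassnod} applies, giving $\|W^{1/2}(C_\ell+M)^{-s}\|_2^2\leq A_{s,\ell,M}\|W\|_{\ck_s^{(0)}}$. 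Since right multiplication by a bounded operator maps Hilbert--Schmidt operators to Hilbert--Schmidt operators with $\|XT\|_2\leq\|X\|_2\|T\|$, we obtain
\[
\bigl\|W^{1/2}(C_\ell^H+M)^{-s}\bigr\|_2^2\leq \|T\|^2\,A_{s,\ell,M}\,\|W\|_{\ck_s^{(0)}}=:A_{\gamma,s,\ell,M}\,\|W\|_{\ck_s^{(0)}},
\]
which is \eqref{eq:genReltrclassHydro}.

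Finally, for \eqref{eq:genSobolevDualHydro} I would use that the operator norm is dominated by the Hilbert--Schmidt norm. Since
\[
(C_\ell^H+M)^{-s}\,W\,(C_\ell^H+M)^{-s}=\bigl(W^{1/2}(C_\ell^H+M)^{-s}\bigr)^{*}\bigl(W^{1/2}(C_\ell^H+M)^{-s}\bigr),
\]
the bound just proved yields $(C_\ell^H+M)^{-s}W(C_\ell^H+M)^{-s}\leq A_{\gamma,s,\ell,M}\|W\|_{\ck_s^{(0)}}$ as an inequality of bounded operators; conjugating both sides with $(C_\ell^H+M)^{s}$ gives the form inequality \eqref{eq:genSobolevDualHydro}. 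Since Propositions \ref{hardydomcor} and \ref{genReltrclassnod} supply all the analysis, there is no real obstacle here; the only things requiring a moment's care are the bookkeeping of the parameter ranges (so that both propositions are simultaneously applicable) and the observation that $M>0$.
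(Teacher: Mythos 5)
Your proof is correct and is exactly the argument the paper has in mind: the paper simply asserts that the corollary is ``an immediate consequence of Propositions \ref{hardydomcor} and \ref{genReltrclassnod}'', and your factorization $W^{1/2}(C_\ell^H+M)^{-s}=\bigl(W^{1/2}(C_\ell+M)^{-s}\bigr)(C_\ell+M)^{s}(C_\ell^H+M)^{-s}$, together with the parameter-range check (that $3/2-\sigma_\gamma\leq 1$ when $\gamma\geq 1/2$) and the passage from Hilbert--Schmidt bound to form bound, supplies precisely the omitted details.
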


%%%%%%%%%%%%%%

%%%%%%%%%%%%%%%

\subsection{Proof of Proposition \ref{diff1}}

We now prove the main result of this section. Let $\gamma\in(0,2/\pi)$, $\ell_0\in\N_0$ and let $U$ be a non-negative function on $(0,\infty)$ which belongs either to $r^{-1} L^\infty_{\rm c}([0,\infty))$ or to $\mathcal D_\gamma^{(0)}$. We will apply Theorem \ref{diffgen0} with $A=C_{\ell_0}^H$ and $B=U$.

The fact that $A_-$ is trace class was shown in \cite[Lemma 1]{Franketal2008}. Moreover, the fact that zero is not an eigenvalue of $C^H$, and consequently not of $C^H_{\ell_0}$, was shown by Herbst in \cite[Theorem 2.3]{Herbst1977}.

Now let us assume first that $U\in r^{-1} L^\infty_{\rm c}$. Then $B$ is form bounded with respect to $A$ by Kato's inequality and, by a simple computation, $U\in\ck_{s}^{(0)}$ for any $s>1/2$. Thus, assumption \eqref{eq:traceclassdelta0} follows from Corollary \ref{genReltrclassHydro}. Moreover, assumption \eqref{eq:relbounddelta} follows from Proposition \ref{hardydomcoru}.

On the other hand, if $U\in \mathcal D_\gamma^{(0)}$, then by \eqref{eq:deftestnod0} there are $1/2<s'<s\leq 1$ if $\gamma<1/2$ and $1/2<s'<s<3/2-\sigma_\gamma$ if $1/2\leq\gamma<2/\pi$ such that $U\in\ck_{s}^{(0)}$ and $U^{2s}\in\ck_{s'}^{(0)}$. Thus, Assumption \eqref{eq:traceclassdelta0} follows again from Corollary \ref{genReltrclassHydro}. Moreover, Corollary~\ref{genReltrclassHydro} with $s'$ instead of $s$ and with $U^{2s}$ instead of $W$ implies that \eqref{eq:increasedrelbdd} holds, and therefore assumption \eqref{eq:relbounddelta} follows from Proposition \ref{diffgen}.
\qed

%%%%%%%%%%%%%%%

\section{Controlling large angular momenta}
\label{s:alll}

\subsection{Test functions}
We define
\begin{align}
\label{eq:deftest}
\begin{split}
\cd := & \left\{ W \in L^1_{\rm loc}((0,\infty)):\ \sup_{R\geq 1} R^{(4s-1)/2} \left( \int_R^{2R} |W|\,\dr + \int_R^{2R} |W|^{2s} \,\dr \right) <\infty \,, \right. \\
& \qquad\qquad \qquad\qquad\quad \left. \int_0^1 r^{2s'-1} |W|^{2s}\,\dr <\infty \ \text{for some}\ 1/2< s'<s\leq 3/4 \right\}.
\end{split}
\end{align}
It will be convenient to introduce another class of function spaces, namely, for $s\geq1/2$ and $\delta\in[0,2s-1]$ we define
\begin{align}
  \label{eq:defsigmadelta}
  \begin{split}
    \ck_{s,\delta} & :=\{W \in L^1_{\rm loc}((0,\infty)):\ \|W\|_{\ck_{s,\delta}}<\infty\} \,, \\
    \|W\|_{\ck_{s,\delta}} & :=\max\left\{ \int_0^1 r^{2s-1}|W(r)|\,\dr \,,\, \sup_{R\geq 1} R^{(\delta+4s-1)/2} \int_R^{2R} |W(r)|\,\dr\right\}.
  \end{split}
\end{align}
The connection between these spaces and $\cd$ is that
\begin{align}
  \label{eq:deftest0}
  \begin{split}
    \cd & = \{W \in\ck_{s,0}\,:\  |W|^{2s}\in\ck_{s',4(s-s')}\\
    & \qquad\qquad\qquad\qquad \text{for some}\ 1/2<2s/3+1/6\leq s'<s\leq 3/4 \}\,.
  \end{split}
\end{align}
Note that the assumption $s'\geq 2s/3+1/6$ ensures that $4(s-s')\leq 2s'-1$, as required in the definition of $\ck_{s',4(s-s')}$. The proof of \eqref{eq:deftest0} uses again \eqref{eq:holder}.

While the above definition is convenient to verify whether a given function belongs to $\ck_{s,\delta}$, in our proofs we will use an equivalent characterization given in the following lemma.

\begin{Lem}\label{equivnorm}
Let $s\geq 1/2$ and $\delta\in[0,2s-1]$. Then
\begin{align*}
\|W\|_{\ck_{s,\delta}} \sim \sup_{R\geq 1} R^\delta & \left[\int_0^R \left(\frac{r}{R}\right)^{2s-1}|W(r)|\,\dr + \int_R^{R^2}\left(\frac{r}{R}\right)^{4s-1} |W(r)|\,\dr\right.\\
 & \quad \left.+ R^{4s-1}\int_{R^2}^\infty |W(r)|\,\dr\right].
\end{align*}
\end{Lem}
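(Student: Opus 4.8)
The plan is to establish the claimed equivalence by proving the two inequalities separately, reducing everything to a dyadic decomposition and the summation of geometric series. Throughout I would write $I_k:=[2^k,2^{k+1})$ and $a_k:=\int_{I_k}|W(r)|\,\dr$ for $k\in\Z$, abbreviate $N:=\|W\|_{\ck_{s,\delta}}$, and denote by $Q(W)$ the quantity on the right-hand side of the claimed equivalence. Two elementary consequences of the definition of $N$ are used repeatedly: first, $\int_0^1 r^{2s-1}|W|\,\dr\le N$; second, testing the supremum in the definition of $N$ at $R=2^k$ gives $a_k\le 2^{-k(\delta+4s-1)/2}N$ for every $k\ge 0$. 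Since $s\ge 1/2$ and $\delta\in[0,2s-1]$ we have $\delta+4s-1\ge 1$, so summing the latter bound shows in particular that $\int_1^\infty|W|\,\dr<\infty$; hence all three integrals occurring in $Q(W)$ are finite whenever $N<\infty$.

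\emph{Lower bound $Q(W)\gtrsim N$.} Here one evaluates the supremum at two judicious scales. Taking $R=1$, the bracket contains the non-negative term $\int_0^1 r^{2s-1}|W|\,\dr$ (the remaining two terms being non-negative), so $Q(W)\ge\int_0^1 r^{2s-1}|W|\,\dr$. Taking $R=\sqrt\rho$ for an arbitrary $\rho\ge 1$, the last term of the bracket equals $\rho^{(4s-1)/2}\int_\rho^\infty|W|\,\dr$ and the prefactor is $R^\delta=\rho^{\delta/2}$, so $Q(W)\ge\rho^{(\delta+4s-1)/2}\int_\rho^{2\rho}|W|\,\dr$; taking the supremum over $\rho\ge 1$ and combining with the previous bound yields $Q(W)\ge N$ (in fact with constant $1$).

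\emph{Upper bound $Q(W)\lesssim N$.} Fix $R\ge 1$ and let $j=\lfloor\log_2 R\rfloor$, so $R\in I_j$. I would bound each of the three terms of $R^\delta[\,\cdots\,]$ by a constant depending only on $s,\delta$ times $N$. For the first term, split $\int_0^R=\int_0^1+\int_1^R$: the $[0,1]$-part gives $R^{\delta-2s+1}\int_0^1 r^{2s-1}|W|\,\dr\le N$ since $\delta-2s+1\le 0$ and $R\ge 1$; for the $[1,R]$-part, estimate $r^{2s-1}\le 2^{(k+1)(2s-1)}$ on $I_k$, use $a_k\le 2^{-k(\delta+4s-1)/2}N$, and sum the resulting geometric series $\sum_{k\ge 0}2^{-k(1+\delta)/2}$ (convergent since $\delta\ge 0$), after absorbing $R^{\delta-2s+1}\le 1$. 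For the third term, $R^{\delta+4s-1}\int_{R^2}^\infty|W|\,\dr\le R^{\delta+4s-1}N\sum_{k\ge 2j}2^{-k(\delta+4s-1)/2}\lesssim R^{\delta+4s-1}\,2^{-j(\delta+4s-1)}N\lesssim N$, using $\delta+4s-1\ge 1$ and $R\sim 2^j$. For the middle term, $R^{\delta-(4s-1)}\int_R^{R^2}r^{4s-1}|W|\,\dr$, cover $[R,R^2]$ by $I_j,\dots,I_{2j+1}$, estimate $r^{4s-1}\le 2^{(k+1)(4s-1)}$ on $I_k$, use $a_k\le 2^{-k(\delta+4s-1)/2}N$, and observe that $\sum_{k=j}^{2j+1}2^{k(4s-1-\delta)/2}\lesssim 2^{j(4s-1-\delta)}$, since the exponent $4s-1-\delta\ge 2s\ge 1$ is positive so the sum is dominated by its top term; multiplying by $R^{\delta-(4s-1)}\sim 2^{-j(4s-1-\delta)}$ then cancels the surplus power. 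Summing the three estimates and taking the supremum over $R\ge 1$ gives $Q(W)\lesssim N$.

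There is no real conceptual difficulty here; the only part requiring attention — the ``main obstacle'' such as it is — is the bookkeeping of exponents, i.e.\ checking in each of the three terms that the geometric series converges and that the leftover power of $R$ cancels exactly. This is precisely where the standing hypotheses enter: $\delta\le 2s-1$ gives $\delta-2s+1\le 0$ in the first term and $4s-1-\delta\ge 2s\ge 1>0$ in the second, while $\delta\ge 0$ (together with $s\ge 1/2$) makes the series in the first and third terms converge. Everything else is a routine dyadic estimate.
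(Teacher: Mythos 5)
Your proof is correct and follows essentially the same dyadic-decomposition strategy as the paper, with a slightly cleaner lower bound: by taking $R=\sqrt\rho$ and using the tail term $R^{4s-1}\int_{R^2}^\infty|W|\,\dr$ you get $Q(W)\geq \rho^{(\delta+4s-1)/2}\int_\rho^{2\rho}|W|\,\dr$ for all $\rho\geq 1$ directly (so $Q(W)\geq N$ with constant $1$), whereas the paper uses the middle term on $[R^2/2,R^2]$ and handles $1\leq R\leq 2$ by a separate argument. The upper bound differs only in the cosmetic choice of dyadic covering for the middle term (your $I_j,\dots,I_{2j+1}$ versus the paper's $[2^kR,2^{k+1}R]$), and the exponent bookkeeping is identical.
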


\begin{proof}
Let us denote the supremum appearing in the lemma by $[W]$. We have, if $R\geq 2$,
$$
[W] \geq R^\delta \int_{R^2/2}^{R^2} (r/R)^{4s-1}|W(r)|\,\dr \geq 2^{-4s-1} R^{\delta+4s-1} \int_{R^2/2}^{R^2} |W(r)|\,\dr \,.
$$
Combining this with the bounds $[W]\geq \int_0^1 r^{2s-1}|W(r)|\,\dr$ and, if $1\leq R\leq 2$,
\begin{align*}
[W] & \geq 4^{\delta-2s+1} \int_0^4 r^{2s-1}|W(r)|\,\dr \geq 4^{\delta-2s+1} \int_R^{2R} r^{2s-1}|W(r)|\,\dr \\
& \geq 4^{(3\delta-8s+3)/4} R^{(\delta+4s-1)/2} \int_R^{2R} |W(r)|\,\dr \,,
\end{align*}
we conclude that $[W]\gtrsim \|W\|_{\ck_{s,\delta}}$.

Conversely, if $R\geq 1$, then
\begin{align*}
\int_0^R \left(\frac{r}{R}\right)^{2s-1}|W|\,\dr & \leq \int_0^1 \left(\frac{r}{R}\right)^{2s-1} |W|\,\dr + \sum_{0\leq k<\log_2 R} \int_{2^{k}}^{2^{k+1}} \left(\frac{r}{R}\right)^{2s-1}|W|\,\dr \\
& \leq R^{-2s+1} \|W\|_{\ck_{s,\delta}} \left( 1 + \! \sum_{0\leq k<\log_2 R} \! 2^{(k+1)(2s-1)} 2^{-k(\delta+4s-1)/2} \right) \\
& \lesssim R^{-2s+1} \|W\|_{\ck_{s,\delta}} \leq R^{-\delta} \|W\|_{\ck_{s,\delta}}
\end{align*}
and, since $4s-1>2s-1\geq\delta$,
\begin{align*}
\int_R^{R^2} \left(\frac{r}{R}\right)^{4s-1}|W|\,\dr & \leq \sum_{0\leq k<\log_2 R} \int_{2^{k} R}^{2^{k+1} R} \left(\frac{r}{R}\right)^{4s-1}|W|\,\dr \\
& \leq \|W\|_{\ck_{s,\delta}} \sum_{0\leq k<\log_2 R} 2^{(k+1)(4s-1)} (2^{k} R)^{-(\delta+4s-1)/2} \\
& \lesssim R^{-\delta} \|W\|_{\ck_{s,\delta}} \,.
\end{align*}
Finally, using $\delta+4s-1>0$,
\begin{align*}
\int_{R^2}^\infty |W|\,\dr & = \sum_{k\geq 0} \int_{2^{k}R^2}^{2^{k+1} R^2} |W|\,\dr \leq \|W\|_{\ck_{s,\delta}} \sum_{k\geq 0} (2^{k}R^2)^{-(\delta+4s-1)/2} \\
& \lesssim R^{-(\delta+4s-1)} \|W\|_{\ck_{s,\delta}}.
\end{align*}
Thus, $[W]\lesssim \|W\|_{\ck_{s,\delta}}$, as claimed.
\end{proof}

%%%%%%%%%%%%%

\subsection{Proof of Proposition \ref{diff2}}

The following proposition is the analogue of Proposition \ref{genReltrclassnod} where $M$ is chosen in an specific $\ell$-dependent manner and where we track the dependence of the constant on $\ell$.

\begin{Prop}
  \label{genReltrclass}
  Let $s\in(1/2,3/4]$, $\delta\geq 0$ and $a>0$. Then for all $\ell\in\N_0$ and all $0\leq W \in\ck_{s,\delta}$,
  \begin{align}
    \label{eq:genReltrclass}
    \|W^{1/2}(C_\ell+a(\ell+1/2)^{-2})^{-s}\|_2^2 \leq A_{s,a} (\ell+1/2)^{-\delta}\|W\|_{\ck_{s,\delta}}\,.
  \end{align}
  In particular,
  \begin{align}
    \label{eq:genSobolevDual}
    W \leq A_{s,a} (\ell+1/2)^{-\delta}\|W\|_{\ck_{s,\delta}} (C_\ell+a(\ell+1/2)^{-2})^{2s}\,.
  \end{align}
\end{Prop}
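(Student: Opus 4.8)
The plan is to follow the proof of Proposition \ref{genReltrclassnod} with $M=a(\ell+\tfrac12)^{-2}$, now tracking the dependence on $\ell$; all implied constants in $\lesssim$, $\gtrsim$, $\asymp$ below may depend on $s$ and $a$. Inequality \eqref{eq:genSobolevDual} follows from \eqref{eq:genReltrclass} exactly as in Proposition \ref{genReltrclassnod}, since the operator norm is dominated by the Hilbert--Schmidt norm; so only \eqref{eq:genReltrclass} needs proof. If $\ell=0$ this is immediate from Proposition \ref{genReltrclassnod}: then $M=4a$ is a fixed constant, $(\ell+\tfrac12)^{-\delta}=2^\delta\in[1,2^{2s-1}]$, and $\|W\|_{\ck_s^{(0)}}\lesssim\|W\|_{\ck_{s,\delta}}$ because $\delta+4s-1>0$ forces $\int_1^\infty|W|\,\dr=\sum_{k\ge0}\int_{2^k}^{2^{k+1}}|W|\,\dr\lesssim\|W\|_{\ck_{s,\delta}}$. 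So assume from now on $\ell\ge1$ and put $\nu:=\ell+\tfrac12\ge\tfrac32$. As in the proof of Proposition \ref{genReltrclassnod}, the Fourier--Bessel transform $\Phi_\ell$ diagonalizes $p_\ell^2$ and gives
$$
\|W^{1/2}(C_\ell+a\nu^{-2})^{-s}\|_2^2 = \int_0^\infty W(r)\,G_\ell(r)\,\dr \,, \qquad
G_\ell(r):=\int_0^\infty \frac{kr\,J_\nu(kr)^2}{(\sqrt{k^2+1}-1+a\nu^{-2})^{2s}}\,\dk \,.
$$
Hence everything reduces to the pointwise bound
$$
G_\ell(r) \le A_{s,a}\left[\Big(\tfrac r\nu\Big)^{2s-1}\one_{\{r\le\nu\}} + \Big(\tfrac r\nu\Big)^{4s-1}\one_{\{\nu<r\le\nu^2\}} + \nu^{4s-1}\one_{\{r>\nu^2\}}\right].
$$
Indeed, granting this, multiplying by $W\ge0$, integrating in $r$, and invoking Lemma \ref{equivnorm} with $R=\nu$ (legitimate since $\nu\ge1$) gives $\int_0^\infty W\,G_\ell\,\dr\le A_{s,a}\nu^{-\delta}\|W\|_{\ck_{s,\delta}}$, which is \eqref{eq:genReltrclass}.

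To prove the pointwise bound, substitute $t=kr$, so that $G_\ell(r)=r^{-1}\int_0^\infty t\,J_\nu(t)^2\,\phi_r(t)^{-2s}\,\dt$ with $\phi_r(t)=\sqrt{t^2/r^2+1}-1+a\nu^{-2}\asymp\min\{t^2/r^2,\,t/r\}+\nu^{-2}$. I would split the $t$-integral at the turning point $t=\nu$, at $t=r$, and --- when $r>\nu^2$ --- additionally at $t=r/\nu$, the point where the summand $\nu^{-2}$ stops dominating $\phi_r$; the orderings of the scales $\nu$, $r$, $r/\nu$, $\nu^2$ produce precisely the three regimes $r\le\nu$, $\nu<r\le\nu^2$, $r>\nu^2$. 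On each sub-interval one uses the lower bound for $\phi_r$ given by whichever of $t^2/r^2$, $t/r$, $\nu^{-2}$ is largest there, together with the classical uniform Bessel estimates
\begin{gather*}
J_\nu(t)^2 \lesssim (t/2)^{2\nu}/\Gamma(\nu+1)^2 \quad(t>0)\,, \qquad
J_\nu(t)^2 \lesssim \min\{\nu^{-2/3},\,(t^2-\nu^2)^{-1/2}\} \quad(\nu\le t\le2\nu)\,, \\
J_\nu(t)^2 \lesssim t^{-1} \quad(t\ge2\nu)\,.
\end{gather*}
The dominant contribution always comes from the classically allowed region $t\gtrsim\nu$; for instance, when $r\le\nu$ one has $\phi_r(t)\asymp t/r$ for $t\ge\nu$, so the far part $t\ge2\nu$ contributes $\lesssim r^{2s-1}\int_{2\nu}^\infty t^{-2s}\,\dt\lesssim(r/\nu)^{2s-1}$ (using $2s>1$), and the turning-point layer $\nu\le t\le2\nu$ contributes at the same order, since $(t^2-\nu^2)^{-1/2}$ is integrable near $t=\nu$ and the competing $\nu^{-2/3}$-piece is smaller by a power of $\nu$. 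The other two regimes are treated identically (now also using $\int^\infty t^{-4s}\,\dt<\infty$), and the range $t\le\nu/2$ is negligible by the first Bessel bound, which moreover yields the correct (large) power of $r$ as $r\to0$.

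The main obstacle is exactly this pointwise estimate on $G_\ell$. Each individual inequality is elementary once one has sharp uniform asymptotics for $J_\nu$ near and beyond its turning point, but the argument runs through a number of sub-regions --- fixed first by the orderings of $\nu$, $r$, $r/\nu$, $\nu^2$ and then, inside each, by which of $t^2/r^2$, $t/r$, $\nu^{-2}$ controls $\phi_r$ --- and the bookkeeping has to be carried out with care. As with Proposition \ref{genReltrclassnod}, it is cleanest to isolate the required Bessel-integral estimate as a separate auxiliary lemma; the same circle of ideas, after passing to $C_\ell^H$ and restricting to its negative spectral subspace, is also what yields the bound on $\rho_\ell^H$ in Theorem \ref{existencerhoh}.
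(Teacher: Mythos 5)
Your proposal is correct and follows essentially the same route as the paper: the Fourier--Bessel transform reduces the Hilbert--Schmidt norm to $\int_0^\infty W(r)\,G_\ell(r)\,\dr$, the pointwise bound you state on $G_\ell$ is exactly the paper's Lemma \ref{auxbound} (for $s\le 3/4$), and combining it with Lemma \ref{equivnorm} at $R=\nu$ gives \eqref{eq:genReltrclass}; the paper also isolates the Bessel-integral estimate as a separate lemma, as you suggest. Your substitution $t=kr$ with split points $t=\nu$, $t=r$, $t=r/\nu$ is just a reparametrization of the paper's splits at $k=\nu/r$, $k=1$, $k=\nu^{-1}$, and your separate treatment of $\ell=0$ is a reasonable (if not strictly necessary) way to address the fact that Lemma \ref{equivnorm} is stated only for $R\ge 1$.
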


\begin{proof}
  Using the Fourier--Bessel transform as in the proof of Proposition \ref{genReltrclassnod}, we have
  \begin{align*}
    & \|W^{1/2}(C_\ell+a(\ell+1/2)^{-2})^{-s}\|_2^2\\
    & \quad = \int_0^\infty \dr\, W(r)\int_0^\infty \dk\, \frac{kr J_{\ell+1/2}(kr)^2}{(\sqrt{k^2+1}-1+a(\ell+1/2)^{-2})^{2s}}\\
    & \quad \leq A_s(\ell+1/2)^{-\delta}\|W\|_{\ck_{s,\delta}}\,.
  \end{align*}
The inequality here follows from Lemma \ref{auxbound} in the appendix and the characterization of the norm in $\ck_{s,\delta}$ in Lemma \ref{equivnorm}. Inequality \eqref{eq:genSobolevDual} follows from \eqref{eq:genReltrclass} since the Hilbert--Schmidt norm bounds the operator norm.
\end{proof}

The next proposition implies, in particular, a lower bound on the lowest eigenvalue of $C_\ell^H-\lambda U$ for sufficiently small $\lambda$, which generalizes the result from \cite[Theorem 2.2]{Franketal2009} for $\lambda=0$.

\begin{Prop}
  \label{boundA}
  Let $0<\gamma<2/\pi$ and $s\in(1/2,3/4]$. Then there are constants $a_\gamma,c_{\gamma,s}<\infty$ such that for all $\ell\in\N_0$, all measurable functions $V, U$ on $(0,\infty)$ with $0\leq V(r)\leq\gamma/r$ and $|U|^{2s}\in \ck_{s,0}$ and all $\lambda\in\R$ with $|\lambda|\leq c_{\gamma,s} \| |U|^{2s}\|_{\ck_{s,0}}^{-1/(2s)}$ one has
  $$
  C_\ell-V-\lambda U \geq -a_{\gamma}(\ell+1/2)^{-2} \,.
  $$
\end{Prop}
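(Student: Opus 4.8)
The plan is to split the desired lower bound into a ``Coulomb part'' that absorbs $V$ and a ``perturbation part'' that absorbs $\lambda U$, after first setting aside a fixed positive fraction of the kinetic operator as a reservoir. Since $C_\ell\geq 0$, it suffices to dominate $V$ by $(1-\eta)C_\ell$ up to an $O((\ell+1/2)^{-2})$ error and, separately, to dominate $\lambda U$ by $\eta C_\ell$ up to an $O((\ell+1/2)^{-2})$ error, where $\eta=1-\gamma/\gamma'$ for a suitable $\gamma'\in(\gamma,2/\pi)$; adding the two bounds makes the $C_\ell$-terms cancel and leaves exactly $-a_\gamma(\ell+1/2)^{-2}$.

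For the Coulomb part, I would fix $\gamma'\in(\gamma,2/\pi)$ (depending only on $\gamma$) and recall that the $\lambda=0$ case of the present proposition, namely $C_\ell-\gamma'/r\geq -a_{\gamma'}(\ell+1/2)^{-2}$ uniformly in $\ell\in\N_0$, is available from \cite[Theorem 2.2]{Franketal2009}. Since $0\leq V\leq\gamma/r$ and $C_\ell\geq 0$, writing $\gamma/r=(1-\gamma/\gamma')\cdot 0+(\gamma/\gamma')(\gamma'/r)$ and taking the convex combination of $C_\ell\geq 0$ with the previous bound gives $C_\ell-V\geq C_\ell-\gamma/r\geq (1-\gamma/\gamma')C_\ell-(\gamma/\gamma')a_{\gamma'}(\ell+1/2)^{-2}$. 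This leaves the reserve $(1-\gamma/\gamma')C_\ell$.

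For the perturbation part, which is the quantitative core but is essentially already available, I would apply the operator inequality \eqref{eq:genSobolevDual} of Proposition \ref{genReltrclass} with $\delta=0$, with mass parameter $a=1$, and with $W=|U|^{2s}\in\ck_{s,0}$, obtaining $|U|^{2s}\leq A_s\,\||U|^{2s}\|_{\ck_{s,0}}(C_\ell+(\ell+1/2)^{-2})^{2s}$ with $A_s:=A_{s,1}$ depending only on $s$. Since $2s\geq 1$, the map $t\mapsto t^{1/(2s)}$ is operator monotone, so taking $(2s)$-th roots yields $|U|\leq A_s^{1/(2s)}\||U|^{2s}\|_{\ck_{s,0}}^{1/(2s)}(C_\ell+(\ell+1/2)^{-2})$ and hence $-\lambda U\geq -|\lambda|\,A_s^{1/(2s)}\||U|^{2s}\|_{\ck_{s,0}}^{1/(2s)}(C_\ell+(\ell+1/2)^{-2})$. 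Choosing $c_{\gamma,s}:=(1-\gamma/\gamma')A_s^{-1/(2s)}$ (which depends only on $\gamma$ and $s$), the hypothesis $|\lambda|\leq c_{\gamma,s}\||U|^{2s}\|_{\ck_{s,0}}^{-1/(2s)}$ forces the prefactor to be at most $1-\gamma/\gamma'$, so $-\lambda U\geq -(1-\gamma/\gamma')(C_\ell+(\ell+1/2)^{-2})$. Adding this to the Coulomb-part bound, the $C_\ell$-terms cancel and I am left with $C_\ell-V-\lambda U\geq -a_\gamma(\ell+1/2)^{-2}$ for $a_\gamma:=(\gamma/\gamma')a_{\gamma'}+(1-\gamma/\gamma')$, a constant depending only on $\gamma$ (the degenerate case $\||U|^{2s}\|_{\ck_{s,0}}=0$ is covered directly by the Coulomb part since $a_\gamma\geq(\gamma/\gamma')a_{\gamma'}$).

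The main obstacle is really not in this proposition but upstream: everything rests on Proposition \ref{genReltrclass}, which supplies the Sobolev-type bound $W\lesssim(\ell+1/2)^{-\delta}\|W\|_{\ck_{s,\delta}}(C_\ell+a(\ell+1/2)^{-2})^{2s}$ with a constant that, crucially, does \emph{not} degenerate as $\ell\to\infty$ once the mass parameter is coupled to $\ell$ as $(\ell+1/2)^{-2}$ — this is precisely what makes the final bound uniform in $\ell$. Granting that and the critical-Coulomb bound of \cite{Franketal2009}, the remaining work is bookkeeping with the constants; the one step deserving a word of justification is the passage $|U|^{2s}\leq(\cdots)\Rightarrow|U|\leq(\cdots)$ for these unbounded form inequalities, which is the Heinz--Löwner operator monotonicity inequality and is used in the same spirit elsewhere in the paper (e.g.\ in the proof of Proposition \ref{hardydomcor}).
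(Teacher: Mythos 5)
Your proposal is correct and follows essentially the same route as the paper's proof: split $C_\ell-V-\lambda U$ into a subcritical-Coulomb piece handled by \cite[Theorem 2.2]{Franketal2009} and a perturbation piece handled by \eqref{eq:genSobolevDual} with $\delta=0$ followed by a Heinz--L\"owner root, and add the two lower bounds. The only cosmetic difference is that the paper writes the split as the exact identity $C_\ell-V-\lambda U=(1-\theta)\bigl(C_\ell-(1-\theta)^{-1}V\bigr)+\theta\bigl(C_\ell-(\lambda/\theta)U\bigr)$ and applies \eqref{eq:genSobolevDual} with $a=a_\gamma$ so that the final constant is exactly $a_\gamma$, whereas you take $a=1$ and assemble the constant at the end; both are fine.
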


\begin{proof}
Clearly, we may assume $V=\gamma/r$. We fix a number $\theta\in(0,1)$ such that $\gamma\leq (1-\theta)2/\pi$ and write
$$
C_\ell-V-\lambda U = (1-\theta) \left( C_\ell - (1-\theta)^{-1} V \right) + \theta \left( C_\ell -(\lambda/\theta) U \right).
$$
From \cite[Theorem 2.2]{Franketal2009} we know that there is a constant $a_\gamma$ such that $C_\ell - (1-\theta)^{-1} V \geq -a_\gamma (\ell+1/2)^{-2}$ for all $\ell\in\N_0$. On the other hand, from \eqref{eq:genSobolevDual} with $a=a_\gamma$ and operator monotonicity of $x\mapsto x^{s}$ we know that for all $\ell\in\N_0$,
  $$
  |U| \leq A_{s,a_\gamma}^{1/(2s)} \| |U|^{2s}\|_{\ck_{s,0}}^{1/(2s)} (C_\ell+a_\gamma(\ell+1/2)^{-2} ) \,.
  $$
Thus, if
$$
|\lambda| \theta^{-1} A_{s,a_\gamma}^{1/(2s)} \| |U|^{2s}\|_{\ck_{s,0}}^{1/(2s)} \leq 1 \,,
$$  
then $C_\ell -(\lambda/\theta) U\geq -a_\gamma(\ell+1/2)^{-2}$ for all $\ell\in\N_0$, as claimed.
\end{proof}

Finally, we are in position to give the

\begin{proof}[Proof of Proposition \ref{diff2}]
By assumption and \eqref{eq:deftest0}, we have $U=U_1+U_2$ with $0\leq U_1\in r^{-1}L^\infty_{\rm c}([0,\infty))$ and $0\leq U_2 \in \ck_{s,0}$ such that $U_2^{2s}\in \ck_{s',4(s-s')}$ for some $1/2<2s/3+1/6\leq s'<s\leq 3/4$. In case $U_2=0$ we choose an arbitrary number $1/2<s\leq 3/4$.

We set $V:=\gamma/r -\chi$ and denote by $d_{\ell,\lambda}$ the orthogonal projection onto the negative spectral subspace of $C_\ell-V-\lambda U$. Then, by the variational principle,
  $$
\tr(C_\ell-V-\lambda U)_- - \tr(C_\ell-V)_-\leq \lambda \tr(d_{\ell,\lambda}U) 
= \lambda\tr ABCB^*A^*
  $$
  with
  \begin{align*}
    A&:=d_{\ell,\lambda}(C_\ell-V-\lambda U+b_\ell)^s\\
    B&:=(C_\ell-V-\lambda U+b_\ell)^{-s}(C_\ell+b_\ell)^s\\
    C&:=(C_\ell+b_\ell)^{-s}U(C_\ell+b_\ell)^{-s}\,.
  \end{align*}
We pick $s$ as in the assumption on $U_2$ and
  $$
  b_\ell:=\frac{a}{(\ell+1/2)^2}
  $$
with $a$ to be determined.

We fix $\gamma'\in(\gamma,2/\pi)$ and apply Proposition \ref{boundA} with $V+\lambda U_1$ in place of $V$, $U_2$ in place of $U$ and $\gamma'$ in place of $\gamma$. Note that $U_2^{2s}\in \ck_{s',4(s-s')}\subset \ck_{s,0}$. We infer that there are $\lambda_1>0$ and $a_\gamma$ such that for all $\ell\in\N_0$ and all $|\lambda|\leq\lambda_1$ one has $C_\ell - V -\lambda U + b_\ell \geq (a-a_\gamma)(\ell+1/2)^{-2}$. We assume from now on that $a\geq a_\gamma$, so that the operator $A$ is well-defined, and we obtain
$$
\|A\|^2\leq a^{2s}(\ell+1/2)^{-4s} \,.
$$

  We now turn to $B = B_1 B_2$ with
  \begin{align*}
    B_1 & := (C_\ell-V-\lambda U+b_\ell)^{-s}(C_\ell-V-\lambda U_1+b_\ell)^s \,,\\
    B_2 & := (C_\ell-V-\lambda U_1+b_\ell)^{-s}(C_\ell+b_\ell)^s \,,
  \end{align*}
  and show that there are $a<\infty $ and $\ell_*\in\N_0$ such that for all $\ell\geq\ell_*$ one has $\|B_1\|^2 \leq 2$ and $\|B_2\|^2 \leq 4^s$.
  
  We begin with the bound on $B_2$. We will show that there are $a<\infty $ and $\ell_*\in\N_0$ such that for all $\ell\geq\ell_*$,
\begin{equation}
\label{eq:proofb2}
  (C_\ell+b_\ell)^{2}\leq 4 (C_\ell-V-\lambda U_1+b_\ell)^{2} \,,
\end{equation}
  which by operator monotonicity of $x\mapsto x^s$ implies that
  $$
\| B_2 \|^2 \leq 4^s \,.
  $$
For the proof of \eqref{eq:proofb2} we use an argument similar to that in Remark \ref{elementary}. We observe
\begin{equation}
\label{eq:plancherel}
\|(p_\ell+b)(C_\ell+b)^{-1}\| = \sup_{k\geq 0} \frac{k+b}{\sqrt{k^2+1}-1+b} \leq C \left( b^{-1/2} \one_{\{b\leq 1\}} + \one_{\{b>1\}} \right).
\end{equation} 
This can be seen either by an explicit computation of the supremum or by straightforward bounds in the three regions $k\leq\min\{b,1\}$, $\min\{b,1\}<k\leq\max\{b,1\}$ and $k>\max\{b,1\}$. Together with Hardy's inequality \eqref{eq:hardy} we obtain
\begin{align*}
\|(V+\lambda U_1)f\| & \leq C \, (\gamma+|\lambda|\|rU_1\|_\infty)
\left( a^{-1/2} \one_{\{a\leq (\ell+\frac12)^2\}} + (\ell+\tfrac 12)^{-1} \one_{\{a>(\ell+\frac 12)^2\}} \right) \\
& \quad \times \|(C_\ell+b_\ell)f\|\,.
\end{align*}
If we choose
$$
a = 4 C^2 (\gamma+\lambda_1 \|r U_1\|_\infty)^2
\qquad\text{and}\qquad
\ell_* = \left\lfloor \sqrt a + \frac12 \right\rfloor,
$$
then the previous inequality implies that for all $|\lambda|\leq\lambda_1$ and $\ell\geq\ell_*$,
$$
\|(V+\lambda U_1)f\| \leq \frac12 \|(C_\ell+b_\ell)f\| \,,
$$
and therefore
  \begin{align*}
    \|(C_\ell-V-\lambda U_1+b_\ell)f\| \geq \frac12 \|(C_\ell+b_\ell)f\|\,.
  \end{align*}
This proves \eqref{eq:proofb2}.

Now we shall show that $\|B_1\|^2\leq 2$ for all $\ell\geq\ell_*$. We will deduce this from Lemma \ref{apriori} with $A=C_\ell-V-\lambda U_1+b_\ell/2$, $B=\lambda  U_2$, $M=b_\ell/2$, $\alpha=s$ and $\beta=s'$, where $s'$ appears in the assumption on $U_2$. In order to apply that lemma, we need to show that $M\geq C \| |B|^\alpha A^{-\beta} \|^{1/(\alpha-\beta)}$ for a certain constant $C$ depending only on $\alpha$ and $\beta$. For us, this condition takes the form
  \begin{align}
    \label{eq:Neidhardt}
    \begin{split}
      |\lambda|^s \| U_2^s (C_\ell-V-\lambda U_1+b_\ell/2)^{-s'}\| & \leq C^{-s+s'} (b_\ell/2)^{s-s'} \\
      & = (2C)^{-s+s'}a^{s-s'} (\ell+1/2)^{-2(s-s')} \,.
    \end{split}
  \end{align}
In order to prove this, we combine \eqref{eq:genSobolevDual} with the $s'$-th root of \eqref{eq:proofb2} to obtain for all $\ell\geq\ell_*$,
  \begin{align*}
    U_2^{2s}
    & \leq A_{s'} (\ell+1/2)^{-4(s-s')} \|U_2^{2s}\|_{\ck_{s',4(s-s')}}(C_\ell+b_\ell)^{2s'}\\
    & \leq 4^{s'} A_{s'} (\ell+1/2)^{-4(s-s')}\|U_2^{2s}\|_{\ck_{s',4(s-s')}}(C_\ell-V-\lambda U_1+b_\ell)^{2s'}\,.
  \end{align*}
  Thus, the left side in \eqref{eq:Neidhardt} is bounded by
  \begin{align*}
    |\lambda|^s 2^{s'} A_{s'}^{1/2} \|U_2^{2s}\|_{\ck_{s',4(s-s')}}^{1/2} (\ell+1/2)^{-2(s-s')}\,.
  \end{align*}
Consequently, there is a $\lambda_2$ such that \eqref{eq:Neidhardt} is satisfied for all $|\lambda|\leq\lambda_2$ .
    
  Finally, we bound the trace of $C$. By Proposition
  \ref{genReltrclass},
  \begin{align*}
    \tr C = \left\| U^{1/2}(C_\ell+b_\ell)^{-s}\right\|_2^2
    \leq A_{s}\|U\|_{\ck_{s,0}} \,.
  \end{align*}

To summarize, we have shown that for $|\lambda|\leq\min\{\lambda_1,\lambda_2\}$ and $\ell\geq\ell_*$,
$$
\tr ABCB^*A^* \leq \|A\|^2 \|B\|^2 \tr C \leq A_s' \|U\|_{\ck_{s,0}} (\ell+1/2)^{-4s} \,.
$$
Since $s>1/2$, this proves the claimed bound.
\end{proof}

%%%%%%%%%%%%%%%

\subsection{Proof of Theorem \ref{existencerhoh}}

We now prove the pointwise bounds on $\rho_\ell^H$ in Theorem~\ref{existencerhoh}. The proof is a variation of the proof of Proposition \ref{diff2}. We denote by $d_\ell$ the orthogonal projection onto the negative spectral subspace of $C_\ell^H$ and write, similarly as before,
  $$
  \rho_\ell^H(r) = \tr d_\ell \delta_r = \tr ABCB^*A^*
  $$
  with
  \begin{align*}
    A&:=d_{\ell}(C_\ell^H+b_\ell)^s \,,\\
    B&:=(C_\ell^H+b_\ell)^{-s}(C_\ell+b_\ell)^s \,,\\
    C&:=(C_\ell+b_\ell)^{-s}\delta_r(C_\ell+b_\ell)^{-s} \,.
  \end{align*}
Here $\delta_r$ is the delta function at $r\in(0,\infty)$ and $b_\ell=a(\ell+1/2)^{-2}$ for some $a>0$ to be chosen later; $s$ is a parameter satisfying   $1/2<s<3/2-\sigma_\gamma$ and $s\leq 3/4$.

We know from \cite[Theorem 2.2]{Franketal2009} (see also Proposition \ref{boundA}) that there is an $a_0>0$ such that $C_\ell^H \geq -a_0(\ell+1/2)^{-2}$. Thus, for $a\geq a_0$ the operator $A$ is well-defined and we have
$$
\|A\|^2 \leq a^{2s} (\ell+1/2)^{-4s} \,.
$$

The fact that $\|B\|$ is uniformly bounded in $\ell$ was shown in the proof of
  Proposition~\ref{diff2} with a specific choice of $a$ provided $\ell\geq\ell_*$. In this bound we used the assumption $s\leq 1$. For $\ell<\ell_*$ the boundedness of $\|B\|$ follows from Proposition \ref{hardydomcor}. Note that there is no issue with uniformity since we apply this proposition only for a fixed finite number of $\ell$'s. In order to apply Proposition \ref{hardydomcor} we need the additional assumption $s<3/2-\sigma_\gamma$ if $\gamma\geq 1/2$.

Finally, by the Fourier--Bessel transform as in the proof of Proposition \ref{genReltrclassnod} and by Lemma \ref{auxbound} we have
  \begin{align*}
  \tr C & = (C_\ell+a_\ell)^{-2s}(r,r) \\
  & \leq A_{s,a}\left[\left(\frac{r}{\ell+\frac12}\right)^{2s-1}\one_{\{r\leq\ell+\frac12\}}+\left(\frac{r}{\ell+\frac12}\right)^{4s-1}\one_{\{\ell+\frac12\leq r\leq(\ell+\frac12)^2\}}\right.\\
    & \qquad\qquad \left.+ \left(\ell+\frac12\right)^{4s-1}\one_{\{r\geq(\ell+1/2)^2\}}\right].
  \end{align*}
Here we used the assumption $1/2<s\leq 3/4$.

To summarize, we have shown that for all $\ell\in\N_0$,
  \begin{align*}
& \tr ABCB^*A^* \leq \|A\|^2 \|B\|^2 \tr C \\
    & \quad \leq A_{s,\gamma}(\ell+1/2)^{-4s}\left[\left(\frac{r}{\ell+\frac12}\right)^{2s-1}\one_{\{r\leq\ell+\frac12\}}+\left(\frac{r}{\ell+\frac12}\right)^{4s-1}\one_{\{\ell+\frac12\leq r\leq(\ell+\frac12)^2\}}\right.\\
    & \qquad\qquad\qquad\qquad\qquad \left.+ \left(\ell+\frac12\right)^{4s-1}\one_{\{r\geq(\ell+1/2)^2\}}\right].
  \end{align*}
This proves the first assertion in the theorem.  To obtain the second assertion we recall \eqref{eq:totaldenshydro}. By summing the bounds from the first part of the theorem we obtain
$$
\rho^H(r) \leq A_{s,\gamma} \left( r^{2s-3} \one_{\{r\leq 1\}} + r^{-3/2} \one_{\{r>1\}} \right).
$$
Recalling the assumptions on $s$ we obtain the claimed bound.
\qed

%%%%%%%%%%%

\subsection{An improvement of Theorem \ref{existencerhoh}}

The following theorem complements and improves the bound on $\rho^H(r)$ for small $r$ when $\gamma<(1+\sqrt 2)/4$.

\begin{Thm}
  \label{densityagain}
  Let $3/4<s\leq 1$ if $0<\gamma<1/2$ and $1/2<s<3/2-\sigma_\gamma$ if $1/2\leq\gamma<(1+\sqrt 2)/4$. Then for all $\ell\in\N_0$ and $r\in\R_+$
  \begin{align*}
    \rho_\ell^H(r)
    & \leq A_{s,\gamma} \!\left( \ell+\tfrac 12 \right)^{-4s} \!
    \left[ \! \left(\frac{r}{\ell+\tfrac12}\right)^{\!\! 2s-1} \!\! \one_{\{r\leq (\ell+\frac12)^\alpha \}} 
   \! + \! \frac{r}{(\ell+\tfrac12)^{4-4s}} \one_{\{(\ell+\frac12)^\alpha < r\leq(\ell+\frac12)^\beta \}}\right.\\    
    & \qquad\qquad\quad\quad \left. + \left(\frac{r}{\ell+\tfrac12}\right)^{\!\! 4s-1}\!\! \one_{\{(\ell+\frac12)^{\beta} < r\leq(\ell+\frac12)^2\}}
    + \left(\ell+\tfrac12\right)^{4s-1}\one_{\{r>(\ell+\frac 12)^2\}}\right]
  \end{align*}
with $\alpha=(5-6s)/(2-2s)$ and $\beta=(8s-5)/(4s-2)$ and with the convention that $(\ell+\frac12)^\alpha=0$ for $s=1$. Moreover, for any $\epsilon>0$ and $r\in\R_+$,
  \begin{align*}
    \rho^H(r) \leq 
\begin{cases}
A_\gamma \left( r^{-1} \one_{\{r\leq 1\}} + r^{-3/2} \one_{\{r>1\}} \right)
 & \text{if}\ 0<\gamma<1/2 \,, \\
A_{\gamma,\epsilon} \left( r^{-2\sigma_\gamma-\epsilon} \one_{\{r\leq 1\}} + r^{-3/2} \one_{\{r>1\}} \right)
& \text{if}\ 1/2\leq \gamma<2/\pi \,.
\end{cases}
  \end{align*}
\end{Thm}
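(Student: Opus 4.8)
The plan is to imitate the proof of Theorem~\ref{existencerhoh} almost verbatim, now allowing the larger exponent $s$ permitted by the hypotheses; the only structural change is that the pointwise bound on the diagonal of $(C_\ell+b_\ell)^{-2s}$ acquires one additional intermediate regime once $s>3/4$.

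As in that proof, let $d_\ell$ be the orthogonal projection onto the negative spectral subspace of $C_\ell^H$, set $b_\ell=a(\ell+\tfrac12)^{-2}$ with $a>0$ to be fixed, and write
$$
\rho_\ell^H(r)=\tr d_\ell\,\delta_r=\tr ABCB^*A^*,\qquad A:=d_\ell(C_\ell^H+b_\ell)^s,\quad B:=(C_\ell^H+b_\ell)^{-s}(C_\ell+b_\ell)^s,\quad C:=(C_\ell+b_\ell)^{-s}\delta_r(C_\ell+b_\ell)^{-s}.
$$
By \cite[Theorem~2.2]{Franketal2009} (see also Proposition~\ref{boundA}) there is an $a_0>0$ with $C_\ell^H\geq-a_0(\ell+\tfrac12)^{-2}$, so for $a>a_0$ the operator $A$ is well defined, $b_\ell>-\inf\spec C_\ell^H$, and $\|A\|^2\leq a^{2s}(\ell+\tfrac12)^{-4s}$. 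The uniform bound $\|B\|\leq A_{s,\gamma}$ is exactly the one established in the proof of Theorem~\ref{existencerhoh}: for $\ell\geq\ell_*$ it rests on Hardy's inequality \eqref{eq:hardy} and \eqref{eq:plancherel}, an argument valid whenever $s\leq1$; for the finitely many $\ell<\ell_*$ it follows from Proposition~\ref{hardydomcor}, whose hypotheses ($s\leq1$ when $\gamma<1/2$, and $s<3/2-\sigma_\gamma$ when $\gamma\geq1/2$) are precisely our standing assumptions, and there is no uniformity issue over this finite set.

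The only new ingredient is the estimate of $\tr C=(C_\ell+b_\ell)^{-2s}(r,r)$. Writing it out with the Fourier--Bessel transform, exactly as in the proof of Proposition~\ref{genReltrclassnod},
$$
\tr C=\int_0^\infty\frac{kr\,J_{\ell+1/2}(kr)^2}{\bigl(\sqrt{k^2+1}-1+b_\ell\bigr)^{2s}}\,\dk,
$$
and the four-regime bound with the stated exponents $\alpha=(5-6s)/(2-2s)$, $\beta=(8s-5)/(4s-2)$ is the $3/4<s\leq1$ case of the appendix estimate, Lemma~\ref{auxbound}; note that $\alpha=\beta=1$ at $s=3/4$, so it reduces there to the three-regime bound used for Theorem~\ref{existencerhoh}, while $\alpha\to-\infty$ as $s\to1$, which explains the convention $(\ell+\tfrac12)^\alpha=0$ at $s=1$. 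I expect this Bessel estimate to be the technical heart: one splits the $k$-integral according to the position of $k$ relative to $\min\{1,b_\ell\}$, $\max\{1,b_\ell\}$ and $r^{-1}(\ell+\tfrac12)$, using $J_{\ell+1/2}(t)\lesssim(t/(\ell+\tfrac12))^{\ell+1/2}$ for $t\lesssim\ell+\tfrac12$, a transition bound near $t\sim\ell+\tfrac12$, and $|J_{\ell+1/2}(t)|\lesssim t^{-1/2}$ for $t\gg\ell+\tfrac12$; for $s>3/4$ the range $kr\in(\ell+\tfrac12,\,r(\ell+\tfrac12))$ is no longer absorbed by the neighbouring regimes and produces the new middle term $r\,(\ell+\tfrac12)^{-(4-4s)}$. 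Combining $\tr ABCB^*A^*\leq\|A\|^2\|B\|^2\tr C$ with the three estimates above yields the asserted pointwise bound on $\rho_\ell^H(r)$.

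For the bound on $\rho^H$ we use \eqref{eq:totaldenshydro}, i.e.\ $\rho^H(r)=(4\pi)^{-1}r^{-2}\sum_{\ell\geq0}(2\ell+1)\rho_\ell^H(r)$. For $r>1$ the same regimes dominate as in Theorem~\ref{existencerhoh} and give $\rho^H(r)\lesssim r^{-3/2}$. For $r\leq1$ and $\gamma<1/2$ we take $s=1$: the first regime is then empty, and for $r\leq1/4$ one checks that the third and fourth regimes are empty as well (they would force $\ell+\tfrac12<1/2$), so only the second regime contributes, with $\sum_\ell(2\ell+1)\rho_\ell^H(r)\lesssim r\sum_\ell(2\ell+1)(\ell+\tfrac12)^{-4}\lesssim r$ and hence $\rho^H(r)\lesssim r^{-1}$; for $r\in(1/4,1]$ every regime bound is $O(1)$ and so is the $\ell$-sum, whence $\rho^H(r)\lesssim r^{-2}\lesssim1\lesssim r^{-1}$. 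Finally, for $r\leq1$ and $1/2\leq\gamma<2/\pi$ the bound $r^{-2\sigma_\gamma-\epsilon}$ is already contained in Theorem~\ref{existencerhoh} (the refined $\rho_\ell^H$ estimate obtained here does not alter the $\ell$-summation in that regime), and the proof is complete.
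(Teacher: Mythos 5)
Your proposal follows the paper's proof, which is essentially a one-liner: run the argument for Theorem~\ref{existencerhoh} verbatim but invoke the $s>3/4$ case of Lemma~\ref{auxbound} when computing $\tr C$. Your set-up of $A,B,C$, the bounds $\|A\|^2\leq a^{2s}(\ell+\tfrac12)^{-4s}$ and $\|B\|\lesssim1$ (using Hardy for large $\ell$, Proposition~\ref{hardydomcor} for the finitely many small $\ell$), and the use of Lemma~\ref{auxbound} are all correct and match the paper; your detailed $\ell$-summation for $\gamma<1/2$, $s=1$, including the observation that for $r\leq1/4$ only the second regime contributes, is accurate and goes beyond what the paper spells out.

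The one inaccuracy is your final claim that for $r\leq1$ and $1/2\leq\gamma<2/\pi$ the bound $r^{-2\sigma_\gamma-\epsilon}$ is ``already contained in Theorem~\ref{existencerhoh}.'' That is true only for $(1+\sqrt2)/4\leq\gamma<2/\pi$. For $1/2\leq\gamma<(1+\sqrt2)/4$, Theorem~\ref{existencerhoh} restricts to $s\leq3/4$ and hence yields only $r^{-3/2}$, \emph{not} $r^{-2\sigma_\gamma-\epsilon}$ (which is stronger, since $\sigma_\gamma<3/4$). What is really happening is that the present theorem's hypotheses allow $s$ all the way up to $3/2-\sigma_\gamma\in(3/4,1]$, and the $\ell$-sum with the four-regime bound of Lemma~\ref{auxbound} still produces $r^{2s-3}$ for small $r$ (the first regime dominates; the new middle term contributes $r^{1+2/|\alpha|}$, which for $5/6<s<1$ is of higher order since $1+2/|\alpha|>2s-1$). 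Taking $s$ close to $3/2-\sigma_\gamma$ then gives $r^{-2\sigma_\gamma-\epsilon}$. Your parenthetical remark that ``the refined $\rho_\ell^H$ estimate does not alter the $\ell$-summation'' is precisely the fact that needs to be checked here, so you should present it as a computation (for the newly admissible range of $s$) rather than as a citation of the earlier theorem.
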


\begin{proof}
The proof is the same as that of Theorem \ref{existencerhoh}, except that when computing $\tr C$ we now use Lemma \ref{auxbound} for $s>3/4$.
\end{proof}

%%%%%%%%%%%%%%%%%%%%%%

\appendix
\section{Auxiliary estimates}
\label{a:besselintegrals}

In this appendix we prove two bounds on integrals involving Bessel functions.

\begin{Lem}
  \label{auxboundnod}
  Let $M>0$ be some fixed constant and $s\in(1/2,1]$. Then for any
  $r\geq0$ and $\nu\geq1/2$,
  \begin{align*}
    & \int_0^\infty \dk\ \frac{kr J_{\nu}(kr)^2}{(\sqrt{k^2+1}-1+M)^{2s}}
    \leq A_{s,M}\left[\left(\frac{r}{\nu}\right)^{2s-1}\one_{\{r\leq\nu\}}+\one_{\{r> \nu\}}\right]\,.
  \end{align*}
\end{Lem}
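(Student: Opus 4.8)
The plan is to decouple the Bessel integral into a ``low-momentum'' and a ``high-momentum'' part and then feed in classical facts about $J_\nu$. First, assuming $r>0$ (for $r=0$ both sides vanish, since $\nu\geq1/2$ forces $J_\nu(0)=0$), substitute $t=kr$ to rewrite the integral as $\frac1r\int_0^\infty t\,J_\nu(t)^2\bigl(\sqrt{(t/r)^2+1}-1+M\bigr)^{-2s}\,\dt$. Since $\sqrt{x^2+1}-1+M\geq M$ for all $x$ and $\sqrt{x^2+1}-1+M\geq(\sqrt 2+1)^{-1}x$ for $x\geq 1$, one has the elementary pointwise bound $\bigl(\sqrt{x^2+1}-1+M\bigr)^{-2s}\leq A_{s,M}\min\{1,x^{-2s}\}$, so the claim reduces to showing that
$$
P(r):=\frac1r\int_0^r t\,J_\nu(t)^2\,\dt
\quad\text{and}\quad
Q(r):=r^{2s-1}\int_r^\infty t^{1-2s}\,J_\nu(t)^2\,\dt
$$
satisfy $P(r)+Q(r)\leq A_s\bigl[(r/\nu)^{2s-1}\one_{\{r\leq\nu\}}+\one_{\{r>\nu\}}\bigr]$.

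For $Q(r)$ I would invoke the Weber--Schafheitlin formula $\int_0^\infty t^{1-2s}J_\nu(t)^2\,\dt=\Gamma(2s-1)\Gamma(\nu+1-s)\big/\bigl(2^{2s-1}\Gamma(s)^2\Gamma(\nu+s)\bigr)$, which converges precisely because $1/2<s\leq 1$ and $\nu\geq1/2$, together with the gamma-ratio bound $\Gamma(\nu+1-s)/\Gamma(\nu+s)\leq A_s(\nu+\tfrac12)^{1-2s}$ (the left side times $(\nu+\tfrac12)^{2s-1}$ is continuous on $[1/2,\infty)$ and tends to $1$ at infinity). Since the integrand is nonnegative, $\int_r^\infty\leq\int_0^\infty$, and this gives $Q(r)\leq A_s\bigl(r/(\nu+\tfrac12)\bigr)^{2s-1}$, which already yields the desired bound for $r\leq\nu$ and also for $\nu<r\leq 2\nu$; for $r>2\nu$ one uses instead $J_\nu(t)^2\lesssim t^{-1}$ on $[2\nu,\infty)$, which gives $\int_r^\infty t^{1-2s}J_\nu(t)^2\,\dt\lesssim_s r^{1-2s}$ and hence $Q(r)\lesssim_s 1$.

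For $P(r)$ the integral is truncated and $\int_0^\infty tJ_\nu^2\,\dt=\infty$, so I would use the Lommel identity $\int_0^R tJ_\nu(t)^2\,\dt=\tfrac12 R^2\bigl(J_\nu'(R)^2+(1-\nu^2/R^2)J_\nu(R)^2\bigr)$ in combination with uniform Bessel bounds: the series bound $|J_\nu(t)|\leq(t/2)^\nu/\Gamma(\nu+1)$ (useful for $t$ below $\nu$), the turning-point bounds $|J_\nu(\nu)|\lesssim\nu^{-1/3}$, $|J_\nu'(\nu)|\lesssim\nu^{-2/3}$ (valid throughout the window $|t-\nu|\lesssim\nu^{1/3}$), and the oscillatory-regime bounds $|J_\nu(t)|\lesssim(t^2-\nu^2)^{-1/4}$, $|J_\nu'(t)|\lesssim t^{-1}(t^2-\nu^2)^{1/4}$ for $t\geq\nu+\nu^{1/3}$. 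Inserting these into the Lommel identity yields $\int_0^R tJ_\nu(t)^2\,\dt\lesssim\nu^{2/3}+\sqrt{(R^2-\nu^2)_+}$ for all $R$, and, for $R\leq\nu$, also the Stirling-small bound $\int_0^R tJ_\nu(t)^2\,\dt\leq R^{2\nu+2}\big/\bigl((2\nu+2)4^\nu\Gamma(\nu+1)^2\bigr)$. Consequently: if $r>\nu$ then $P(r)\lesssim(\nu^{2/3}+r)/r\lesssim\nu^{-1/3}+1\lesssim 1$; if $r\leq\nu$, the Stirling bound gives $P(r)\leq r^{2\nu+1}\big/\bigl((2\nu+2)4^\nu\Gamma(\nu+1)^2\bigr)\lesssim(r/\nu)^{2s-1}$ as long as $r\leq c_0\nu$ for a fixed $c_0\in(0,1)$ (one may take $c_0$ close to $2/e$), while for $c_0\nu<r\leq\nu$ one has $r\asymp\nu$ and $P(r)\lesssim\nu^{2/3}/r\lesssim\nu^{-1/3}\lesssim1\asymp(r/\nu)^{2s-1}$. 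Adding the estimates for $P(r)$ and $Q(r)$ proves the lemma.

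\textbf{Main obstacle.} The delicate step is the estimate for $P(r)$ in the transition regime $r\approx\nu$. The series bound $|J_\nu(t)|\leq(t/2)^\nu/\Gamma(\nu+1)$ is worthless there (by Stirling it exceeds $1$ already for $t\gtrsim 2\nu/e$, whereas in fact $|J_\nu(\nu)|\asymp\nu^{-1/3}$), so one is forced to control the truncated integral $\int_0^r tJ_\nu(t)^2\,\dt$ through the Lommel identity together with the Airy/turning-point asymptotics for $J_\nu$ and $J_\nu'$ (or an equivalent uniform asymptotic expansion), and then to patch this against the super-exponentially small series estimate valid for $r$ well below $\nu$; keeping careful track of the Stirling asymptotics of $\Gamma(\nu+1)$ is exactly what makes the exponent $2s-1$ come out correctly. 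The piece $Q(r)$, by contrast, is essentially immediate from the Weber--Schafheitlin identity and a crude decay bound on $J_\nu$.
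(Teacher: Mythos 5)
Your proof is correct, but it takes a genuinely different route from the paper's. The paper's argument is deliberately elementary: for $s<1$ it applies a two-regime pointwise bound on $J_\nu(x)^2$ (cited from C\'ordoba, with regimes $x\le\nu$ and $x>\nu$) and then evaluates the resulting elementary integrals in $k$ directly, while the endpoint $s=1$ is handled separately via the closed-form identity $\int_0^\infty kr J_\nu(kr)^2(k^2+1)^{-1}\,\dk = r K_\nu(r)I_\nu(r)$ together with a uniform bound on $K_\nu I_\nu$. You instead substitute $t=kr$, bound the denominator by $\min\{1,(t/r)^{-2s}\}$, and split into the truncated Lommel integral $P(r)=r^{-1}\int_0^r tJ_\nu^2\,\dt$ and the tail integral $Q(r)=r^{2s-1}\int_r^\infty t^{1-2s}J_\nu^2\,\dt$. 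For $Q$ the Weber--Schafheitlin evaluation of $\int_0^\infty t^{1-2s}J_\nu^2\,\dt$ plus the gamma-ratio estimate gives the $\nu^{1-2s}$ decay uniformly over $s\in(1/2,1]$, so your treatment unifies $s=1$ with $s<1$ instead of treating it separately; the extra case $r>2\nu$ is handled by the crude $J_\nu(t)^2\lesssim t^{-1}$, which is the same high-momentum input the paper uses. For $P$ you use the Lommel identity together with turning-point and oscillatory asymptotics for $J_\nu$ and $J_\nu'$, plus a Stirling argument for $r\le c_0\nu$; this is heavier machinery than the paper's single pointwise bound, but it tracks the behavior near $t\approx\nu$ (where $J_\nu(\nu)^2\asymp\nu^{-2/3}$) explicitly, showing the turning-point contribution to $P(r)$ is $O(\nu^{-1/3})$. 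Both arguments arrive at the same final estimate; what you give up in brevity you gain in keeping the $s$-dependence uniform and in a transparent accounting of the Airy regime.

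Two small points worth recording if you write this up: (i) the Weber--Schafheitlin integral converges at $t=0$ because $\nu\ge 1/2>s-1$ and at $t=\infty$ because $s>1/2$, so the range of $s$ in the lemma is exactly what the exact formula needs; and (ii) when patching the Stirling bound (valid for $r\le c_0\nu$ with $c_0<2/e$) against the turning-point bound (giving $P(r)\lesssim\nu^{-1/3}$ on $c_0\nu<r\le\nu$), the comparison $(r/\nu)^{2s-1}\gtrsim c_0^{2s-1}$ is what makes the overlap region harmless; you say this, but it is the one place a careless reader could think there is a gap.
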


\begin{proof}
  We start with the case $s<1$. We shall use the bounds
  \begin{align*}
    J_{\nu}(x)^2\leq \const \left( \frac{1}{\nu^2} \one_{\{ x\leq \nu\}} + \frac{1}{x} \one_{\{ x>\nu\}} \right).
  \end{align*}
  For $\nu\geq 1$ these bounds are stated in \cite[Lemma 3.2]{Cordoba2016}, even in a slightly stronger form where the different behavior happens at $x=3\nu/2$. This is stronger since $1/\nu^2 \leq 1/\nu\leq 3/(2x)$ for $\nu\leq x\leq 3\nu/2$ and $\nu\geq 1$. The above bounds continue to hold for $\nu\geq\nu_0$ for any $\nu_0>0$, in particular, for $\nu_0=1/2$. This follows from \cite[Equation 9.1.60]{Olver1968} and the fact that the asymptotics in \cite[Equation 9.2.1]{Olver1968} are uniform in $\nu\in[\nu_0,1]$.

Using these bounds, we can estimate the integral in the lemma by a constant times
  \begin{align}
    \label{eq:besselintegralsnod}
    \begin{split}
      & \nu^{-2}\int_0\limits^{\nu/r}\dk\ \frac{kr}{(\sqrt{k^2+1}-1+M)^{2s}} + \int\limits_{\nu/r}^\infty \dk\ \frac{1}{(\sqrt{k^2+1}-1+M)^{2s}}\,.
    \end{split}
  \end{align}
For the denominator we have $\sqrt{k^2+1}-1+M \gtrsim \one_{\{ k\leq 1 \}} + k \one_{\{k>1\}}$. Recalling $1/2<s<1$ we see that for $r\leq \nu$ the expression \eqref{eq:besselintegralsnod} is bounded by a constant times
  \begin{align*}
    & \left[\nu^{-2}\int_0^1 kr\,\dk + \nu^{-2}\int_{1}^{\nu/r}k^{1-2s}r\,\dk + \int_{\nu/r}^\infty k^{-2s}\,\dk\right]\one_{\{r\leq \nu\}}\\
    & \quad \leq A_s\left[\frac{r}{\nu^2}+\frac{r}{\nu^2} \left(\frac{r}{\nu}\right)^{2s-2}+\left(\frac{r}{\nu}\right)^{2s-1}\right]\one_{\{r\leq \nu\}}
    \leq A_s\left(\frac{r}{\nu}\right)^{2s-1}\one_{\{r\leq \nu\}}
  \end{align*}
  and for $r> \nu$ by a constant times
  \begin{align*}
    & \left[\nu^{-2}\int_0^{\nu/r}kr\,\dk + \int_{\nu/r}^1\dk + \int_1^\infty k^{-2s}\,\dk\right]\one_{\{r> \nu\}}\\
    & \quad \leq A_s\left[\frac{1}{r}+1\right]\one_{\{r> \nu\}}
      \leq A_s\one_{\{r> \nu\}}\,.
  \end{align*}
This proves the claimed bounds for $s<1$.

For $s=1$ it clearly suffices to bound the integral with the denominator replaced by $k^2+1$. The corresponding quantity is equal to
  \begin{align}
    \label{eq:besselnonrel}
    \int_0^\infty \dk\ \frac{kr J_{\nu}(kr)^2}{k^2+1}=rK_{\nu}(r)I_{\nu}(r)\,.
  \end{align}
This follows either from the formula for the Green's function in Sturm--Liouville theory and the equations satisfied by the Bessel functions, or by combining  \cite[Formula 10.22.69]{NIST:DLMF} and \cite[Formulas 9.6.3 and 9.6.4]{Olver1968}. By the pointwise bound (see, e.g., Iantchenko et al
  \cite[p. 185]{Iantchenkoetal1996})
  \begin{align}
  \label{eq:besselboundils}
    K_\nu(\nu x)I_\nu(\nu x) \leq \frac{9}{4\nu(1+x^2)^{1/2}} \quad \text{for all}\ \nu\geq\frac12\,,
  \end{align}
we obtain the claimed bound for $s=1$.
\end{proof}

\begin{Lem}
  \label{auxbound}
  Let $a>0$ and $s\in(1/2,3/4]$. Then, for any $r\geq0$ and $\nu\geq1/2$,
  \begin{align*}
    & \int_0^\infty \dk\ \frac{kr J_{\nu}(kr)^2}{(\sqrt{k^2+1}-1+a\nu^{-2})^{2s}}\\
    & \quad\leq A_{s,a}\left[\left(\frac{r}{\nu}\right)^{2s-1}\one_{\{r\leq\nu\}}+\left(\frac{r}{\nu}\right)^{4s-1}\one_{\{\nu \leq r\leq \nu^2\}} + \nu^{4s-1}\one_{\{r\geq \nu^2\}}\right]\,.
  \end{align*}
If $s\in(3/4,1]$, the bounds holds with the right side replaced by
$$
A_{s,a}\left[\left(\frac{r}{\nu}\right)^{\! 2s-1}\!\one_{\{r\leq\nu^\alpha\}}
+ \frac{r}{\nu^{4-4s}} \one_{\{\nu^\alpha< r\leq\nu^\beta\}}
+ \left(\frac{r}{\nu}\right)^{4s-1}\one_{\{\nu^\beta \leq r\leq \nu^2\}} + \nu^{4s-1}\one_{\{r\geq \nu^2\}}\right]
$$
and $\alpha=(5-6s)/(2-2s)$ and $\beta=(8s-5)/(4s-2)$ where, for $s=1$, we set $\nu^\alpha=0$.
\end{Lem}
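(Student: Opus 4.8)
The plan is to parallel the proof of Lemma~\ref{auxboundnod}, now carrying along the $\nu$-dependence that enters through the shift $a\nu^{-2}$. First I would reduce to $\nu\geq 1$: for $1/2\leq\nu\leq 1$ the shift obeys $a\nu^{-2}\in[a,4a]$ and everything in sight is comparable to its value at $\nu=1$ (note $\nu^2\leq\nu$ here, so $\{r>\nu\}\subset\{r\geq\nu^2\}$), so the estimate is elementary and I omit it; henceforth $\nu\geq 1$. Next I would record the lower bound
\[
\sqrt{k^2+1}-1+a\nu^{-2}\ \gtrsim_a\ g(k):=\nu^{-2}\one_{\{k\leq\nu^{-1}\}}+k^2\one_{\{\nu^{-1}<k\leq 1\}}+k\one_{\{k>1\}},
\]
which uses $\sqrt{k^2+1}-1\geq k^2/(\sqrt2+1)$ for $k\leq 1$ and $\geq(\sqrt2-1)k$ for $k\geq 1$. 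Inserting this together with the Bessel bound $J_\nu(kr)^2\lesssim \nu^{-2}\one_{\{kr\leq\nu\}}+(kr)^{-1}\one_{\{kr>\nu\}}$ from the proof of Lemma~\ref{auxboundnod}, and splitting at $k_*:=\nu/r$, gives
\[
\int_0^\infty\frac{kr\,J_\nu(kr)^2}{(\sqrt{k^2+1}-1+a\nu^{-2})^{2s}}\,\dk\ \lesssim_a\ \frac{r}{\nu^2}\int_0^{k_*}\frac{k\,\dk}{g(k)^{2s}}\ +\ \int_{k_*}^\infty\frac{\dk}{g(k)^{2s}}\ =:\ I_1+I_2.
\]

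Since the breakpoints of $g$ sit at $k=\nu^{-1}$ and $k=1$, and $k_*\geq\nu^{-1}\Leftrightarrow r\leq\nu^2$ while $k_*\geq 1\Leftrightarrow r\leq\nu$, I would evaluate $I_1$ and $I_2$ separately in the three regimes $r\leq\nu$, $\nu<r\leq\nu^2$, $r>\nu^2$. In each regime these are elementary power integrals; for $1/2<s<1$ every piece ($\int k^{1-4s}$ near $0$, $\int k^{-2s}$ near $\infty$, and the borderline $\int_1^{k_*}k^{1-2s}\,\dk\lesssim k_*^{2-2s}$) is dominated by its endpoint contribution, and one obtains
\[
I_1+I_2\ \lesssim_s\ \bigl[r\nu^{4s-4}+(r/\nu)^{2s-1}\bigr]\one_{\{r\leq\nu\}}+\bigl[r\nu^{4s-4}+(r/\nu)^{4s-1}\bigr]\one_{\{\nu<r\leq\nu^2\}}+\nu^{4s-1}\one_{\{r>\nu^2\}}.
\]

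It then remains to match this with the two stated forms, and this is exactly where the dichotomy $s\leq 3/4$ versus $s>3/4$ originates: it is the condition $\nu^{4s-3}\leq 1$. For $s\leq 3/4$ a short computation shows $r\nu^{4s-4}\lesssim(r/\nu)^{2s-1}$ on $\{r\leq\nu\}$ and $r\nu^{4s-4}\lesssim(r/\nu)^{4s-1}$ on $\{\nu<r\leq\nu^2\}$, so the $r\nu^{4s-4}$ terms are absorbed and one lands on the first displayed estimate of the lemma. For $3/4<s\leq 1$ the term $r\nu^{4s-4}$ must be kept; comparing it with $(r/\nu)^{2s-1}$ and with $(r/\nu)^{4s-1}$ shows that the expressions cross over precisely at $r=\nu^{(5-6s)/(2-2s)}=\nu^\alpha$ and $r=\nu^{(8s-5)/(4s-2)}=\nu^\beta$ (and one checks $\alpha\leq 1\leq\beta\leq 2$, consistent with $r\leq\nu^\alpha\leq\nu\leq\nu^\beta\leq\nu^2$), which reorganizes the bound into the second displayed estimate.

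The one genuinely different point—and the main obstacle—is the endpoint $s=1$, where the piece $\int_1^{k_*}k^{1-2s}\,\dk=\ln k_*$ is only logarithmic and the scheme above breaks. There I would instead exploit the identity $\sqrt{k^2+1}-1+a\nu^{-2}=(k^2+a\nu^{-2}(2-a\nu^{-2}))/(\sqrt{k^2+1}+1-a\nu^{-2})$ together with $\sqrt{k^2+1}+1\leq 2\sqrt{k^2+1}$ and $a\nu^{-2}(2-a\nu^{-2})\geq a\nu^{-2}$ (valid for $\nu$ large, the remaining bounded range absorbed into constants) to get $(\sqrt{k^2+1}-1+a\nu^{-2})^2\gtrsim_a(k^2+a\nu^{-2})^2/(k^2+1)$, and then split $\frac{k^2+1}{(k^2+a\nu^{-2})^2}\leq\frac{1}{k^2+a\nu^{-2}}+\frac{1}{(k^2+a\nu^{-2})^2}$. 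The first term yields $\int_0^\infty\frac{kr\,J_\nu(kr)^2}{k^2+a\nu^{-2}}\,\dk=rK_\nu(\sqrt a\,\nu^{-1}r)\,I_\nu(\sqrt a\,\nu^{-1}r)\lesssim \frac{r}{\nu\sqrt{1+a\nu^{-4}r^2}}$ by \eqref{eq:besselnonrel} and \eqref{eq:besselboundils}, and the second term is handled by splitting the $k$-integral at $\sqrt a\,\nu^{-1}$, $1$, and $\nu/r$ and applying the same Bessel bounds; comparing the outcome with the claimed estimate (with the convention $\nu^\alpha=0$) in the three regimes $r\leq\nu^{3/2}$, $\nu^{3/2}<r\leq\nu^2$, $r>\nu^2$ completes the proof.
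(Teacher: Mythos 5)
Your proposal for $1/2<s<1$ is essentially the paper's argument: the same Bessel pointwise bounds, the same lower bound on the shifted Chandrasekhar symbol with breakpoints at $\nu^{-1}$ and $1$, the same split at $k_*=\nu/r$, the same three regimes in $r$, and the same intermediate estimate carrying the extra term $r\nu^{4s-4}$, followed by the same comparison that drops it for $s\le 3/4$ and keeps it for $s>3/4$ (giving the exponents $\alpha,\beta$). The explicit preliminary reduction to $\nu\ge 1$ is harmless; the paper handles $\nu\in[1/2,1]$ implicitly by noting the middle piece of the denominator bound vacates.

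Where you genuinely depart from the paper is $s=1$. The paper's fix is local and very short: it observes that the only logarithmically divergent piece is $\int_1^{\nu/r}k^{-1}$ in the case $r\le\nu$, and replaces the pointwise Bessel estimate on $k\ge1$ by the exact identity plus the bound $(\sqrt{k^2+1}-1+a\nu^{-2})^2\gtrsim k^2+1$ for $k\ge 1$, which gives $\int_1^\infty \lesssim rK_\nu(r)I_\nu(r)\lesssim r/\nu$ directly. You instead rewrite the denominator algebraically to get $(\sqrt{k^2+1}-1+a\nu^{-2})^2\gtrsim (k^2+a\nu^{-2})^2/(k^2+1)$ and split $\frac{k^2+1}{(k^2+a\nu^{-2})^2}\le\frac{1}{k^2+a\nu^{-2}}+\frac{1}{(k^2+a\nu^{-2})^2}$, handling the first via the scaled Macdonald--modified Bessel identity and the second by repeating the elementary splitting. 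I checked that this route does close in all the regimes $r\le\nu$, $\nu<r\le\nu^{3/2}$, $\nu^{3/2}<r\le\nu^2$, $r>\nu^2$, so it is valid; it is simply more work than the paper's targeted one-line patch, since you redo the whole $k$-integral rather than only the offending $k\ge 1$ piece. Both variants rest on the same two facts, \eqref{eq:besselnonrel} and \eqref{eq:besselboundils}. In short: correct, same ideas, with a more elaborate but equivalent handling of the endpoint $s=1$.
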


\begin{proof}
We first assume $s<1$. We use the same bounds on $J_\nu^2$ as in the proof of Lemma \ref{auxboundnod} and find that the integral in the lemma is bounded by a constant times
  \begin{align}
    \label{eq:besselintegrals}
    \begin{split}
      & \nu^{-2}\int_0\limits^{\nu/r}\dk\ \frac{kr}{(\sqrt{k^2+1}-1+a \nu^{-2})^{2s}} + \int\limits_{\nu/r}^\infty \dk\ \frac{1}{(\sqrt{k^2+1}-1+a \nu^{-2})^{2s}}\,.
    \end{split}
  \end{align}
For the denominator we have  
$$
\sqrt{k^2+1}-1+a \nu^{-2} \gtrsim \nu^{-2} \one_{\{k\leq \nu^{-1}\}} + k^2 \one_{\{\nu^{-1} < k\leq 1\}} + k \one_{\{k> 1 \}}
$$

We bound \eqref{eq:besselintegrals} separately in the three regions appearing in the bound in the lemma. For $r>\nu^2$ it is bounded by a constant times
  \begin{align*}
    & \nu^{-2+4s}r\int_0^{\nu/r}\dk\, k + \nu^{4s}\int_{\nu/r}^{\nu^{-1}}\dk + \int_{\nu^{-1}}^1\dk\, k^{-4s} + \int_1^\infty \dk\, k^{-2s} \leq A_s \nu^{4s-1} \,,
  \end{align*}
for $\nu<r\leq\nu^2$ by a constant times
  \begin{align*}
    & \nu^{-2+4s}r\int_0^{\nu^{-1}}\!\!\dk\,k + \nu^{-2}r\int_{\nu^{-1}}^{\nu/r}\! \dk\,k^{1-4s} + \int\limits_{\nu/r}^{1}\dk\,k^{-4s} + \int_1^\infty\! \dk\,k^{-2s} \\
   & \qquad \leq A_s \left( \left(\frac{r}{\nu}\right)^{4s-1} + \frac{r}{\nu^{4-4s}} \right)
  \end{align*}
and for $r\leq\nu$ by a constant times
 \begin{align*}
  & \nu^{-2+4s}r\int_0^{\nu^{-1}}\dk\,k + \nu^{-2}r\int_{\nu^{-1}}^1 \dk\,k^{1-4s} + \nu^{-2}r\int_1^{\nu/r}\dk\,k^{1-2s} + \int\limits_{\nu/r}^\infty \dk\ k^{-2s} \\
  & \qquad \leq A_s \left( \left(\frac{r}{\nu}\right)^{2s-1} + \frac{r}{\nu^{4-4s}} \right).
 \end{align*}
We have $(r/\nu)^{4s-1}\leq r/\nu^{4-4s}$ if and only if $r\leq \nu^\beta$, and $(r/\nu)^{2s-1}\leq r/\nu^{4-4s}$ if and only if $r\geq \nu^\alpha$. Moreover, we have $\beta\leq 1\leq\alpha$ for $s\leq 3/4$, so in this case the term $r/\nu^{4-4s}$ can be dropped. This proves the claimed bound for $s<1$.

For $s=1$ the same argument works, except that in case $r\leq\nu$, the integral over the region $1\leq k\leq\nu/r$ gives an additional logarithm. Therefore we bound the corresponding integral by
$$
\int_1^\infty \dk\ \frac{kr J_{\nu}(kr)^2}{(\sqrt{k^2+1}-1+a\nu^{-2})^{2s}}
\lesssim \int_0^\infty \dk\ \frac{kr J_{\nu}(kr)^2}{k^2+1} \lesssim \frac{r}{\nu} \,.
$$
The last bound follows from \eqref{eq:besselnonrel} and \eqref{eq:besselboundils}, recalling that $r\leq\nu$. This gives the desired bound for $s=1$.
\end{proof}

%%%%%%%%%%%%%%

\section{A bound on the non-relativistic hydrogenic density}
\label{a:nonrelrhoh}

As remarked in the introduction, Heilmann and Lieb \cite{HeilmannLieb1995} showed
that the non-relativistic hydrogenic density behaves like
$(\sqrt{2}/(3\pi^2)) \gamma^{3/2} r^{-3/2}+o(r^{-3/2})$ as $r\to\infty$. 
Their proof relied on
the precise asymptotics of the eigenfunctions of the hydrogen operator.
Here we show that it is possible to adapt the arguments of the proof of Theorem
\ref{existencerhoh} to show that the non-relativistic hydrogenic density is bounded from above by a constant times $r^{-3/2}$.

Since no relativistic operator is going to appear in this appendix, we abuse notation and write $d_\ell$, $\rho_\ell^H$ and $\rho^H$ for the non-relativistic densities corresponding to the operator $-\Delta/2 -\gamma |x|^{-1}$. By scaling we may assume $\gamma=1$. As before, we write
$$
\rho_\ell^H(r) = \tr d_\ell \delta_r = \tr ABCB^*A^*
$$
where now
\begin{align*}
  A&:=d_{\ell}(p_\ell^2/2-1/r+b_\ell)^{1/2}\\
  B&:=(p_\ell^2/2-1/r+b_\ell)^{-1/2}(p_\ell^2/2+b_\ell)^{1/2}\\
  C&:=(p_\ell^2/2+b_\ell)^{-1/2}\delta_r(p_\ell^2/2+b_\ell)^{-1/2}
\end{align*}
with $b_\ell>-\inf\spec(p_\ell^2/2-1/r)=1/(2(\ell+1)^2)$. We take $b_\ell=a(\ell+1/2)^{-2}$ with a constant $a$ to be chosen later.

First, we have $\|A\|^2 \leq a(\ell+1/2)^{-2}$.

The norm of $B$ is bounded uniformly in $\ell$, since
\begin{align*}
  \frac{p_\ell^2}{2}-\frac{1}{r}+b_\ell = \frac12\left(\frac{p_\ell^2}{2}+b_\ell\right) + \frac12\left(\frac{p_\ell^2}{2}-\frac{2}{r}+b_\ell\right)
\end{align*}
and the right side is bounded from below by $(p_\ell^2/2+b_\ell)/2$ provided $a>0$ is chosen large enough such that also $p_\ell^2/2-2/r+b_\ell\geq 0$.

Finally, as in \eqref{eq:besselnonrel} and \eqref{eq:besselboundils},
\begin{align*}
\tr C & =  (p_\ell^2/2+a_\ell)^{-1}(r,r) \\
& = \int_0^\infty \dk\ \frac{kr J_{\ell+1/2}(kr)^2}{k^2/2+a(\ell+1/2)^{-2}}\\
                       & = 2r K_{\ell+1/2}\left(\frac{\sqrt{2a}\, r}{\ell+1/2}\right) I_{\ell+1/2}\left(\frac{\sqrt{2a}\, r}{\ell+1/2}\right) \\
    & \leq A_a \left[\frac{r}{\ell+1/2}\, \one_{\{r\leq(\ell+1/2)^2\}} + (\ell+1/2) \,\one_{\{r>(\ell+1/2)^2\}}\right].
\end{align*}  

To summarize, we have shown that for all $\ell\in\N_0$,
\begin{align*}
\tr ABCB^*A^* & \leq \|A\|^2 \|B\|^2 \tr C \\
& \leq A_a(\ell+1/2)^{-2}\left[\frac{r}{\ell+1/2}\one_{\{r\leq(\ell+1/2)^2\}} + (\ell+1/2)\one_{\{r>(\ell+1/2)^2\}}\right].
\end{align*}
This implies
\begin{align*}
\rho^H(r) = r^{-2} \sum_{\ell=0}^\infty (2\ell+1)\rho_\ell^H(r)
  \leq A_{a}\left(r^{-1} \one_{\{r\leq 1\}} + r^{-3/2} \one_{\{r>1\}} \right) ,
\end{align*}
which is the claimed bound.
\qed

%%%%%%%%%%%%%

\section*{Acknowledgments}

The authors acknowledge partial support by the U.S. National
Science Foundation through grants DMS-1363432 (R.L.F.)
and DMS-1665526 (B.S.), by the Deutsche Forschungsgemeinschaft (DFG, German Research Foundation) through
grant SI 348/15-1 (H.S.) and through Germany's Excellence Strategy – EXC-2111 – 390814868 (R.L.F., K.M., H.S.) and by the Israeli BSF through grant No. 2014337 (B.S.).

\bibliographystyle{plain}

%\bibliography{coulomb}

\end{document}